\providecommand{\tabularnewline}{\\}
\newcommand{\lyxdot}{.}
\newenvironment{lyxlist}[1]
	{\begin{list}{}
		{\settowidth{\labelwidth}{#1}
		 \setlength{\leftmargin}{\labelwidth}
		 \addtolength{\leftmargin}{\labelsep}
		 }}
	{\end{list}}
\theoremstyle{definition}
\newtheorem{defn}{\protect\definitionname}
\theoremstyle{plain}
\newtheorem{lem}{\protect\lemmaname}
\theoremstyle{plain}
\newtheorem{cor}{\protect\corollaryname}
\theoremstyle{remark}
\newtheorem{rem}{\protect\remarkname}
\definecolor{pigment}{rgb}{0.2, 0.2, 0.6}
\renewcommand\paragraph{%
        \@startsection{paragraph}{4}{0mm}%
           {-\baselineskip}%
           {.5\baselineskip}%
           {\normalfont\normalsize\bfseries}}
\theoremstyle{definition}
\newtheorem*{objective}{Objective}
\providecommand{\corollaryname}{Corollary}
\providecommand{\definitionname}{Definition}
\providecommand{\lemmaname}{Lemma}
\providecommand{\remarkname}{Remark}
\begin{document}
\global\long\def\var#1{\mathbb{V}\left(#1\right)}%
\global\long\def\pr#1{\mathbb{P}\left(#1\right)}%
\global\long\def\ev#1{\mathbb{E}\left(#1\right)}%

\global\long\def\smft#1#2{\stackrel[#1]{#2}{\sum}}%
\global\long\def\smo#1{\underset{#1}{\sum}}%
\global\long\def\prft#1#2{\stackrel[#1]{#2}{\prod}}%
\global\long\def\pro#1{\underset{#1}{\prod}}%
\global\long\def\uno#1{\underset{#1}{\bigcup}}%

\global\long\def\order#1{\mathcal{O}\left(#1\right)}%
\global\long\def\R{\mathbb{R}}%
\global\long\def\Q{\mathbb{Q}}%
\global\long\def\N{\mathbb{N}}%
\global\long\def\H{\mathbb{H}}%
\global\long\def\F{\mathcal{F}}%

\global\long\def\mathtext#1{\mathrm{#1}}%

\global\long\def\maxo#1{\underset{#1}{\max\,}}%
\global\long\def\argmaxo#1{\underset{#1}{\mathtext{argmax}\,}}%
\global\long\def\mino#1{\underset{#1}{\min\,}}%
\global\long\def\argmino#1{\underset{#1}{\mathtext{argmin}\,}}%
\global\long\def\limo#1#2{\underset{#1\rightarrow#2}{\lim}}%
\global\long\def\supo#1{\underset{#1}{\sup}}%
\global\long\def\info#1{\underset{#1}{\inf}}%

\global\long\def\b#1{\boldsymbol{#1}}%
\global\long\def\ol#1{\overline{#1}}%
\global\long\def\ul#1{\underline{#1}}%

\global\long\def\argparentheses#1{\mathopen{\left(#1\right)}\mathclose{}}%

\newcommandx\der[3][usedefault, addprefix=\global, 1=, 2=, 3=]{\frac{d^{#2}#3}{d#1^{#2}}}%
\newcommandx\pder[3][usedefault, addprefix=\global, 1=, 2=]{\frac{\partial^{#2}#3}{\partial#1^{#2}}}%
\global\long\def\intft#1#2#3#4{\int\limits _{#1}^{#2}#3d#4}%
\global\long\def\into#1#2#3{\underset{#1}{\int}#2d#3}%

\global\long\def\th{\theta}%
\global\long\def\la#1{~#1~}%
\global\long\def\laq{\la =}%
\global\long\def\normal#1#2{\mathcal{N}\left(#1,\,#2\right)}%
\global\long\def\uniform#1#2{\mathcal{U}\left(#1,\,#2\right)}%
\global\long\def\I#1#2{\mbox{I}_{#1}\left(#2\right)}%
\global\long\def\chisq#1{\chi_{#1}^{2}}%
\global\long\def\dar{\,\Longrightarrow\,}%
\global\long\def\dal{\,\Longleftarrow\,}%
\global\long\def\dad{\,\Longleftrightarrow\,}%
\global\long\def\norm#1{\left\Vert #1\right\Vert }%
\global\long\def\code#1{\mathtt{#1}}%
\global\long\def\descr#1#2{\underset{#2}{\underbrace{#1}}}%
\global\long\def\NB{\mathcal{NB}}%
\global\long\def\BNB{\mathcal{BNB}}%
\global\long\def\e#1{\text{e}_{#1}}%
\global\long\def\P{\mathbb{P}}%
\global\long\def\pb#1{\Bigg(#1\Bigg)}%

\newcommandx\T[1][usedefault, addprefix=\global, 1=]{\mathtext T_{#1}}%
\global\long\def\origs{O}%
\global\long\def\dests{D}%
\newcommandx\partner[3][usedefault, addprefix=\global, 1=]{\mathrm{partner}_{#1}\argparentheses{#2;\,#3}}%
\newcommandx\parent[2][usedefault, addprefix=\global, 1=]{\mathtext{parent}_{#1}\argparentheses{#2}}%
\newcommandx\predecessor[2][usedefault, addprefix=\global, 1=]{\mathtext{predecessor}_{#1}\argparentheses{#2}}%
\newcommandx\successor[2][usedefault, addprefix=\global, 1=]{\mathtext{successor}_{#1}\argparentheses{#2}}%
\global\long\def\reach#1{\mathtext{reach}\argparentheses{#1}}%
\global\long\def\cost#1{\mathtext{cost}\argparentheses{#1}}%
\newcommandx\dd[3][usedefault, addprefix=\global, 1=]{d_{#1}\argparentheses{#2,#3}}%
\global\long\def\l#1{l\argparentheses{#1}\mathclose{}}%

\global\long\def\evia{E_{\text{via}}}%
\global\long\def\vvia{V_{\text{via}}}%

\title{Locally Optimal Routes for Route Choice Sets}
\author{Samuel M. Fischer\thanks{Department of Mathematical and Statistical Sciences, University of
Alberta, Edmonton, AB T6G 2G1, Canada. Email: samuel.fischer@ualberta.ca}}
\maketitle
\begin{abstract}
Route choice is often modelled as a two-step procedure in which travellers
choose their routes from small sets of promising candidates. Many
methods developed to identify such choice sets rely on assumptions
about the mechanisms behind the route choice and require corresponding
data sets. Furthermore, existing approaches often involve considerable
complexity or perform many repeated shortest path queries. This makes
it difficult to apply these methods in comprehensive models with numerous
origin-destination pairs. In this paper, we address these issues by
developing an algorithm that efficiently identifies locally optimal
routes. Such paths arise from travellers acting rationally on local
scales, whereas unknown factors may affect the routes on larger scales.
Though methods identifying locally optimal routes are available already,
these algorithms rely on approximations and return only few, heuristically
chosen paths for specific origin-destination pairs. This conflicts
with the demands of route choice models, where an exhaustive search
for many origins and destinations would be necessary. We therefore
extend existing algorithms to return (almost) all admissible paths
between a large number of origin-destination pairs. We test our algorithm
and its applicability in route choice models on the road network of
the Canadian province British Columbia and empirical data collected
in this province.
\end{abstract}
\begin{description}
\item [{Keywords:}] alternative paths; choice set; computer algorithm;
local optimality; road network; route choice.
\end{description}

\section{Introduction}

Route choice models have important applications in transportation
network planning \citep{yang_models_1998}, traffic control \citep{mahmassani_dynamic_2001},
and even epidemiology and ecology \citep{fischer_hybrid_2020}. Route
choice models can be classified as either perfect rationality models
or bounded rationality models \citep{di_boundedly_2016}.  In perfect rationality models \citep{sheffi_urban_1984},
travellers are assumed to have complete information and choose their
routes optimally according to some goodness criterion, whereas bounded
rationality models \citep{simon_models_1957} take information constraints
and the complexity of the optimization process into account.  Though
both perfect rationality models and bounded rationality models have
been used in route choice modelling, bounded rationality models have
been found to fit observed data better \citep{di_boundedly_2016}.

Many bounded rationality models consider route choice as a two-stage
process: first, a so-called ``choice set'' of potentially good routes
is generated, and second, a route from the choice set is chosen according
to some goodness measure \citep{ben-akiva_modeling_1984}. This
approach is motivated through travellers' limited ability to consider
all possible paths. Instead, they may heuristically identify a small
set of routes from which they choose the seemingly best. Besides this
conceptual reasoning, the two-step model has computational advantages,
as the choice sets can be generated based on simple heuristics, while
complex models may be applied to determine travellers' preferences
for the identified routes. Therefore, the two-stage process is widely
used in route choice modelling \citep{prato_route_2009}.

Most of the approaches to identify route choice sets are based on
a combination of the optimality assumption, the constraint assumption,
and the stochasticity assumption.
\begin{itemize}
\item According to the optimality assumption, travellers choose routes optimally
according to some criterion, which could be based on route characteristics
(e.g. travel costs and travel time), or on scenarios (e.g. that the
travel time on the shortest route increases). Examples include the
link labelling approach \citep{ben-akiva_modeling_1984}, link elimination
\citep{azevedo_algorithm_1993}, and link penalty \citep{de_la_barra_multi-dimensional_1993}.
\item According to the constraint assumption, travellers consider all paths
whose quality exceeds a certain minimal value (e.g. acyclic paths
not more than $25\%$ longer than the shortest route). This assumption
motivates constrained enumeration methods \citep{prato_applying_2006}.
\item The stochasticity assumption accounts for the possibility of stochastic
fluctuations of route characteristics (e.g. through traffic jams or
accidents) or error-prone information. Often, stochastic route choice
sets are computed based on the optimality principle applied to a randomly
perturbed graph \citep[see][]{bovy_modelling_2009}.
\end{itemize}

Though each of the assumptions mentioned above has a sound mechanistic
justification, they require that the heuristic that travellers use
to identify potentially suitable paths is known and that corresponding
data are available. However, if travellers choose a route for unknown
reasons, e.g. because they desire to drive via some intermediate destination,
their routes would be difficult to consider with the common methods.
The natural solution would be to increase the set of generated routes
by relaxing constraints or modelling more mechanisms explicitly. However,
in comprehensive and large-scale route choice models, many origin-destination
pairs may have to be considered, making it costly or even infeasible
to work with large choice sets. Thus, it would be desirable to characterize
choice sets based on a more general but sufficiently restrictive criterion
that does not require knowledge or data of the specific mechanism
behind route choices.

A potentially suitable criterion is \emph{local} optimality. A route
is locally optimal if all its short (``local'') subsections are
optimal, respectively, according to a given measure. For example,
if travel time is the applied goodness criterion, a locally optimal
route would not contain local detours.

The rationale behind the principle of local optimality is that the
factors impacting travellers' routing decisions may differ dependent
on the spatial scale. Tourists, for example, may want to drive along
the shortest route locally but plan their trip globally to include
a number of sights. Other travellers may want to drive along the quickest
routes locally while minimizing the overall fuel consumption.  Yet
others may have a limited horizon of perfect information and act rationally
within this horizon only. Independent of the specific mechanism behind
travellers' route choices on the large scale, it is possible to characterize
many choice candidates as locally optimal routes. 

A potential problem with considering locally optimal routes is that
the set of locally optimal routes between an origin and a destination
can be very large and include zig-zag routes, which may seem unnatural.
A possible solution is to focus on so-called \emph{single-via paths}.
A single-via path is the shortest path via a given intermediate location.

Since not all locally optimal paths are single-via paths, restricting
the focus on single-via paths excludes some potentially suitable paths
from the choice set. However, single-via paths have a reasonable mechanistic
justification through travellers choosing intermediate destinations,
and the reduced choice sets are likely to include most of the routes
that travellers would reasonably choose. Since the reduced sets contain
relatively few elements, sophisticated models can be used for the
second decision stage, in which a route is chosen from the choice
set. Therefore, constraining the search for locally optimal routes
on single-via paths may lead to overall better fitting route choice
models.

To date, methods identifying locally optimal single-via paths have
been developed with the objective to suggest multiple routes to travellers
\citep{abraham_alternative_2013,delling_customizable_2015,luxen_candidate_2015,kliemann_route_2016}.
Such suggestions of alternative routes are a common feature in routing
software, such as Google Maps or Bing Maps. However, route choice
models have different demands than routing software, as travellers'
decisions shall be \emph{modelled} or \emph{predicted} rather than
\emph{facilitated}.

Route planning software seeks to compute a small number of high-quality
paths that travellers may want to choose. Here, computational speed
is more important than rigorous application of specific criteria characterizing
the returned paths. In contrast, route choice models should
consider \emph{all} routes that travellers may take, and rigorous
application of modelling assumptions is key to allow mechanistic inference
and to make models portable. In addition, route choice models may
consider \emph{multiple} origins and destinations. Therefore, many
algorithms designed to facilitate route planning cannot be directly
applied to identify route choice sets.

In this paper, we bridge this gap by extending an algorithm originally
designed for route planning. The algorithm REV by \citet{abraham_alternative_2013}
searches a small number of ``good'' locally optimal paths between
a single origin-destination pair. To this end, the algorithm uses an approximation
causing some locally optimal paths to be misclassified as suboptimal.

Our extended algorithm overcomes these limitations. Unlike REV, our
algorithm returns (almost) \emph{all }admissible paths between a \emph{set}
of origins and a \emph{set} of destinations. Therefore, we call our
algorithm REVC, the ``C'' emphasizing the attempted \textbf{c}omplete
search. REVC identifies locally optimal routes with arbitrarily high
precision. That is, the algorithm may falsely reject some locally
optimal single-via routes, but the error can be arbitrarily reduced
by cost of computational speed. As the execution time of REVC depends
mostly on the number of distinct origins and destinations rather than
the number of origin-destination \emph{pairs}, the algorithm is an
effective tool to build traffic models on comprehensive scales.

This paper is structured as follows: first, we introduce helpful definitions
and notation, review concepts we build on, and provide a clear definition
of our objective. Then we give an overview of REVC, before we decribe
each step in detail. After describing the algorithm, we present results
of numerical and empirical tests proving the algorithm's applicability
and efficiency in real-world problems. Finally, we discuss the test
results and the limitations and benefits of our approach.

\section{Algorithm}

\subsection{Preliminaries}

In this section, we specify our goal and introduce helpful notation
and concepts. First, we provide definitions and notation, which we
then use to characterize the routes we are seeking. Afterwards, we
recapitulate Dijkstra's algorithm and briefly describe the method
of reach based pruning, two basic concepts that our work builds on.

\subsubsection{Problem statement and notation}

Suppose we are given a graph $G=\left(V,E\right)$ that represents
a road network. The set of vertices $V$ models intersections of roads
as well as the start and end points of interest. The directed edges
$e\in E$ represent the roads of the road network and are assigned
non-negative weights $c_{e}$, denoting the costs for driving along
the roads. To ease notation, we will refer to the cost of an edge
or path as its \emph{length} without loss of generality. In practice,
other cost metrics, such as travel time, may be used. Our goal is
to find locally optimal paths between all combinations of origin locations
$s\in\origs\subseteq V$ and destination locations $t\in\dests\subseteq V$.

To specify the desired paths more precisely, we introduce convenient
notation and make some definitions:
\begin{lyxlist}{00.00.0000}
\item [{$P_{st}$}] is the shortest path from $s$ to $t$. 
\item [{$P^{uv}$}] is the subpath of $P$ from $u\in P$ to $v\in P$.
\item [{$\l P$}] is the length of the path $P$. That is, $\l P=\smo{e\in P}c_{e}$.
\item [{$\dd uv:=\l{P_{uv}}$}] is the length of the shortest path from
the vertex $u$ to the vertex $v$.
\item [{$\dd[P]uv:=\l{P^{uv}}$}] is the length of the subpath of $P$
from vertex $u$ to vertex $v$.
\end{lyxlist}
With this notation, we introduce the notion of \emph{single-via path}s.
\begin{defn}
A \emph{single-via path} (or short \emph{v-path}) $P_{svt}$ via a
vertex $v$ is the shortest path from a vertex $s$ to a vertex $t$
via $v$. We say $v$ \emph{represents} the single-via path $P_{svt}$
with respect to the origin-destination pair $\left(s,t\right)$.
\end{defn}
For simplicity, we assume that $P_{st}$ and $P_{svt}$ are always
uniquely defined. In practise, the paths are concatenations of shortest
paths found by algorithms outlined below, which are responsible for
breaking ties. 

We proceed with a precise definition of local optimality following
\citet{abraham_alternative_2013}. Generally speaking, a path is $T$-locally
optimal if each subpath of $P$ with a length of at most $T$ is a
shortest path. However, because paths are concatenations of discrete
elements, we need a more technical definition.
\begin{defn}
Consider a subpath $P'\subseteq P$ and let $P''\subset P'$ be $P'$
after removal of its end points. We say $P'$ is a \emph{$T$-significant}
subpath of $P$ if $\l{P''}<T$. A path $P$ is $T$-\emph{locally
optimal} if all its $T$-significant subpaths $P'$ are shortest paths.
We say $P$ is \emph{$\alpha$-relative locally optimal} if it is
$T$-locally optimal with $T=\alpha\cdot\l P$.\label{def:T-locally-optimal}
\end{defn}
We want to identify locally optimal paths between many origin and
destination locations. However, there may be an excessive number of
such paths. Therefore, we apply slightly stronger constraints on the
searched paths, which we will call \emph{admissible} below.
\begin{defn}
Let $\alpha\in(0,1]$ and $\beta\geq1$ be constants. A v-path $P_{svt}$
from vertex $s\in\origs$ to vertex $t\in\dests$ via vertex $v\in V$
is called \emph{admissible} if
\begin{enumerate}
\item $P_{svt}$ is $\alpha$-relative locally optimal.
\item $P_{svt}$ is longer than the shortest path by no more than factor
$\beta$, i.e. $\l{P_{svt}}\leq\beta\cdot\l{P_{st}}$.
\end{enumerate}
\end{defn}

\begin{objective}

The objective of this paper is to identify (close to) all admissible
single-via paths between each origin $s\in\origs$ and each destination
$t\in\dests$.

\end{objective}

\subsubsection{Dijkstra's algorithm}

Large parts of our algorithm are based on modifications of Dijkstra's
algorithm \citep{dijkstra_note_1959,dantzig_linear_1998}. Dijkstra's
algorithm is a frequently used method to find the shortest paths from
an origin $s$ to all other vertices in a graph with non-negative
edge weights. Though the algorithm is well-known to a large audience,
we briefly recapitulate the algorithm to establish some notation that
we will use later.
\begin{itemize}
\item In Dijkstra's algorithm, every vertex $v$ is assigned a specific
cost denoted $\cost v$. Eventually, this cost shall be equal to the
distance between the origin vertex $s$ and vertex $v$. Initially,
however, the cost of each vertex is $\infty$. An exception is the
origin $s$, for which the initial cost is $0$.
\item We say that a vertex $v$ is \emph{scanned} if we are certain that
$\cost v=\dd sv$. Furthermore, we say that a not yet scanned vertex
$v$ is \emph{labelled} if $\cost v<\infty$. All other vertices are
called \emph{unreached}. In line with our notion of scanned vertices,
we call edges $e=\left(u,v\right)$ scanned if we know that $e\in P_{sv}$
for some scanned vertex $v$.
\end{itemize}
Dijkstra's algorithm is outlined in Algorithm \ref{alg:Dijkstra's-algorithm}.
Initially, all vertices are in a container that allows us to determine
the least-cost vertex efficiently.  Dijkstra's algorithm consecutively
removes the least-cost vertex $v$ from the container and scans it.
That is, the algorithm iterates over $v$'s successors $w$ and updates
their costs if the distance from the origin $s$ to $w$ via $v$
is smaller than the current cost of $w$. In this case, $v$ is saved
as the \emph{parent} of $w$.

\begin{algorithm}[t]
\setstretch{1.35}
\SetKwProg{myproc}{}{}{}  
\While{container is not empty}{ 
	Take the vertex  with the lowest cost from the container and remove it\; 
	\myproc{Scan the vertex $v$:}{
		\ForAll{successors  of  $v$ that have not been scanned yet}{
			\myproc{Label $w$:}{ 
				\If{$\cost{w}<\cost{v}+c_{vw}$}{
					Set $\cost{w}:=\cost{v}+c_{vw}$ \tcp*[r]{$c_{vw}$ is the length of the edge from $v$ to $w$}
					Set $\parent{w}:=v$\;
				}
			}
		}
	}
}

\caption[Dijkstra's algorithm]{Dijkstra's algorithm. \label{alg:Dijkstra's-algorithm}}
\end{algorithm}

After execution of Dijkstra's algorithm, shortest paths can be reconstructed
by following the trace of the computed parent vertices, starting at
the destination vertex and ending at the origin. The edges $\left(\parent v,v\right)$
for all scanned vertices $v\in V$ form a \emph{shortest path tree}.
Hence, we call the procedure described above ``growing a shortest
path tree''.  The distance (measured in cost units) from the start
vertex to its farthest descendant is called the \emph{height} of the
shortest path tree. As we will see below, it can be beneficial to
stop the tree growth when the tree has reached a certain height.

When the shortest path between a specific pair of vertices $s$ and
$t$ is sought, the \emph{bidirectional Dijkstra algorithm} is more
efficient than the classic algorithm (compare Figures \ref{fig:Tree-growth}
(a) and (b)). The bidirectional Dijkstra algorithm grows two shortest
path trees: one in forward direction starting at the origin $s$ and
one in backward direction starting at the destination $t$. The trees
are grown simultaneously; i.e., the respective tree with smaller height
is grown until its height exceeds the other tree's height. The search
terminates if a vertex $v$ is included in both trees, i.e., scanned
from both directions. The shortest path is the concatenation of the
$s$-$v$ path in the first shortest path tree and the $v$-$t$ path
in the second tree.

\begin{figure}[t]
\includegraphics[scale=0.6]{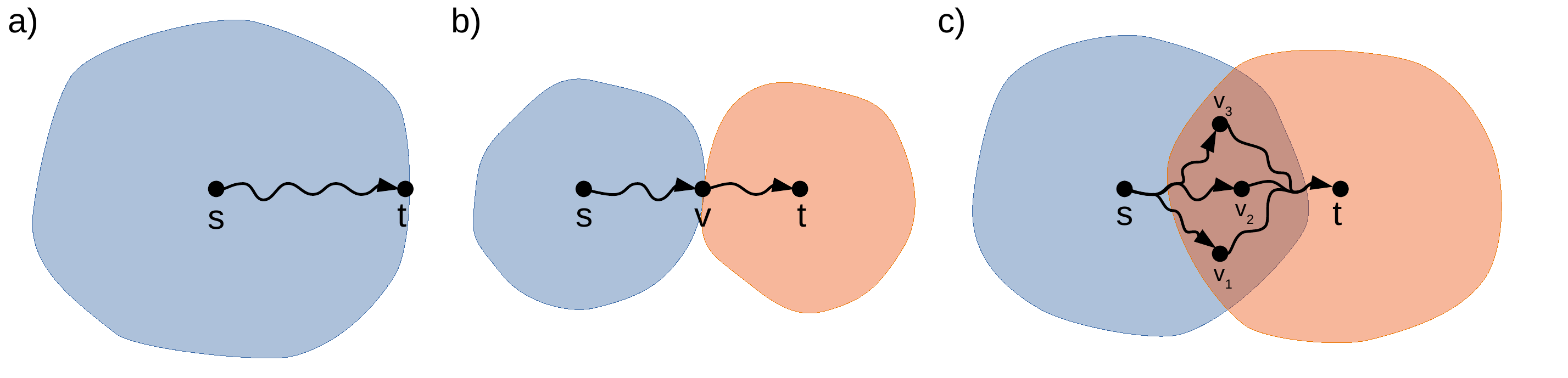}

\caption[Conceptual illustration of different path search algorithms]{Conceptual illustration of different path search algorithms for an
origin $s$ and a destination $t$. The shaded areas depict shortest
path trees. (a) Dijkstra's algorithm grows a single shortest path
tree around the origin until the destination is reached. (b) The bidirectional
Dijkstra algorithm grows a forward tree around the origin and a backward
tree around the destination until the two shortest path trees meet
at a vertex $v$. (c) Multiple v-paths can be constructed by growing
overlapping shortest path trees around origin and destination. \protect \\
Figures (a) and (b) are redrawn from \citet{kliemann_route_2016}.\label{fig:Tree-growth}}
\end{figure}

\subsubsection{Reach-based pruning}

Dijkstra's algorithm is not efficient enough to find shortest paths
in large networks within reasonable time. Therefore, multiple methods
have been developed to identify and prune vertices that cannot be
on the shortest path. One of these approaches is reach-based pruning
(RE; \citealp{raman_reach_2006}), which we introduce below.

Let us start by introducing the notion of a vertex's reach.
\begin{defn}
The \emph{reach} of a vertex $v$ is defined as \vspace{-0.5cm}
\begin{eqnarray}
\reach v & := & \maxo{u,w\in V\,:\,v\in P_{uw}}\left\{ \min\left(\dd uv,\,\dd vw\right)\right\} .
\end{eqnarray}
That is, if we consider all shortest paths that include $v$, split
each of these paths at $v$, and consider the shorter of the two ends,
then the reach of $v$ is the maximal length of these sections. The
reach of $v$ is high if $v$ is at the centre of a long shortest
path. Typically, vertices on highways have a high reach, since many
long shortest paths include highways.
\end{defn}
Disregarding vertices with small reaches can speed up shortest path
searches. Suppose we use the bidirectional Dijkstra algorithm to find
the shortest path between the vertices $s$ and $t$ and have already
grown shortest path trees with heights $h$. Let $v\in P_{st}$ be
a vertex that is located on the shortest path between $s$ and $t$
but has not been scanned yet. Then $\dd sv>h$ and $\dd vt>h$, since
$v$ would have been included in one of the shortest path trees otherwise.
Therefore, we know that $\reach v\geq\min\left(\dd sv,\,\dd vt\right)>h$.
Thus, when adding further vertices to our shortest path trees, we
can neglect all vertices with a reach less or equal to $h$. This
speeds up the shortest path search.

Computing the precise reaches of all vertices is expensive, as this
would require an extremely large number of shortest path queries.
However, \citet{raman_reach_2006} developed an algorithm to compute
upper bounds on vertices' reaches efficiently. These upper bounds
can be used in the same way as exact vertex reaches.

\subsection{Outline of the algorithm}

\begin{figure}[t]
\includegraphics[scale=0.6]{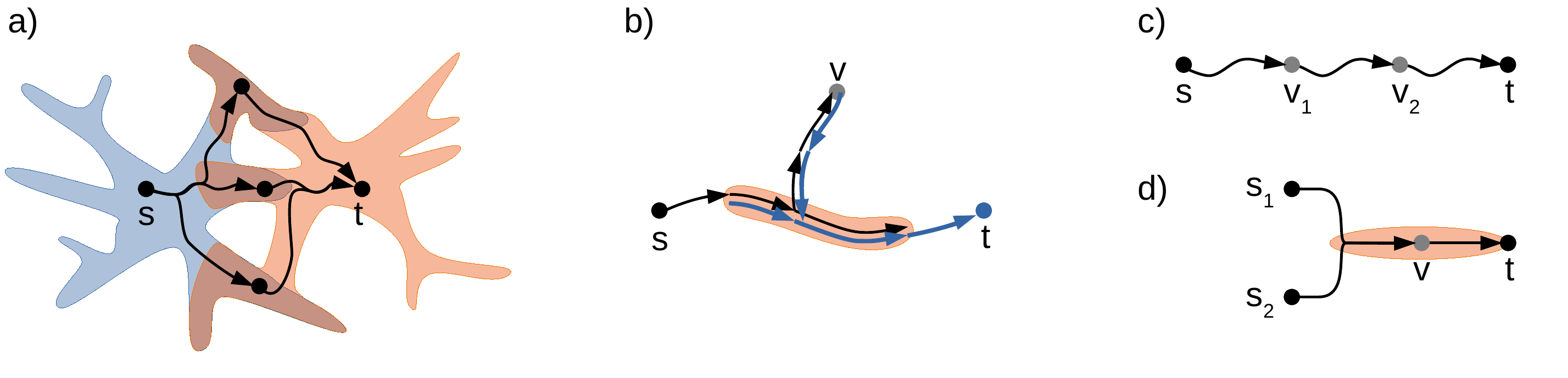}

\caption[Optimizations that REVC employs to efficiently identify admissible
v-paths]{Optimizations that REVC employs to efficiently identify admissible
v-paths between origin-destination pairs $\left(s,t\right)$. (a)
Shortest path trees (depicted as shaded areas) are grown to a tight
bound only and exclude low-reach vertices, which cannot be on long
locally optimal paths. (b) U-turn paths (e.g. $s\rightarrow v\rightarrow t$)
are excluded by requiring that an edge adjacent to the via vertex
is included both in the shortest path tree around the origin (black
arrows) and the shortest path tree around the destination (blue arrows).
Edges satisfying this constraint are highlighted with red background.
Note that arrows with different directions depict distinct edges.
(c) If v-paths via different vertices $v_{1}$ and $v_{2}$ are identical,
only one of these vertices is chosen to represent the path. (d) If
v-paths for different origin-destination pairs (here: $(s_{1,}t)$
and $(s_{2,}t)$) are represented by the same via vertex $v$ and
share a subpath (highlighted red), the local optimality of this section
is tested only once for all origin-destination pairs. \label{fig:Optimizations}}
\end{figure}

After specifying our goal and introducing necessary notation and concepts,
we can now proceed with an overview of our algorithm. The main idea
of REVC is (1) to grow shortest path trees in forward direction from
all origins and in backward direction from all destinations and (2)
to check the admissibility of the v-paths via the vertices that have
been scanned in both forward and backward direction (see Figure \ref{fig:Tree-growth}c).
For each vertex $v$ that is scanned both from an origin $s$ and
a destination $t$, the v-path $P_{svt}$ can be reconstructed easily
from the information contained in the shortest path trees. Therefore,
the only remaining step is to check whether $P_{svt}$ is admissible,
i.e. locally optimal and not much longer than the shortest path $P_{st}$.

As each vertex $v\in V$ could serve as via vertex for many origin-destination
combinations, checking the admissibility of all possible v-paths may
be infeasible. Therefore, it is important to identify and exclude
vertices that cannot represent admissible v-paths. The following observations
can be exploited: (1) v-paths via vertices that are very far from
an origin or destination cannot fulfill the length requirement. (2)
Some vertices represent intersections of minor roads, which can be
bypassed on close-by major roads. Thus, these vertices cannot be part
of locally optimal paths. (3) Some v-paths may include a u-turn at
the via vertex (see Figure \ref{fig:Optimizations}b). That is, travellers
driving on such a path would need to drive back and forth along the
same road. This is not locally optimal behaviour. (4) Some via vertices
may represent the same v-paths. That is, the v-paths corresponding
to distinct via vertices may be identical, and only one of these via
vertices needs to be considered.

\begin{figure}
\begin{centering}
\includegraphics[scale=0.8]{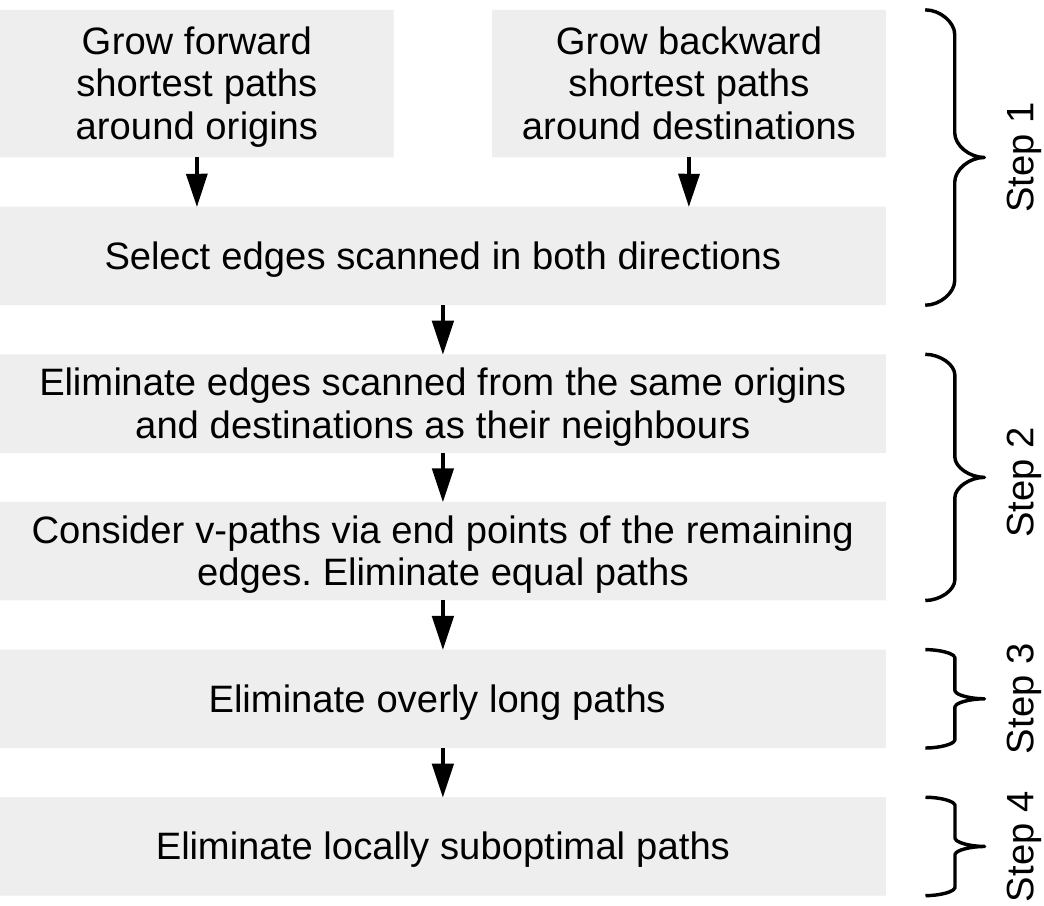}
\par\end{centering}
\caption[Overview of REVC]{Overview of REVC.\label{fig:overview}}
\end{figure}

Our algorithm REVC makes use of the observations listed above. (1)
When shortest path trees are grown around each origin and destination,
the trees are grown up to a tightly specified height only. That way,
many vertices that are too far off will not be scanned. (2) When the
shortest path trees are grown, reach based pruning is applied to exclude
vertices that are not on any sufficiently locally optimal path (see
Figure \ref{fig:Optimizations}a). (3) Instead of considering all
v-paths via vertices scanned in forward and backward direction, REVC
considers only v-paths in which an \emph{edge} adjacent to the via
vertex has been scanned forward and backward. This excludes paths
involving u-turns (see Figure \ref{fig:Optimizations}b). (4) Before
checking the admissibility of the remaining v-paths, the algorithm
ensures that each v-path is represented by one vertex only (see Figure
\ref{fig:Optimizations}c).

After these steps, REVC excludes v-paths that are exceedingly long
and checks which v-paths are sufficiently locally optimal. Testing
whether all v-paths $P_{svt}$ via a specific vertex $v$ are locally
optimal would be expensive if each origin-destination pair $\left(s,t\right)\in\origs\times\dests$
were considered individually. Therefore, REVC checks the admissibility
of many paths simultaneously, reusing earlier results and
applying approximations. That way, the algorithm becomes much more
efficient than individual pair-wise searches for admissible paths
(see Figure \ref{fig:Optimizations}d). In Figure \ref{fig:overview},
we provide an overview of REVC.

Before the actual algorithm can be started, some preparational work
and preprocessing is required. We will provide a detailed description
of the preprocessing procedure after introducing the algorithm in
detail.

\subsection{Step 1: Growing shortest path trees}

The algorithm REVC starts by growing forward shortest path trees out
of each origin and backward shortest path trees into each destination.
For each admissible v-path $P$, we need to scan at least one vertex
$v$ with $P=P_{svt}$ from both the origin $s$ and the destination
$t$. In addition, we want to scan one edge $e\in P$ adjacent to
$v$ from both directions if possible. These edges will be used to
exclude u-turn paths. For each vertex $v$ included in a shortest
path tree, we note $v$'s predecessor and height in the tree. Furthermore,
we memorize from which origins and destinations each edge has been
scanned. 

\subsubsection{Tree bound}

To save the work of scanning vertices inadmissibly far away from the
origins and destinations, we aim to stop the tree growth as soon as
possible.  We need to scan at least one vertex $v$ for each admissible
path $P_{svt}$ with a length $\l{P_{svt}}=\dd sv+\dd vt\leq\beta\cdot\l{P_{st}}$.
Since either of $\dd sv$ and $\dd vt$ could be arbitrarily small,
the algorithm REV by \citet{abraham_alternative_2013} grows the trees
up to a height of $\beta\cdot\l{P_{st}}$. Nevertheless, we can terminate
the search earlier if we take into account that we are searching for
locally optimal paths.

To derive a tighter tree bound, note that for an $\alpha$-relative
locally optimal path $P$, each subsection with length $\alpha\cdot\l P$
is a shortest path. This is in particular true for the subsection
$P'\subseteq P$ starting at the origin. Since $P'$ is a shortest
path, the end point $x_{s}$ of this subsection will be included in
the origin's shortest path tree. Therefore, it suffices to grow the
destination's shortest path tree until $x_{s}$ is reached, which
is closer to $t$ than $\beta\cdot\l{P_{st}}$. The same applies in
the reverse direction.

To specify the tree bound, define $x_{s}\in P$ more precisely to
be the first vertex that is farther away from the origin than $\alpha\cdot\l P$.
If this vertex is located in the second half of the path, change $x_{s}$
to be the last vertex in the first half of $P$. Choose $x_{t}$ accordingly
in relation to the destination. Our observations from above are formalized
in the following lemma and corollary, which we prove in Appendix \ref{sec:APX-Proofs}.
\begin{lem}
With $s$, $t$, $x_{s}$, $x_{t}$, and $P$ defined as above, there
is at least one vertex $v\in P$ with \label{lem:treeBoundLemma}
\vspace{-0.3cm}
\begin{enumerate}
\item $\dd[P]sv=\dd sv\leq\dd[P]s{x_{t}}$ \label{enu:treeBoundLemma1}
and
\item $\dd[P]vt=\dd vt\leq\dd[P]{x_{s}}t$.\label{enu:treeBoundLemma2}
\end{enumerate}
\end{lem}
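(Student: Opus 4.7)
My plan is to exhibit a suitable $v$ by case analysis on the via vertex $v^{*}$ that makes $P$ a v-path, i.e.\ $P=P_{sv^{*}t}$, which is implicit in the admissibility setting of the paper. The argument combines $\alpha$-relative local optimality of $P$ (to control subpaths beginning at $s$ or ending at $t$) with the v-path property of $P$ (to inherit shortness on longer subpaths from $P^{sv^{*}}$ and $P^{v^{*}t}$).

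The first step is to show that $P^{s,x_{s}}$ and $P^{x_{t},t}$ are themselves shortest paths, and that $x_{s}$ weakly precedes $x_{t}$ along $P$. I would split according to whether the second-half correction was applied in defining $x_{s}$. In the uncorrected case, $\mathrm{pred}(x_{s})$ has $P$-distance at most $\alpha\l{P}$ from $s$ by the ``first past $\alpha\l{P}$'' definition. In the corrected case, no vertex of $P$ can have $P$-distance strictly in $(\alpha\l{P},\l{P}/2]$ from $s$, since such a vertex would be a closer ``first past $\alpha\l{P}$'' candidate and would contradict the original $x_{s}$ lying in the second half; thus the corrected $x_{s}$ itself satisfies $\dd[P]{s}{x_{s}}\leq\alpha\l{P}$. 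Either way, the interior of $P^{s,x_{s}}$ has length strictly less than $\alpha\l{P}$, which makes $P^{s,x_{s}}$ a $T$-significant subpath with $T=\alpha\l{P}$ and hence a shortest path by $\alpha$-relative local optimality. A symmetric argument gives $P^{x_{t},t}$ shortest, and $\dd[P]{s}{x_{s}}\leq\l{P}/2\leq\dd[P]{s}{x_{t}}$ follows directly from the definitions.

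I would then perform a three-way case analysis on the position of $v^{*}$ along $P$. If $v^{*}$ lies between $x_{s}$ and $x_{t}$ in the $P$-order, take $v=v^{*}$: both halves are shortest by the v-path property, and the distance inequalities in the lemma are immediate. If $v^{*}$ precedes $x_{s}$ on $P$, take $v=x_{s}$: the front half is shortest by the first step, and $P^{x_{s},t}$ is a suffix of the shortest path $P^{v^{*},t}$ (since $x_{s}$ lies on $P^{v^{*},t}$), and hence is itself a shortest path. The remaining case in which $v^{*}$ follows $x_{t}$ is handled symmetrically by taking $v=x_{t}$.

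The main obstacle I anticipate is ensuring the strict $T$-significance inequality in the first step across all regimes, which requires a careful split according to whether the second-half correction has been applied and whether $\alpha\leq 1/2$. A minor technical adjustment may also be needed to handle possibly zero-weight edges incident to $s$ or $t$, for instance by shifting the argument slightly to $v=\mathrm{pred}(x_{s})$ in degenerate sub-cases.
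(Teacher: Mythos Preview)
Your proposal is correct and follows essentially the same route as the paper's proof: a case analysis on the position of the via vertex $v^{*}$ relative to $x_{s}$ and $x_{t}$, taking $v\in\{v^{*},x_{s},x_{t}\}$ accordingly, and using the v-path property for one half and $\alpha$-relative local optimality for the other. The only difference is organizational---you establish that $P^{s x_{s}}$ and $P^{x_{t} t}$ are shortest as a preliminary step, whereas the paper proves the analogous fact inline within the relevant case; also, your anticipated ``main obstacle'' with the strict $T$-significance inequality does not actually arise, since in the uncorrected case $\mathrm{pred}(x_{s})$ lies strictly below $\alpha\l{P}$ and in the corrected case $x_{s}$ itself does.
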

\begin{cor}
\label{cor:treeBound}For each admissible v-path between an origin-destination
pair $\left(s,t\right)$, a via vertex will be scanned from both directions
if the shortest path trees are grown up to a height of 
\begin{eqnarray}
h_{\mathtext{max}} & := & \max\left\{ \left(1-\alpha\right)\beta\l{P_{st}},\,\frac{1}{2}\beta\l{P_{st}}\right\} .
\end{eqnarray}
\end{cor}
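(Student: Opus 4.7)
The plan is to apply Lemma \ref{lem:treeBoundLemma} to the admissible v-path $P = P_{svt}$, obtaining a vertex $v \in P$ with $\dd{s}{v} = \dd[P]{s}{v} \le \dd[P]{s}{x_{t}}$ and $\dd{v}{t} = \dd[P]{v}{t} \le \dd[P]{x_{s}}{t}$. The two equalities ensure that $P^{sv}$ and $P^{vt}$ are shortest paths, so it will be enough to verify the two inequalities $\dd[P]{s}{x_{t}} \le h_{\mathtext{max}}$ and $\dd[P]{x_{s}}{t} \le h_{\mathtext{max}}$; once these are established, Dijkstra's algorithm from $s$ will scan $v$ before the forward tree exceeds height $h_{\mathtext{max}}$, and likewise from $t$.

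I would establish $\dd[P]{x_{s}}{t} \le h_{\mathtext{max}}$ by a case split on the construction of $x_s$, with the analogous bound on $\dd[P]{s}{x_{t}}$ following by the symmetry that swaps the roles of origin and destination. If $x_s$ was not relocated, it is the first vertex of $P$ past distance $\alpha\l{P}$ from $s$, so $\dd[P]{s}{x_{s}} > \alpha\l{P}$ and therefore $\dd[P]{x_{s}}{t} = \l{P} - \dd[P]{s}{x_{s}} < (1-\alpha)\l{P}$; admissibility $\l{P} \le \beta\l{P_{st}}$ then gives $\dd[P]{x_{s}}{t} < (1-\alpha)\beta\l{P_{st}} \le h_{\mathtext{max}}$. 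If $x_s$ was relocated to the last first-half vertex, then $\dd[P]{s}{x_{s}} \le \alpha\l{P}$ (since no first-half vertex lies beyond $\alpha\l{P}$, which is precisely what triggered the relocation), while the immediate successor $y$ of $x_s$ on $P$ lies in the second half, so $\dd[P]{y}{t} \le \tfrac{1}{2}\l{P}$. The only remaining contribution to $\dd[P]{x_{s}}{t} = c_{x_{s}y} + \dd[P]{y}{t}$ is then the single bridging edge $c_{x_{s}y}$, which I would control by combining the local-optimality hypothesis applied to subpaths of $P$ straddling that edge with the admissibility bound on $\l{P}$, aiming for $\dd[P]{x_{s}}{t} \le \tfrac{1}{2}\beta\l{P_{st}} \le h_{\mathtext{max}}$.

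The principal obstacle is the relocated case: local optimality provides only trivial control on a single edge, because the corresponding one-edge subpath with both its endpoints removed is $T$-significant only vacuously. Obtaining the desired bound therefore relies on the interplay between admissibility and the constraint that the first and second halves of $P$ together span $\l{P}$, and on handling the regimes $\alpha \le \tfrac{1}{2}$ and $\alpha > \tfrac{1}{2}$, in which $h_{\mathtext{max}}$ equals $(1-\alpha)\beta\l{P_{st}}$ and $\tfrac{1}{2}\beta\l{P_{st}}$ respectively, in a unified way.
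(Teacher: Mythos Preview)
Your overall strategy coincides with the paper's: invoke Lemma~\ref{lem:treeBoundLemma} to obtain a vertex $v$ with $\dd sv=\dd[P]sv\le\dd[P]s{x_t}$ and $\dd vt=\dd[P]vt\le\dd[P]{x_s}t$, and then show $\dd[P]s{x_t}\le h_{\mathtext{max}}$ and $\dd[P]{x_s}t\le h_{\mathtext{max}}$. Your non-relocated branch is correct and mirrors the paper, which obtains the same bound by rewriting $x_t'$ as the last vertex with $\dd[P]sx\le(1-\alpha)\l P$ and reading off $\dd[P]s{x_t}\le(1-\alpha)\l P$ directly.

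In the relocated branch your proposed resolution cannot succeed. The target $\dd[P]{x_s}t\le\tfrac12\beta\l{P_{st}}$ is in general false: since $x_s$ lies in the first half of $P$ by construction, one always has $\dd[P]{x_s}t=\l P-\dd[P]s{x_s}\ge\tfrac12\l P$, with strict inequality unless a vertex sits exactly at the midpoint; when $\l P$ is close to $\beta\l{P_{st}}$ this already exceeds $\tfrac12\beta\l{P_{st}}$. No appeal to local optimality repairs this, because---as you yourself observe---a single edge is vacuously $T$-significant and hence unconstrained. The paper does not attempt your edge-by-edge decomposition of $\dd[P]{x_s}t$; it simply asserts that the relocated $x_t$ is ``the last vertex with $\dd[P]sx\le\tfrac12\l P$'' and reads off $\dd[P]s{x_t}\le\tfrac12\l P$ from there, obtaining $\dd[P]s{x_t}\le\max\{(1-\alpha),\tfrac12\}\,\l P\le h_{\mathtext{max}}$ in one line. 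You should follow that direct route rather than trying to control the bridging edge; be aware, though, that under the symmetric definition (relocated $x_t$ being the vertex farthest from $t$ with $\dd[P]xt\le\tfrac12\l P$) the relocated $x_t$ actually lies in the \emph{second} half of $P$ from $s$, so this step of the paper's argument merits a careful reading of how ``accordingly'' is meant.
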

In Corollary \ref{cor:treeBound}, we consider a single origin-destination
pair. However, we want to identify admissible paths between multiple
origins and destinations and have to adjust the tree bound accordingly.
The tree around each origin and destination shall be large enough
to include via vertices for \emph{all} paths starting at the respective
endpoint. Hence, if we grow a tree out of origin $s$, we grow it
to a height of $\max\left\{ \left(1-\alpha\right)\beta M_{s},\,\frac{1}{2}\beta M_{s}\right\} $
with $M_{s}=\maxo{t\in\dests}\l{P_{st}}$. We proceed with destinations
similarly.

Note that the tree bounds above can only be determined if the shortest
distances between the origins and destinations are known. Though these
distances can be determined while the shortest path trees are grown,
 we will see in the next section that the shortest distances can
also be used to speed up the tree growth itself. Therefore, it is
beneficial to determine the shortest distances in a preprocessing
stage. This also makes it easy to grow the trees in parallel. 

\subsubsection{Pruning the trees}

The search for admissible paths can be significantly sped up if vertices
with small reach values are ignored when the shortest paths are grown.
Consider a vertex $v$ on an admissible $s$-$t$ path $P$. Let us
regard the subpath $P'$ that is centred at $v$ and has a length
just greater than $\alpha\cdot\l P$. Since $P$ is $\alpha$-relative
locally optimal, we know that $P'$ is a shortest path. Furthermore,
$P'$ is roughly split in half by $v$, unless $v$ is close to one
of the end points of $P$. Thus, 
\begin{eqnarray}
\reach v & \geq & \min\left\{ \frac{\alpha}{2}\l P,\dd sv,\dd vt\right\} \label{eq:reach-bound-0}
\end{eqnarray}
(see Lemma 5.1 in \citealp{abraham_alternative_2013}).

If we are growing the tree out of origin $s$, we can use (\ref{eq:reach-bound-0})
to prune the successors of vertices $v$ with $\reach v<\min\left\{ \frac{\alpha}{2}\l P,\dd sv\right\} $.
Pruning the successors but not $v$ itself ensures that at least one
vertex  per admissible path is scanned from both directions, even
if (\ref{eq:reach-bound-0}) is dominated by $\dd vt$.

Since $\l P$ is unknown when the shortest path trees are grown, the
length of $P$ must be bounded with known quantities. \citet{abraham_alternative_2013}
use the triangle inequality 
\begin{equation}
\l P\la{\geq}\dd sv+\dd vt\la{\geq}\cost v.
\end{equation}
However, we can also determine shortest distances before we search
admissible paths and exploit that $P\geq\dd st$ or, if we are considering
multiple origins and destinations, $\l P\geq L_{s}:=\mino{\tilde{t}\in\dests}\dd s{\tilde{t}}$.
Therefore, we may prune the successors of vertices $v$ with 
\begin{equation}
\reach v<\min\left\{ \cost v,\frac{\alpha}{2}\max\left\{ \cost v,L_{s}\right\} \right\} \label{eq:reach-bound-1}
\end{equation}
when we grow the shortest path tree out of origin $s$.

We can prune even more vertices if we grow the trees in forward and
backward direction in separate steps. The idea is to use data collected
in the first step to derive a sharper pruning bound for the second
step. Whether we grow the forward or the backward trees in the first
step depends on whether there are more destinations or more origins
to process. Below we assume without loss of generality that we consider
more destinations than origins, $\left|D\right|\geq\left|O\right|$.

\begin{algorithm}[p]
\setstretch{1.35}
\SetKwProg{myproc}{}{}{} 
\While{container is not empty}{ 
	Take the vertex $v$ with the lowest cost from the container and remove it\; 
	Mark edge leading to $v$ as visited from origin $s$\;
	Include $v$ in the shortest path tree\;
	\If{$d_{\mathtext{min}}\argparentheses v > \cost{v}$}{
		$d_{\mathtext{min}}\argparentheses v := \cost{v}$\;
	}
	\If{$\reach{v} \geq \min\left(cost(v),\frac{\alpha}{2}\max\left(\cost{v},L_{s}\right)\right)$}{
		Scan the vertex $v$\tcp*[r]{see Algorithm \ref{alg:Dijkstra's-algorithm}}
	}
}

\caption[Growing a forward shortest path tree out of an origin]{Growing a forward shortest path tree out of origin $s$. \label{alg:treeFromS}}
\end{algorithm}
\begin{algorithm}[p]
\setstretch{1.35}
\SetKwProg{myproc}{}{}{} 
\While{container is not empty}{ 
	Take the vertex $v$ with the lowest cost from the container and remove it\; 
	Mark edge leading to $v$ as visited from destination $t$\;
	\If{$\reach{v} \geq \min\left(\cost{v},\frac{\alpha}{2}\max\left(\cost{v},L_{t}\right)\right)$}{
		Include $v$ in the shortest path tree\;
		\myproc{Scan the vertex $v$ with early pruning:}{
		\ForAll{neighbors $w$ of  $v$ that have not been scanned yet}{
			$newCost:= \cost{v}+\dd v w$\;
			\If{$\reach{v} \geq \min\left(newCost, \frac{\alpha}{2}\max\left(newCost,L_{t}\right), d_{\mathtext{min}}\argparentheses{v} \right)$}{
				Label $w$\tcp*[r]{see Algorithm \ref{alg:Dijkstra's-algorithm}}
			}
		}
	}
	}
}

\caption[Growing a forward shortest path into a destination]{Growing a forward shortest path into destination $t$. \label{alg:treeToT}}
\end{algorithm}

We proceed as follows: we start by growing the forward trees out of
the origins. In this phase, we prune vertices' successors according
to inequality (\ref{eq:reach-bound-1}). After growing the forward
trees, we determine for each scanned vertex $v$ the distance $d_{\mathtext{min}}\argparentheses v:=\mino{s\in\origs;\,v\text{ scanned from }s}\dd sv$
to the closest origin it has been scanned from. If $v$ has not been
scanned, we set $d_{\mathtext{min}}\argparentheses v:=\infty$. Now
we grow the backward trees and use $d_{\mathtext{min}}\argparentheses v$
as a lower bound for $\dd sv$ for all origins $s\in\origs$. Hence,
we can prune all vertices with
\begin{equation}
\reach v<\min\left\{ \cost v,\frac{\alpha}{2}\max\left\{ \cost v,L_{t}\right\} ,d_{\mathtext{min}}\argparentheses v\right\} .\label{eq:reach-bound-2}
\end{equation}

In contrast to criterion (\ref{eq:reach-bound-1}), we can apply criterion
(\ref{eq:reach-bound-2}) directly to each vertex $v$ and not only
to its successors. This decreases the number of considered vertices.
We provide pseudo code for the tree growth procedures in Algorithms
\ref{alg:treeFromS} and \ref{alg:treeToT}.

\subsubsection{Determining potential via vertices}

With the shortest path trees, we can determine which vertices may
potentially represent admissible v-paths. Each vertex scanned in forward
and backward direction could be such a via vertex. However, since
some of the resulting paths could include u-turns, we consider the
scanned \emph{edges} rather than the vertices. This excludes paths
with u-turns (see Figure \ref{fig:via-vertices-via-edges}).

\begin{figure}
\begin{centering}
\includegraphics[scale=0.45]{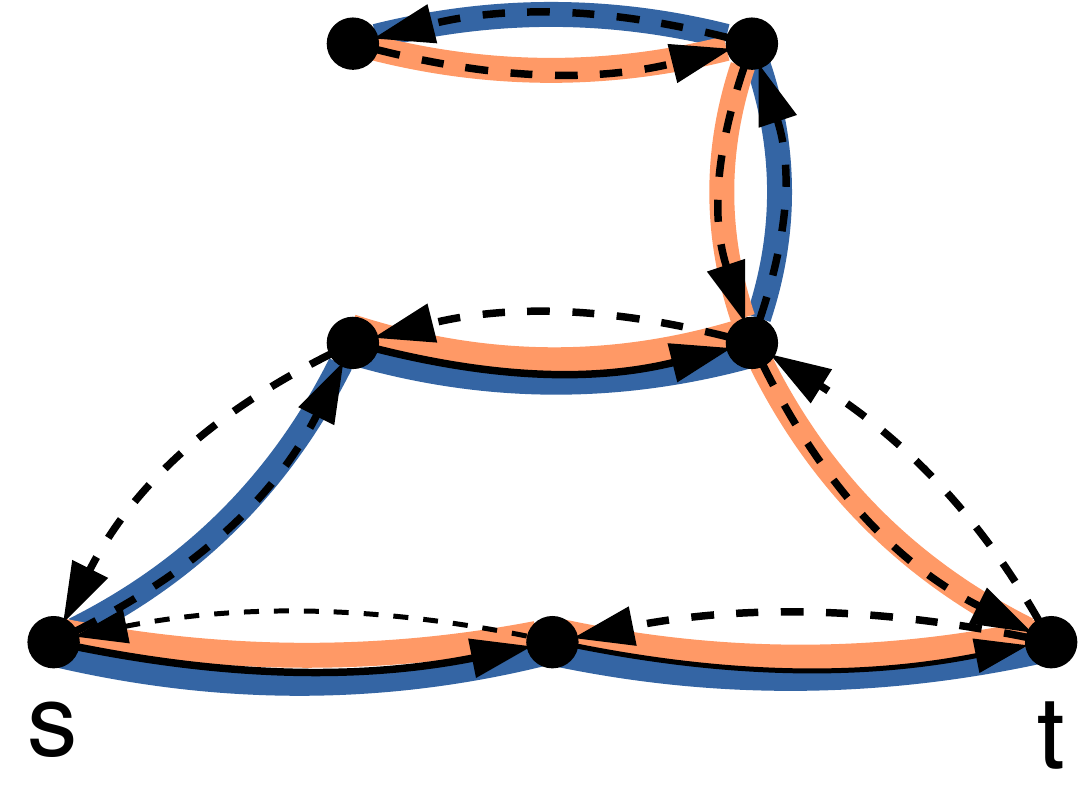}
\par\end{centering}
\caption[Advantages of considering via edges instead of via vertices]{Advantages of considering via edges instead of via vertices. Arrows
highlighted in dark blue depict the forward shortest path tree grown
from the origin $s$, and arrows highlighted in light red represent
the backward tree grown into the destination $t$. Edges that are
scanned from both directions are potential via edges and drawn as
solid black lines. The remaining edges are drawn as dashed black lines.
All vertices are scanned both from $s$ and $t$ and would therefore
considered potential via vertices. However, paths via the two topmost
vertices would require a u-turn. Restricting the focus on v-paths
via vertices adjacent to the solid lines excludes these u-turn paths.\label{fig:via-vertices-via-edges}}
\end{figure}
We proceed as follows: we determine for each scanned edge $e$ the
sets $O_{e}$ and $\dests_{e}$ of origins and destinations that $e$
has been scanned from. We discard all edges that have not been scanned
from at least one origin and one destination. Let $E_{\text{via}}$
be the resulting set of edges. The set of considered via vertices
$\vvia:=\left\{ v\in V\,|\,\exists w\in V:\left(v,w\right)\in\evia\right\} $
is given by the starting points of the edges in $\evia$. 

Note that though the procedure above eliminates paths with u-turns,
some admissible v-paths may be rejected as well. However, this issue
will rarely occur in realistic road networks, since the problem arises
only at specific merging points of very long edges. We provide details
in Appendix \ref{sec:APX-Admissible-paths-excluded-by-using-double-scanned-edges}.

\subsection{Step 2: Identifying vertices representing identical v-paths}

Some of the vertices in $\vvia$ may represent identical v-paths.
 Since we want to save the effort of checking the admissibility of
the same path multiple times and, similarly importantly, we do not
want to return multiple identical paths, we need to ensure that each
admissible path is represented by one via vertex only. 

To identify vertices representing identical paths, we have to compare
the v-paths corresponding to all $v\in\vvia$ for each origin-destination
pair. This requires $\order{\left|\vvia\right|\left|\origs\right|\left|\dests\right|}$
steps. However, for some vertices, identical paths can be identified
more quickly, as adjacent vertices typically represent similar sets
of v-paths. Therefore, we proceed in two steps: first, we reduce $\vvia$
by eliminating vertices whose via paths are also represented by their
respective neighbours, and second, we check which of the remaining
vertices represent identical v-paths. Below we describe the two steps
in greater detail.

\subsubsection{Eliminating vertices that represent the same v-paths as their neighbours\label{subsec:Eliminating-similar-paths-neighbours}}

The endpoints of an edge can be neglected as via vertices if the edge
has been scanned from the same origins and destinations as a neighbouring
edge. Consider for example an edge $\left(v,w\right)$ that has been
scanned from both an origin $s$ and a destination $t$. Then $P_{sw}=P_{svw}$
and $P_{vt}=P_{vwt}$. It follows that $v$ and $w$ represent the
same v-path with respect to $\left(s,t\right)$: $P_{svt}=P_{swt}$.
Now consider an adjacent edge $\left(u,v\right)$ that has been scanned
from $s$ and $t$ as well. Clearly, it is $P_{sut}=P_{svt}$ and
$P_{svt}=P_{swt}$, which implies that the v-paths via $u$, $v$,
and $w$ are identical. Therefore, only one of these vertices has
to be considered.

To introduce an algorithm that efficiently detects such configurations,
let $O_{e}$ be the set of origins and $D_{e}$ the set of destinations
that edge $e$ has been scanned from.  For each edge $e\in\evia$,
we check whether one directly preceding edge $e'\in\evia$ has been
scanned from a superset of origins and destinations, i.e. $\origs_{e}\subseteq\origs_{e'}$
and $\dests_{e}\subseteq\dests_{e'}$. If such an edge exists and
one of the set inequalities holds strictly, i.e. $\origs_{e}\subset\origs_{e'}$
or $\dests_{e}\subset\dests_{e'}$, we may disregard edge $e$, as
all  v-paths via $e$ are also v-paths via $e'$.

\begin{algorithm}[tp]
\setstretch{1.35}
\SetKwProg{myproc}{}{}{} 
\SetKwFunction{FPred}{has\_superior\_predecessor}
\SetKwFunction{FSuc}{has\_superior\_successor}
\SetKw{And}{and}
\SetKw{Not}{not}
\SetKwProg{Fn}{Function}{:}{}
\Fn{\FPred{$e$}}{
	Remove $e$ from $\evia$\;
	\ForAll{directly preceding edges $e'$ of $e$}{
		\If{$\origs_{e}\subseteq\origs_{e'}$ and $\dests_{e}\subseteq\dests_{e'}$}{
			\eIf{$\origs_{e} = \origs_{e'}$ and $\dests_{e} = \dests_{e'}$}{
				\Return \FPred{$e'$}
			}{
				\Return True\;
			}
			\Return False\;
		}
	}
}
\Fn{\FSuc{$e$}}{
	Remove $e$ from $\evia$\;
	\ForAll{directly succeeding edges $e'$ of $e$}{
		\If{$\origs_{e}\subseteq\origs_{e'}$ and $\dests_{e}\subseteq\dests_{e'}$}{
			\eIf{$\origs_{e} = \origs_{e'}$ and $\dests_{e} = \dests_{e'}$}{
				\Return \FSuc{$e'$}
			}{
				\Return True\;
			}
			\Return False\;
		}
	}
}
$\evia' := \emptyset$\;
\While{$\evia \neq \emptyset$}{
	Set $e := \text{ next entry in } \evia'$\;
	\If{\Not \FPred{$e$} \And \Not \FSuc{$e$}}{
		Add $e$ to $\evia'$\;
	}
}
$\evia := \evia'$\;

\caption[Eliminating vertices that represent the same v-paths as their neighbours]{Eliminating vertices that represent the same v-paths as their neighbours.
\label{alg:Eliminating-neighbours}}
\end{algorithm}

Things become more complicated if $\origs_{e}=\origs_{e'}$ and $\dests_{e}=\dests_{e'}$,
as we may either reject $e$, $e'$, or both edges. The latter case
may occur if $e'$ has another directly preceding edge $e''\in\evia$
with $\origs_{e'}\subseteq\origs_{e''}$ and $\dests_{e'}\subseteq\dests_{e''}$.
If one of these inequalities is strict, we disregard both $e$ and
$e'$. Otherwise, we continue traversing the edges in $\evia$ until
either (1) an edge is found whose origin and destination sets supersede
the sets of all previous edges or (2) no further predecessor with
sufficiently large origin and destination sets is found. In the second
case, we may disregard all traversed edges but $e$. We apply the
same approach to the successors of $e$ and repeat this procedure
until all edges in $\evia$ have been processed.

The updated set $\vvia$ of via vertices consists of the starting
vertices of the edges in the reduced edge set $\evia$. We provide
pseudo code for the outlined algorithm in Algorithm \ref{alg:Eliminating-neighbours}.
An efficient implementation may compare the origin and destination
sets of the edges in $\evia$ before the traverse is started. This
makes it easy to implement the most expensive parts of the algorithm
in parallel.

\subsubsection{Identifying remaining identical v-paths\label{subsec:Eliminating-identical-via-paths}}

The method outlined above identifies vertices that represent the same
v-paths as their neighbours. However, two vertices may represent the
same v-path with respect to one origin-destination pair but different
v-paths with respect to another origin-destination pair. Consequently,
these vertices could not be rejected in the step described above,
and a second procedure is required to eliminate the remaining identical
v-paths. 

We identify the remaining identical v-paths by comparing path lengths.
To this end, we assume that $P_{svt}=P_{swt}$ if and only if $\l{P_{svt}}=\l{P_{swt}}$.
Though it can happen that distinct paths have the same length, this
case is usually not of greater concern in practical applications.
The issue can be reduced by introducing a small random perturbation
for the lengths of edges. We examine this limitation further in
the discussion section.

With the above assumption, identical paths can be identified efficiently.
Since for each origin-destination pair $\left(s,t\right)$ and each
potential via vertex $v\in\vvia$ the distances $\dd sv$ and $\dd vt$
are known, the v-path lengths can be computed easily. For each origin-destination
pair, a comparison of the lengths of the v-paths corresponding to
all $v\in\vvia$ can be conducted in linear average time via hash
maps. Note that the path lengths must be compared with an appropriate
tolerance for machine imprecision.

In later steps it will be of benefit if most v-paths are represented
by a small set of via vertices. If there are multiple vertices representing
the same v-paths, we therefore choose the via vertex $v$ that has
been scanned from the most origin-destination combinations $\origs_{v}\times\dests_{v}$.
This makes it easier to reuse partial results when we check whether
the v-paths are locally optimal.

\subsection{Step 3: Excluding long paths}

Before we check whether paths are sufficiently locally optimal, we
exclude the paths that exceed the length allowance. That is, we disregard
all paths $P_{svt}$ with $\l{P_{svt}}>\beta\cdot\l{P_{st}}$ with
origin-destination pairs $\left(s,t\right)$ and via vertices $v\in\vvia$.
Since this step involves a simple comparison only, it is computationally
cheaper than identifying identical paths. Therefore, it is efficient
to conduct this step just before identical paths are eliminated (section
\ref{subsec:Eliminating-identical-via-paths}). This also reduces
the memory required to store potentially admissible combinations $\left(s,v,t\right)$
of origin-destination pairs and via vertices.

\subsection{Step 4: Excluding locally suboptimal paths}

The most challenging part of the search for admissible paths is to
check whether paths are sufficiently locally optimal. To test whether
a subpath is optimal, we need to find the shortest alternative, which
is computationally costly. Therefore, we apply an approximation to
limit the number of necessary shortest path queries.

Our method generalizes the approximate local optimality test by \citet{abraham_alternative_2013}.
 They noted that v-paths are concatenations of two optimal paths.
Hence, v-paths are locally optimal everywhere except in a neighbourhood
of the via vertex. More precisely, a v-path $P_{svt}$ from $s$ to
$t$ via $v$ is guaranteed to be $T$-locally optimal everywhere
except in the section that begins $T$ distance units before $v$
and ends $T$ distance units after $v$. Therefore, \citet{abraham_alternative_2013}
suggest to perform a shortest path query between the end points $x$
and $y$ of this section to check whether it is optimal. \citet{abraham_alternative_2013}
call this procedure the T-test.

The T-test does not return false positives. That is, a path that is
not $T$-locally optimal will never be misclassified as locally optimal.
However, the T-test may return false negatives: paths that are $T$-locally
optimal but not $2T$-locally optimal may be rejected. In modelling
applications, a more precise local optimality test may be desired.

It is possible to increase the precision of the T-test. Instead of
checking whether the whole potentially suboptimal subpath is optimal,
we may test multiple subsections to gain a higher accuracy. While
this procedure ensures that fewer admissible paths are falsely rejected,
the gain in accuracy comes with an increase in computational cost.
Therefore, it is desirable to use the results of earlier local optimality
checks to test the admissibility of other paths.

There are two situations in which local optimality results can be
reused. First, if a subsection of a path is found to be suboptimal,
other paths that include this section can be rejected as well. Second,
if a subpath of a path is found to be locally optimal, other paths
including this subpath may be classified as locally optimal as well.
That way, many paths can be processed all at once.

When reusing partial results, it is important to note that even though
we require all paths to be $\alpha$-relative locally optimal, the
\emph{absolute} lengths of the subsections that need to be optimal
 depend on how long the considered paths are. Therefore, paths must
be considered in an order dependent on their lengths. We provide details
below.

\subsubsection{Preparation\label{subsec:Preparation}}

Before we can start testing whether the remaining v-paths are locally
optimal, a preparation step is needed to identify the subpaths that
may be suboptimal and thus need to be assessed more closely. To reuse
partial results efficiently, we furthermore need to determine subsections
that different paths have in common. We describe the preparation procedure
below.

\sloppy We start by introducing helpful notation. Suppose we want
to test whether the v-paths via vertex $v$ are locally optimal. Let
$\tilde{\origs}:=\left\{ s\in\origs\,|\,\exists t\in\dests\,:\,\l{P_{svt}}\leq\beta\cdot\l{P_{st}}\right\} $
be the origins for which at least one destination can be reached via
$v$ without violating the length constraint. Let $\tilde{\dests}$
be defined accordingly for the destinations. Define $\tilde{\dests}_{s}:=\left\{ t\in\tilde{\dests}\,|\,\l{P_{svt}}\leq\beta\cdot\l{P_{st}}\right\} $
as the set of destinations that can be reached from the origin $s$
via $v$ without violating the length constraint. 

In the preparation step, we determine for each origin $s\in\tilde{\origs}$
the destination $t_{s}:=\argmaxo{t\in\tilde{\dests}_{s}}\l{P_{svt}}$
for which the potentially suboptimal section is longest. Furthermore,
we search for the vertex $x_{s}:=\argmino{\tilde{x}\in P_{sv};\,{{\dd{\tilde{x}}v}}\geq\alpha\l{P_{svt_{s}}}}\dd{\tilde{x}}v$,
which is the last vertex on $P_{sv}$ with $\dd{x_{s}}v\geq\alpha\cdot\l{P_{svt_{s}}}$,
and we determine $x_{t}$ defined accordingly. Now we fill the arrays

\begin{align}
A_{us} & :=\begin{cases}
\text{True} & \text{if }u\in P_{sv}\\
\text{False} & \text{else,}
\end{cases} & A_{ut} & :=\begin{cases}
\text{True} & \text{if }u\in P_{vt}\\
\text{False} & \text{else}
\end{cases}\label{eq:A-def}
\end{align}
for all vertices $u\in P_{x_{s}v}$ and $u\in P_{vx_{t}}$, respectively.

The information saved in the shortest path trees are suitable to find
paths from scanned vertices to the origins and destinations. However,
the trees contain no information on the reverse paths starting at
the end points. That is, while it is easy to find the backward shortest
path from $v$ to $x_{s}$, it is hard to follow the path in the opposite
direction starting at $x_{s}$. We gather the necessary information
in the preparation step: for each origin $s\in\tilde{\origs}$, we
save the successors of each relevant vertex $u\in P_{sv}$.

In Algorithm \ref{alg:preparation-LOTest}, we provide pseudo code
for the described procedures. The pseudo-code considers the origins
only. The algorithm for the destinations is similar. The preparation
phase ends with sorting all origin-destination pairs with respect
to the lengths of the respective v-paths via $v$.

\begin{algorithm}[t]
\setstretch{1.35}
\SetKwProg{myproc}{}{}{} 
\SetKw{wForAll}{for all}
\SetKw{Not}{not}
\ForEach{destination $s\in\tilde{\origs}$}{ 
	$t_{s} := \argmaxo{t\in \tilde{\dests}_{s}}\left(\dd s v + \dd v t \right)$\;
	$u := \parent[s]v$\;
	$\successor[s] u := v$\;
	$\text{stop} := \text{False}$\;
	\While{\Not stop}{ 
		\If{$u \notin A$}{
			Initialize $A_{u\tilde{s}} := \text{False}$ \wForAll {$\tilde{s}\in\tilde{\origs}$}\;
		}
		$A_{us} := \text{True}$\;
		$\successor[s]{\parent u} := u$\;
		\eIf{$\dd v u > \alpha\left(\dd s v + \dd v {t_{s}} \right)$}{
			$\text{stop} := \text{True}$\;
		}{
			$u := \parent u$\;
		}	
	}
	
}

\caption[Filling the array $A$ and finding successors]{Filling the array $A$ for the origins and finding successors. The
algorithm for the destinations is similar. \label{alg:preparation-LOTest}}
\end{algorithm}

\subsubsection{Testing local optimality for one origin-destination pair}

We use an approximation approach with flexible precision to check
whether paths are locally optimal. For a parameter $\delta\in\left[1,2\right]$,
we call this procedure the $\T[\delta]$-test. The parameter $\delta$
is a measure for the test's precision.

To outline the $\T[\delta]$-test, let us consider a v-path $P:=P_{svt}$
from $s$ to $t$ via the vertex $v$. Let $S_{s}:=\left\{ u\in P_{sv}\,|\,d(u,v)<T\right\} $
be the set of vertices that are on the path $P_{sv}$ and have a distance
less than $T$ to the vertex $v$. Furthermore, add to $S_{s}$ the
vertex $x:=\argmino{\tilde{x}\in P_{sv};\,{{\dd{\tilde{x}}v}}\geq T}\dd{\tilde{x}}v$
that is closest to $v$ but has $\dd xv\geq T$ if such a vertex exists.
Choose $S_{t}$ accordingly with respect to the destination vertex
$t$. Let $\partner[t]u{\tau}:=\argmino{\tilde{w}\in S_{t};\,{{\dd[P]u{\tilde{w}}}}\geq\tau}\dd[P]u{\tilde{w}}$
for $u\in S_{s}$ be the vertex $w\in S_{t}$ that is closest to $u$
but has $\dd[P]uw\geq\tau$. If no such vertex exists in $S_{t}$,
set $\partner[t]u{\tau}=y:=\argmaxo{\tilde{w}\in S_{t}}\dd[P]u{\tilde{w}}$.
Define accordingly $\partner[s]w{\tau}$ for $w\in S_{t}$ as the
vertex $u\in S_{s}$ that is closest to $w$ but has $\dd[P]uw\geq\tau$.

The $\T[\delta]$-test proceeds as follows: the algorithm starts at
the vertex $u_{1}:=x$ and checks whether the subpath $P^{u_{1}w_{1}}$
between $u_{1}$ and $w_{1}:=\partner[t]{u_{1}}{\delta T}$ is a shortest
path. If so, the algorithm progresses searching $u_{2}:=\partner[s]{u_{1}}T$
in backward direction and repeats the steps formerly applied to $u_{1}$
now with $u_{2}$. This procedure repeats until $u_{n}=v$ for some
$n\in\N$. If all the shortest path queries yield subpaths of $P$,
the path is deemed approximately $T$-locally optimal. Otherwise,
it is classified as not locally optimal. We depict the algorithm in
Figure \ref{fig:T_delta-test} and provide pseudo-code in Algorithm
\ref{alg:T-test}.

\begin{figure}[t]
\includegraphics[width=1.01\textwidth]{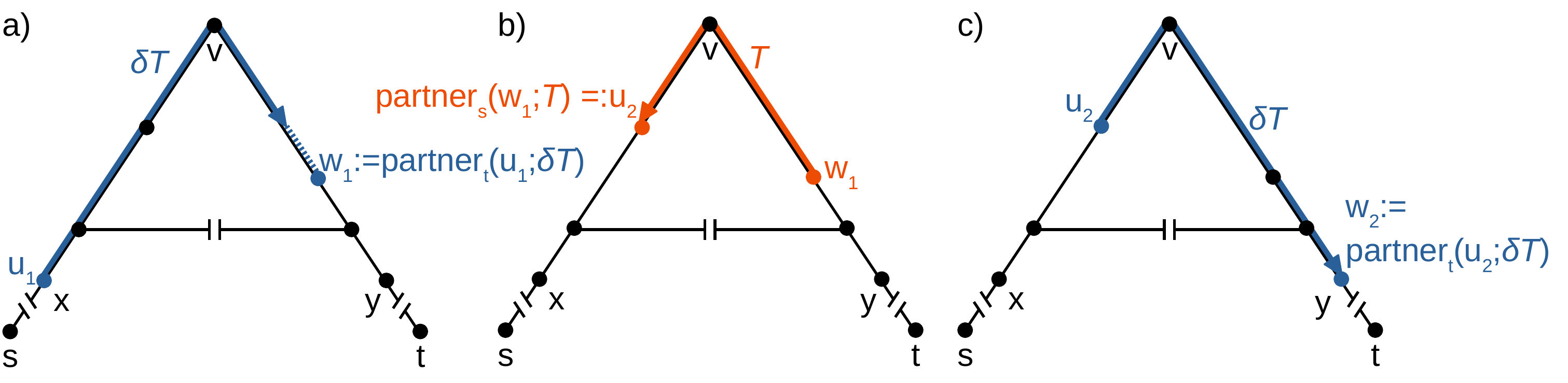}

\caption[{$\T[\delta]$-test}]{$\protect\T[\delta]$-test with $\delta=1.4$. The three subfigures
depict the steps of the $\protect\T[\delta]$-test for a path $P_{svt}$
connecting origin-destination pair $\left(s,t\right)$ via vertex
$v$. The vertices $x$ and $y$ are the end points of the potentially
locally suboptimal section. The edge lengths are given by the Euclidean
distance except for the edges with an indicated gap. (a) In a first
step, the test determines the vertex $w_{1}$ that is at least $\delta T$
units along the path away from $u_{1}:=x$ (the distance is depicted
as blue arrow). (b) If the shortest path query between $u_{1}$ and
$w_{1}$ indicates that the subsection $P_{svt}^{u_{1}w_{1}}$ is
optimal, the test continues by determining the first vertex $u_{2}$
that is at least $T$ units away from $w_{1}$ in backwards direction.
(c) From $u_{2}$, the algorithm searches the vertex $w_{2}$ that
is at least $\delta T$ units along the path beyond $u_{2}$ and conducts
a shortest path query between $u_{2}$ and $w_{2}$. If all the shortest
path queries yield subpaths of $P_{svt}$, the path is deemed approximately
$T$-locally optimal. Note that a $\protect\T[2]$-test would have
misclassified the path as not locally optimal, provided the shortest
path from $x$ to $y$ includes the horizontal edge. \label{fig:T_delta-test}}
\end{figure}

\begin{algorithm}[t]
\setstretch{1.35}
\SetKwProg{myproc}{}{}{}  

Search for the vertex $x \in S_s$ with maximal distance to $v$\;
Set $u := x$\;
Set $w := v$\;
\While{$ u \neq v$ and $ w \neq y$}{ 
	Set $w' := \partner[t]{u}{\delta T}$\;
	\eIf{$w = w'$}{
		Set $w :=$ next farthest vertex to $v$ in $S_t$\;
	}{
		Set $w := w'$\;
	}
	\myproc{Check whether the $u$-$w$ subpath is optimal}{
		\If{$d(u,w) < \dd u v + \dd v w$}{
			\Return "Not locally optimal"
		}
	}	
	Set $u' := \partner[s]{w}{T}$\;
	\eIf{$u = u'$}{
		Set $u :=$ next closest vertex to $v$ in $S_s$\;
	}{
		Set $u := u'$\;
	}
}
\Return "Locally optimal"

\caption[{$\T[\delta]$-test}]{$\protect\T[\delta]$-test. \label{alg:T-test}}
\end{algorithm}

Similar to the T-test, the $\T[\delta]$ test does not return false
positives. However, paths that are $T$-locally optimal but not $\delta T$-locally
optimal might be rejected. Hence, the $\T[1]$-test is exact,
whereas the ``classical'' T-test by \citet{abraham_alternative_2013}
is the $\T[2]$-test. An increase in precision comes with a computational
cost. The $\T[\delta]$-test requires at most $2\left\lceil \frac{1}{\delta-1}\right\rceil $
shortest path queries if $\delta>1$. However, query numbers around
$\frac{1}{\delta-1}$ are more common. Either way, the number of required
queries is bounded by a constant independent of the graph, unless
$\delta=1$.

\subsubsection{Using test results to check local optimality for multiple origin-destination
pairs\label{subsec:LO-results-for-many-pairs}}

The $\T[\delta]$-test is a suitable procedure to check whether a
single v-path is locally optimal. However, if many v-paths shall be
tested, the required number of shortest path queries may exceed a
feasible limit. Therefore, we show below how negative test results
can be used to reject multiple paths at once. Afterwards we describe
a method to use positive test results to classify many paths as locally
optimal.

\begin{figure}[t]
\begin{centering}
\includegraphics[scale=0.6]{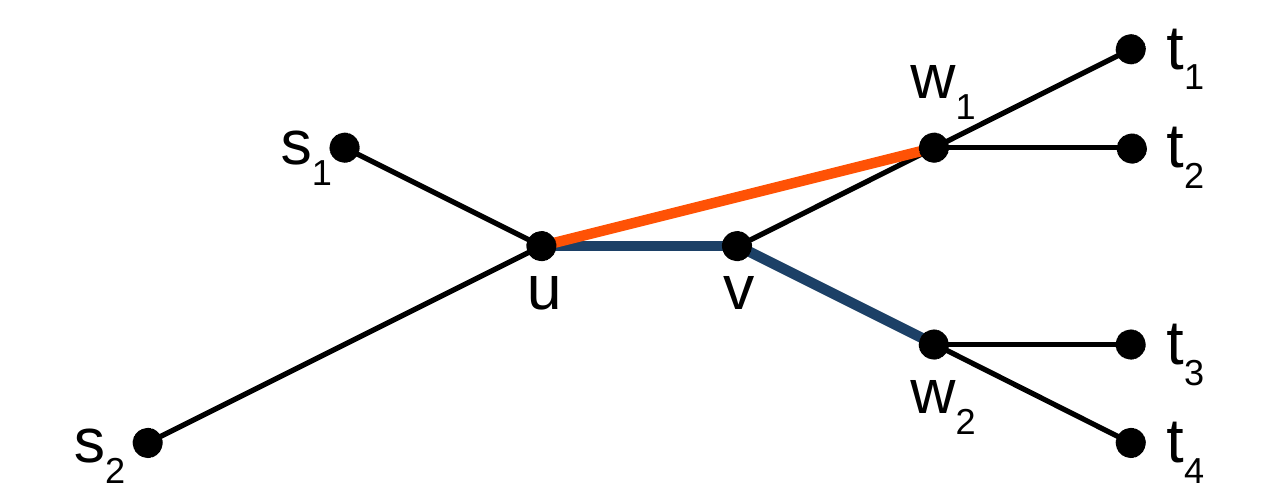}
\par\end{centering}
\caption[Accepting and rejecting multiple paths at once]{Accepting and rejecting multiple paths at once. Suppose we want to
check the admissibility of the paths from the origins $s_{i}$ to
the destinations $t_{j}$ via the vertex $v$. Suppose that we start
with the path $P_{s_{1}vt_{2}}$ from $s_{1}$ to $t_{2}$ via $v$
and find that the subsection $P_{uvw_{1}}$ is not optimal, because
there is a shorter path (light orange) from $u$ to $w_{1}$. Then we know
that the paths $P_{s_{1}vt_{1}}$, $P_{s_{2}vt_{1}}$, and $P_{s_{2}vt_{2}}$
are not sufficiently locally optimal, either. Now suppose we continue
with the pair $\left(s_{1},t_{3}\right)$ and find that $P_{s_{1}vt_{3}}$
is locally optimal because the section $P_{uvw_{2}}$ (dark blue) is optimal.
Since $P_{s_{1}vt_{4}}$ includes this subsection, too, and is not
much longer than $P_{s_{1}vt_{3}}$, we can deduce that $P_{s_{1}vt_{4}}$
is approximately locally optimal as well. \label{fig:Accepting-and-rejecting-multiple}}
\end{figure}

\paragraph{Rejecting paths}

Suppose that in order to test whether $P_{svt}$ is admissible, we
have checked whether the subpath $P_{svt}^{uw}$ between some vertices
$u$ and $w$ is a shortest path, and suppose we have obtained a negative
result, i.e. we have found that $\dd uw<\dd uv+\dd vw$. We can not only conclude
that the path $P_{svt}$ is not locally optimal but also reject other
v-paths that include the subpath $P_{svt}^{uw}$ (see Figure \ref{fig:Accepting-and-rejecting-multiple}).

To see which paths can be rejected, let $\Omega_{u}:=\left\{ \tilde{s}\in\origs\,|\,\dd{\tilde{s}}v=\l{P_{\tilde{s}uv}}\right\} $
be the set of origins for which $u$ is on the shortest path to $v$
and define $\Delta_{w}:=\left\{ \tilde{t}\in\dests\,|\,\dd v{\tilde{t}}=\l{P_{vw\tilde{t}}}\right\} $
accordingly for the destinations. Let furthermore $\mathcal{P}:=\left\{ \left(s,t\right)\in\tilde{\origs}\times\tilde{\dests}\,|\,\l{P_{svt}}\leq\beta\cdot\l{P_{st}}\right\} $
be the set of all origin-destination pairs with a potentially admissible
v-path via $v$, and let $\mathcal{P}_{uw}:=\mathcal{P}\cap\left(\Omega_{u}\times\Delta_{w}\right)$
denote the respective set of origin-destination pairs for which the
v-path via $v$ also includes $u$ and $w$. The following lemma shows
which paths can be rejected as approximately inadmissible.
\begin{lem}
Suppose the $\T[\delta]$-test is applied to check whether a path
$P_{svt}$ is $\alpha$-relative locally optimal and that the test
fails, because $\dd uw<\dd uv+\dd vw$ for some vertices $u$ and
$w$. Then, for each pair $\left(\tilde{s},\tilde{t}\right)\in\mathcal{P}_{uw}$
with $P_{\tilde{s}v\tilde{t}}\geq\l{P_{svt}}$, the v-path $P_{\tilde{s}v\tilde{t}}$
is not relative locally optimal with a factor higher than $\alpha_{\tilde{s}v\tilde{t}}<\frac{\l{P_{xvy}}}{\l{P_{svt}}}\leq\alpha\delta$,
whereby $x$ and $y$ are the neighbours of $u$ and $w$ in direction
of $v$, respectively. \label{lem:LO-rejection}
\end{lem}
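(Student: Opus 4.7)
The strategy is to produce a specific non-optimal, significant subpath of $P_{\tilde{s}v\tilde{t}}$ and convert its length into the claimed bound on $\alpha_{\tilde{s}v\tilde{t}}$. I would proceed in three conceptual steps: locate the vertices $u$ and $w$ on $P_{\tilde{s}v\tilde{t}}$, use the failed shortest path query to declare a concrete subpath non-optimal, and finally invoke the partner construction of the $\T[\delta]$-test to bound the resulting ratio by $\alpha\delta$.

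First, the defining conditions $\tilde{s}\in\Omega_{u}$ and $\tilde{t}\in\Delta_{w}$ say that $u$ lies on $P_{\tilde{s}v}$ and $w$ lies on $P_{v\tilde{t}}$, so that $P_{\tilde{s}v\tilde{t}}^{uv}=P_{uv}$ and $P_{\tilde{s}v\tilde{t}}^{vw}=P_{vw}$. Concatenating these pieces, the subpath $P_{\tilde{s}v\tilde{t}}^{uw}$ has length $\dd{u}{v}+\dd{v}{w}$, and its interior (its endpoints $u,w$ removed) is $P_{\tilde{s}v\tilde{t}}^{xy}=P_{xv}\oplus P_{vy}$ of length $\l{P_{xvy}}=\dd{x}{v}+\dd{v}{y}$. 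The failure of the $\T[\delta]$-test asserts $\dd{u}{w}<\dd{u}{v}+\dd{v}{w}$, so $P_{\tilde{s}v\tilde{t}}^{uw}$ is strictly longer than the shortest $u$-to-$w$ path and is therefore not a shortest path.

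Next, I would translate this into a statement about relative local optimality. By Definition~\ref{def:T-locally-optimal}, $P_{\tilde{s}v\tilde{t}}^{uw}$ is $T'$-significant whenever $T'>\l{P_{xvy}}$, so for any $\alpha'>\l{P_{xvy}}/\l{P_{\tilde{s}v\tilde{t}}}$ this subpath is $(\alpha'\l{P_{\tilde{s}v\tilde{t}}})$-significant yet not a shortest path, which means $P_{\tilde{s}v\tilde{t}}$ fails to be $\alpha'$-relative locally optimal. Setting $\alpha_{\tilde{s}v\tilde{t}}:=\l{P_{xvy}}/\l{P_{\tilde{s}v\tilde{t}}}$ and applying the hypothesis $\l{P_{\tilde{s}v\tilde{t}}}\geq\l{P_{svt}}$ therefore yields $\alpha_{\tilde{s}v\tilde{t}}\leq\l{P_{xvy}}/\l{P_{svt}}$. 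For the final inequality, use that the test called the failing shortest path query only because $w=\partner[t]{u}{\delta T}$ with $T=\alpha\l{P_{svt}}$; by the definition of the partner, the predecessor $y$ of $w$ toward $v$ on $P_{svt}$ satisfies $\dd[P_{svt}]{u}{y}=\dd{u}{v}+\dd{v}{y}<\delta T$. Since $\dd{x}{v}\leq\dd{u}{v}$, this gives $\l{P_{xvy}}=\dd{x}{v}+\dd{v}{y}<\alpha\delta\l{P_{svt}}$, completing the chain of inequalities.

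\textbf{Main obstacle.} The delicate point is the last bound in an iteration where $w$ has been advanced via the ``next farthest vertex in $S_{t}$'' branch of Algorithm~\ref{alg:T-test} (triggered when the current $w$ already equals $\partner[t]{u}{\delta T}$): here the clean inequality $\dd[P_{svt}]{u}{y}<\delta T$ need not hold verbatim, and the argument must be propagated inductively, using that the preceding iteration's partner already certified $\dd[P_{svt}]{u_{\text{prev}}}{w}\geq\delta T$ and relating this to the current $u$. A minor related nuance is that the strict inequality $\alpha_{\tilde{s}v\tilde{t}}<\l{P_{xvy}}/\l{P_{svt}}$ requires either $\l{P_{\tilde{s}v\tilde{t}}}>\l{P_{svt}}$ or a tie-breaking argument based on the fact that the failing subpath's interior has length exactly $\l{P_{xvy}}$, allowing $\alpha_{\tilde{s}v\tilde{t}}$ to be taken marginally below this ratio.
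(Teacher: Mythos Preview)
Your approach coincides with the paper's: its proof is three lines that note $P_{xvy}\subseteq P_{\tilde{s}v\tilde{t}}$, deduce that $P_{\tilde{s}v\tilde{t}}$ is at most $T$-locally optimal with $T<\l{P_{xvy}}$, and then write the chain $\alpha_{\tilde{s}v\tilde{t}}=T/\l{P_{\tilde{s}v\tilde{t}}}<\l{P_{xvy}}/\l{P_{\tilde{s}v\tilde{t}}}\leq\l{P_{xvy}}/\l{P_{svt}}\leq\alpha\delta$. The paper simply asserts the last bound $\l{P_{xvy}}\leq\alpha\delta\,\l{P_{svt}}$ without the partner-based justification you supply, and it does not address either of the two subtleties you flag (the ``next farthest'' branch of Algorithm~\ref{alg:T-test} and the strictness of the first inequality); in this respect your treatment is more careful than the published argument.
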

\begin{proof}
By construction of $\mathcal{P}_{uw}$, it is $P_{xvy}\subseteq P_{\tilde{s}v\tilde{t}}$
for any origin-destination pair $\left(\tilde{s},\tilde{t}\right)\in\mathcal{P}_{uw}$.
Therefore, $P_{\tilde{s}v\tilde{t}}$ is at most $T$-locally optimal
with $T<\l{P_{xvy}}$. Hence, the local optimality factor $\alpha_{\tilde{s}v\tilde{t}}$
for $P_{\tilde{s}v\tilde{t}}$ satisfies
\begin{equation}
\alpha_{\tilde{s}v\tilde{t}}=\frac{T}{\l{P_{\tilde{s}v\tilde{t}}}}<\frac{\l{P_{xvy}}}{\l{P_{\tilde{s}v\tilde{t}}}}\leq\frac{\l{P_{xvy}}}{\l{P_{svt}}}\leq\frac{\alpha\delta\l{P_{svt}}}{\l{P_{svt}}}=\alpha\delta.
\end{equation}
\end{proof}
Following Lemma \ref{lem:LO-rejection}, we can reject all pairs $\left(\tilde{s},\tilde{t}\right)\in\mathcal{P}_{uw}$
with $P_{\tilde{s}v\tilde{t}}\geq\l{P_{svt}}$. The origin-destination
pairs in question can be determined by considering the array $A$
constructed in the preparation phase (equation (\ref{eq:A-def})).
Let $\tilde{A}_{u}:=\left\{ s\in\tilde{\origs}\,|\,A_{us}=\text{True}\right\} $
and $\tilde{A}_{w}:=\left\{ t\in\tilde{\dests}\,|\,A_{wt}=\text{True}\right\} $.
Then, $\mathcal{A}_{uw}:=\tilde{A}_{u}\times\tilde{A}_{w}\subseteq\mathcal{P}_{uw}$,
and $\mathcal{P}_{uw}\backslash\mathcal{A}_{uw}$ contains only pairs
$\left(\tilde{s},\tilde{t}\right)$ with $\l{P_{\tilde{s}v\tilde{t}}}<\l{P_{svt}}$.
It follows that all pairs $\left(\tilde{s},\tilde{t}\right)\in\mathcal{P}_{uw}$
with $P_{\tilde{s}v\tilde{t}}\geq\l{P_{svt}}$ are also in $\mathcal{A}_{uw}$. 

As $\mathcal{A}_{uw}$ may also contain pairs $\left(\tilde{s},\tilde{t}\right)$
with $\l{P_{\tilde{s}v\tilde{t}}}<\l{P_{svt}}$, we process the origin-destination
pairs in the order of increasing via-path length. Then the pairs $\left(\tilde{s},\tilde{t}\right)\in\mathcal{A}_{uw}$
with $\l{P_{\tilde{s}v\tilde{t}}}<\l{P_{svt}}$ will be processed
before $\left(s,t\right)$. If we label these pairs as ``processed''
and exclude them from $\mathcal{A}_{uw}$, then we can reject all
remaining pairs in $\mathcal{A}_{uw}$.

\paragraph{Accepting paths}

The procedure outlined in the previous section allows us to reject
many inadmissible paths with a single shortest distance query. However,
the procedure may yield limited performance gain if many of the considered
paths are admissible. Therefore, we introduce a second relaxation
of our local optimality condition: we classify paths as (approximately)
admissible if they are $\left(\alpha\gamma\right)$-relative locally
optimal with some constant $\gamma\in(0,1]$.

To see how this relaxation can be exploited, suppose that we are considering
an origin-destination pair $\left(s,t\right)$ and that we have already
confirmed that the path $P_{svt}$ is $\alpha$-relative locally optimal.
Let $x:=\argmino{\tilde{x}\in P_{sv};\,{{\dd{\tilde{x}}v}}\geq\alpha\l{P_{svt}}}\dd{\tilde{x}}v$
be the last vertex on $P_{sv}$ with a distance to $v$ of at least
$\alpha\cdot\l{P_{svt}}$. Let $y:=\argmino{\tilde{y}\in P_{vt};\,{{\dd v{\tilde{y}}}}\geq\alpha\l{P_{svt}}}\dd v{\tilde{y}}$
be defined accordingly for the destination branch. During the $\T[\delta]$-test
we have ensured that the section $P_{xvy}$ is approximately $T$-locally
optimal with $T=\alpha\cdot\l{P_{svt}}$.

In the lemma below, we identify the paths that can be classified
as approximately admissible after a successful $\T[\delta]$-test.
In line with the notation in the previous section, let $\Omega_{x}:=\left\{ \hat{s}\in\origs\,|\,\dd{\hat{s}}v=l\left(P_{\hat{s}xv}\right)\right\} $,
$\Delta_{y}:=\left\{ \hat{t}\in\dests\,|\,\dd v{\hat{t}}=\l{P_{vy\hat{t}}}\right\} $,
and $\mathcal{P}_{xy}:=\mathcal{P}\cap\left(\Omega_{x}\times\Delta_{y}\right)$. 
\begin{lem}
Let $\left(s,t\right)\in\mathcal{P}$ be an origin-destination pair.
If the $\T[\delta]$-test applied to $P_{svt}$ considered the vertices
on $P_{xvy}\subseteq P_{svt}$ and confirmed that the path is $\alpha$-relative
locally optimal, then all paths $P_{\tilde{s}v\tilde{t}}$ with $\left(\tilde{s},\tilde{t}\right)\in\mathcal{P}_{xy}$
and $\l{P_{\tilde{s}v\tilde{t}}}\leq\frac{1}{\gamma}\l{P_{svt}}$
are at least $\left(\alpha\gamma\right)$-relative locally optimal.\label{lem:LO-acceptance}
\end{lem}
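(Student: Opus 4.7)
The plan is to show directly from Definition~\ref{def:T-locally-optimal} that every $T'$-significant subpath of $P_{\tilde s v \tilde t}$ is a shortest path, where $T':=\alpha\gamma\,\l{P_{\tilde s v \tilde t}}$. The key observation is that the length bound $\l{P_{\tilde s v \tilde t}}\leq\frac{1}{\gamma}\l{P_{svt}}$ yields $T'\leq T:=\alpha\,\l{P_{svt}}$, so the potentially non-optimal region of $P_{\tilde s v \tilde t}$ around $v$ is already contained in the region cleared by the $\T[\delta]$-test on $P_{svt}$. Because $(\tilde s,\tilde t)\in\mathcal{P}_{xy}$, I would first record the decomposition $P_{\tilde s v \tilde t}=P_{\tilde s x}\cdot P_{xvy}\cdot P_{y\tilde t}$, in which both $P_{\tilde s v}$ and $P_{v\tilde t}$ are shortest paths by the definitions of $\Omega_x$ and $\Delta_y$.

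Any subpath $Q\subseteq P_{\tilde s v \tilde t}$ falls into one of three cases: (i) $Q\subseteq P_{\tilde s v}$, (ii) $Q\subseteq P_{v\tilde t}$, or (iii) $v$ is an interior vertex of $Q$. Cases (i) and (ii) are immediate, since a subpath of a shortest path is itself a shortest path, irrespective of significance. For case (iii), I would write the endpoints of $Q$ as $a\in P_{\tilde s v}\setminus\{v\}$ and $b\in P_{v\tilde t}\setminus\{v\}$, and let $a_+$ denote the vertex immediately following $a$ on $Q$ (equivalently, the next vertex after $a$ on $P_{\tilde s v}$ toward $v$) and $b_-$ the vertex immediately preceding $b$. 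Since the interior of $Q$ lies on the concatenation of two shortest paths and these paths are unique, one obtains
\begin{equation}
\l{Q''} = \dd{a_+}{v}+\dd{v}{b_-}.
\end{equation}
I would then argue by contradiction: if $a$ were strictly before $x$ on $P_{\tilde s v}$, then $a_+$ would lie at $x$ or still before it; by definition, $x$ is the last vertex on $P_{\tilde s v}$ with $\dd{x}{v}\geq T$, so $\dd{a_+}{v}\geq\dd x v\geq T\geq T'$. This forces $\l{Q''}\geq T'$, contradicting $T'$-significance. Hence $a\in P_{xv}$, and the symmetric argument gives $b\in P_{vy}$, so $Q\subseteq P_{xvy}\subseteq P_{svt}$.

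To finish, I would invoke the hypothesis that the $\T[\delta]$-test applied to $P_{svt}$ passed. Since this test never produces false positives, $P_{svt}$ is genuinely $T$-locally optimal. The subpath $Q\subseteq P_{svt}$ satisfies $\l{Q''}<T'\leq T$, so it is $T$-significant in $P_{svt}$ and therefore a shortest path, as required. The main obstacle I foresee is the bookkeeping in case (iii): making precise what $Q''$ is, invoking the uniqueness of shortest paths to obtain the identity $\l{Q''}=\dd{a_+}{v}+\dd{v}{b_-}$ (including the degenerate subcase where $a_+=v=b_-$), and carefully handling the boundary case $T'=T$, where the strict inequality in the definition of $T'$-significance is what supplies the contradiction.
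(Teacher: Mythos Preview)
Your proposal is correct and follows the same essential route as the paper: use that a successful $\T[\delta]$-test certifies $T$-local optimality of $P_{svt}$ with $T=\alpha\,\l{P_{svt}}$, transfer this to $P_{\tilde s v\tilde t}$ via the shared segment $P_{xvy}$, and then rescale using $\l{P_{\tilde s v\tilde t}}\leq\frac{1}{\gamma}\l{P_{svt}}$ to obtain the factor $\alpha\gamma$.

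The difference is one of granularity rather than strategy. The paper's proof is two sentences: it simply \emph{asserts} that every $P_{\tilde s v\tilde t}$ with $(\tilde s,\tilde t)\in\mathcal P_{xy}$ is $T$-locally optimal and then divides by the length bound. Your case split (i)--(iii) and the contradiction argument locating the endpoints of a $T'$-significant subpath inside $P_{xvy}$ are precisely the justification the paper suppresses. One minor wording point: you describe $x$ as ``the last vertex on $P_{\tilde s v}$ with $\dd{x}{v}\geq T$,'' whereas $x$ is defined on $P_{sv}$; this is harmless because $(\tilde s,\tilde t)\in\mathcal P_{xy}$ forces $x\in P_{\tilde s v}$ and $P_{xv}$ is common to both (unique shortest paths), so the monotonicity of distance along $P_{\tilde s v}$ still gives $\dd{a_+}{v}\geq\dd{x}{v}\geq T$. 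With that clarification, your argument goes through cleanly.
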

\begin{proof}
The $\T[\delta]$-test for $P_{svt}$ assured that $P_{svt}$ is $T$-locally
optimal with $T=\alpha\cdot\l{P_{svt}}$. Therefore, all paths $P_{\tilde{s}v\tilde{t}}$
with $\left(\tilde{s},\tilde{t}\right)\in\mathcal{P}_{xy}$ are also
$T$-locally optimal with $T=\alpha\cdot\l{P_{svt}}$. The local optimality
factor $\alpha_{\tilde{s}v\tilde{t}}$ of paths $P_{\tilde{s}v\tilde{t}}$
with $\left(\tilde{s},\tilde{t}\right)\in\mathcal{P}_{xy}$ and $\l{P_{\tilde{s}v\tilde{t}}}\leq\frac{1}{\gamma}\l{P_{svt}}$
is therefore at least
\begin{equation}
\alpha_{\tilde{s}v\tilde{t}}=\frac{T}{\l{P_{\tilde{s}v\tilde{t}}}}\geq\frac{T}{\frac{1}{\gamma}\l{P_{svt}}}=\frac{\gamma\alpha\l{P_{svt}}}{\l{P_{svt}}}=\alpha\gamma.
\end{equation}
That is, the paths $P_{\tilde{s}v\tilde{t}}$ are at least $\left(\alpha\gamma\right)$-relative
locally optimal.
\end{proof}
Following Lemma \ref{lem:LO-acceptance}, we can accept all pairs
$\left(\tilde{s},\tilde{t}\right)\in\mathcal{P}_{uw}$ with $\l{P_{\tilde{s}v\tilde{t}}}\leq\frac{1}{\gamma}\l{P_{svt}}$.
We do this in the same manner as we rejected paths. Let $\mathcal{A}_{xy}\subseteq\mathcal{P}_{xy}$
be defined as in the previous section. Since $\mathcal{P}_{xy}\backslash\mathcal{A}_{xy}$
contains only pairs $\left(\tilde{s},\tilde{t}\right)$ with $\l{P_{\tilde{s}v\tilde{t}}}<\l{P_{svt}}$,
which have been processed before $P_{svt}$, we only need to consider
the pairs in $\mathcal{A}_{xy}$ and classify all not yet processed
v-paths $P_{\tilde{s}v\tilde{t}}$ with $\left(\tilde{s},\tilde{t}\right)\in\mathcal{A}_{xy}$
and $\l{P_{\tilde{s}v\tilde{t}}}\leq\frac{1}{\gamma}\l{P_{svt}}$
as admissible. The described procedure to reject and accept multiple
paths at once is outlined in Algorithm \ref{alg:Multi-LO-test}.

\begin{algorithm}
\setstretch{1.35}
\SetKw{And}{and}
$R := \emptyset$\tcp*[r]{set of approximately admissible paths}
\ForEach{vertex $v \in \vvia$}{ 
	Let $\mathcal{P}$ be the set of all origin-destination combinations for which $v$ is a potential via vertex\;
	Sort the pairs in $\mathcal{P}$ in increasing order of the lengths of their v-paths\;
	\While{$\mathcal{P} \neq \emptyset$}{
		$(s,t):=$ next origin-destination pair in $\mathcal{P}$\;
		Do a $\T[\delta]$-test for the path $P_{svt}$ via $v$\;
		\eIf{the test fails and finds a suboptimal section $P_{uvw} \subseteq P_{svt}$}{
			\ForEach{pair $(s',t') \in \mathcal{P}$}{
				\If{$P_{uvw} \subseteq P_{s'vt'}$}{
					Remove $(s',t')$ from $\mathcal{P}$\;
				}
			}
		}{
			Add $P_{svt}$ to $R$\;
			Let $P_{xvy} \subseteq P_{svt}$ be the subsection of $P_{svt}$ that has been checked for local optimality\;
			\ForEach{pair $(s',t') \in \mathcal{P}$}{
				\If{$P_{xvy} \subseteq P_{s'et'}$ \And $\gamma \cdot \l{P_{s'vt'}} \leq \l{P_{svt}}$}{
					Add $P_{s'vt'}$ to $R$\;
					Remove $(s',t')$ from $\mathcal{P}$\;
				}
			}
		}
	}
}
\Return $R$\;

\caption[Testing whether the potentially admissible paths are approximately
$\alpha$-relative locally optimal]{Testing whether the potentially admissible paths are approximately
$\alpha$-relative locally optimal. \label{alg:Multi-LO-test}}
\end{algorithm}

\subsubsection{Optimization: using previous shortest path queries to determine locally
optimal subsections}

The outlined speedups become even more effective if the results of
individual shortest path queries are reused. Therefore, we save all
vertex pairs $\left(u,w\right)$ for which we know that $P_{uvw}=P_{uw}$.
 Note that we do not have to save unsuccessful shortest path tests,
because all v-paths $P_{\tilde{s}v\tilde{t}}$ with $P_{uvw}\subseteq P_{\tilde{s}v\tilde{t}}$
will be rejected right after $P_{uvw}$ has been found to be suboptimal
(see section \ref{subsec:LO-results-for-many-pairs}).

\sloppy The gain obtained from reusing shortest path results decreases
as the considered paths become longer. Since we are considering paths
in increasing order of lengths, the lengths of the subsections that
are required to be optimal increase as well. Therefore, the results
of earlier shortest path queries are of limited value if they are
only used as a lookup table.

However, we can exploit that due to the $\delta$-approximation, the
shortest path queries in the $\T[\delta]$-test typically consider
sections longer than required. The $\T[\delta]$-test conducts shortest
path queries between vertices $u$ and their partners $w:=\partner[t]u{\delta T}$.
Choosing $\delta>1$ not only reduces the number of necessary shortest path
queries but also makes the algorithm reject admissible paths. Therefore,
a test that sets $w:=\partner[t]u{\tau}$ for some $\tau\in[T,\delta T]$
will perform at least as well as the original algorithm.

With this observation, we can reuse previous shortest path results
as follows: when we search for the partner $w:=\partner[t]u{\delta T}$
of a vertex $u$, we test for all intermediate visited vertices $\tilde{w}:=\partner[t]u{\tau}$
with $\tau\leq\delta T$ whether the subpath $P_{uv\tilde{w}}$ is
known to be optimal. If such a vertex $\tilde{w}$ is found and $\tau\geq T$,
we accept $\tilde{w}$ as the partner of $u$ and progress as usual.

\subsection{Preprocessing}

Before REVC can be applied, a preprocessing step is required.  If
the set of origins and destinations of interest is known a priori,
we may start by reducing the graph by deleting dead ends that do not
lead to any of the considered origins and destinations.  In a second
step, we may add a random perturbation to the edge lengths to make
it easier to identify identical paths based on their length. As the
road costs (length, travel time, or other) are usually known with
limited precision, small perturbations will typically not change the
results significantly.

After these preparation steps, we can follow the preprocessing algorithm
by \citet{raman_reach_2006}. The algorithm determines upper bounds
on the reaches of vertices. To gain efficiency, the algorithm introduces shortcut
edges, which may bias the results so that admissible paths are falsely
rejected. However, it is easy to impose a length constraint on the
shortcut edges to reduce the introduced error. If REVC is applied
to a set of origins and destinations known in the preprocessing phase,
vertices bypassed by shortcut edges can be removed completely from
the graph. This increases the efficiency further.

The preprocessing step concludes with computing the shortest distances
between all origins and destinations. This can either be done with
individual shortest path queries for all origin-destination combinations
or in a single effort involving only one shortest path tree per origin-destination
pair. Either way, this step usually does not add significantly to
the algorithm's overall runtime. If the origins and destinations are
not known at the reprocessing time, this step can be postponed to
the execution of REVC.

\section{Tests}

To test the performance of REVC and to assess how input parameters
and the introduced optimizations affect the results and the computational
efficiency, we applied REVC to random route finding scenarios. To
gain insights into the algorithms' validity in modelling applications,
we tested how well the resulting paths are suited to predict observed
traveller behaviour. Below we provide details about our implementation
of REVC and the applied test procedures. Afterwards we present the
test results.

\subsection{Implementation}

We implemented REVC in the high-level programming language Python
(version 3.7) in combination with the numerical computing library
Numpy (version 1.16) and the software Cython (version 0.29), which
we used in particular to build a C extension for the shortest path
search. Despite our efforts to reduce bottle necks via C extensions,
a low-level implementation of REVC can be expected to be faster by
orders of magnitude. We computed shortest paths with the algorithm
RE \citep{raman_reach_2006}. The code used in this paper can be retrieved
as package ``lopaths'' from the Python Package Index (see \href{https://pypi.org/project/lopaths/}{pypi.org/project/lopaths}).
We executed our code in parallel on a Linux server with an Intel Xeon
E5-2689 CPU ($20$ cores with $3.1\,\text{GHz}$) and with $512\,\text{GB}$
RAM.

\subsection{Test methods}

\subsubsection{Test graph}

We tested REVC by applying it to a road network modelling the Canadian
province British Columbia (BC). The graph had $1.36$ million vertices
and $3.16$ million edges weighted by travel time. When we preprocessed
the graph, we limited the length of shortcut edges to $20\,\text{min}$,
which was less than $3\%$ of the mean shortest travel time between
the considered origins and destinations. For the empirical tests,
we joined the British Columbian road network with a graph representation
of the North American highway network. This additional network had
$2$ thousand vertices and $5.6$ thousand edges. 

\subsubsection{The effect of input parameters on the results and computation time}

We used a Monte Carlo approach to assess the effect of different input
parameters on the performance and the results of REVC. Specifically,
we considered the local optimality constant $\alpha$, the length
factor $\beta$, the approximation parameters $\gamma$ and $\delta$,
and the numbers of origins and destinations. We randomly generated
$10$ route finding scenarios ($20$ for tests on $\gamma$ and $\delta$)
and computed the mean and standard deviation of the results.

For each of these scenarios, we selected the origin and destination
locations randomly from the graph's vertices. We generated $10$ (+$10$
for tests on $\gamma$ and $\delta$) sets of origins and destinations,
which we reused for each assessed parameter combination to reduce
random influences on the results. When we varied the numbers of origins
and destinations, we increased the origin and destination sets as
necessary.

To measure the performance of the algorithm, we noted its total execution
time and the execution time per resulting path. Furthermore, we determined
the slowdown factor \citep[see][]{abraham_alternative_2013}, denoting
the ratio between the execution time of REVC and the corresponding
pair-wise shortest path search. In contrast to the execution time,
the slowdown factor is not strongly affected by the implementation
and hardware, since both REVC and the shortest path queries are run
with the same software on the same machine. Therefore, the slowdown
factor may be a more meaningful performance measure than the execution
time.

Note that it is possible to execute shortest path queries between
many origin-destination pairs in linear time of the origins and destinations
\citep{kliemann_route_2016}. However, the pair-wise approach used
to compute the slowdown factor provides a better comparison to pair-based
algorithms used in route choice modelling. Therefore, we applied the
pair-wise approach.

For a general assessment of the resulting paths, we determined the
average number and distribution of identified approximately admissible
paths and the mean length of these paths. These metrics may provide
hints on which parameter combinations are suitable in different modelling
applications. 

\subsubsection{Assessment of optimizations}

To assess the importance of the different optimizations we introduced
to make REVC computationally efficient, we executed the algorithm
repeatedly with different optimization steps disabled. We examined
the optimizations of (1) the growth bound for the shortest path trees,
(2) the tree pruning procedure, (3) the elimination of identical paths,
(4) the joint local optimality tests for multiple paths, and (5) reusing
shortest path query results. We applied the same randomized test procedure
as outlined in the previous section and executed the algorithm
with local optimality constant $\alpha=0.2$, length factor $\beta=1.5$,
and optimization constants $\gamma=0.9$ and $\delta=1.1$. We determined
the algorithm's execution time after disabling one optimization at
a time and computed the resulting relative changes in computation
time as compared to the fully optimized algorithm. To examine the
importance of the optimizations on different problem scales, we repeated
the tests with different numbers of origins and destinations. 

Disabling the respective considered optimizations was done as follows. 
(1) We examined the role of the optimized shortest path tree
growth bound by resetting the bound to the naive value $\beta\cdot\maxo{t\in\dests}\l{P_{st}}$
with origin set $O$, destination set $D$, and $\left(s,t\right)\in O\times D$.
(2) We tested the importance of the optimized pruning procedure in
two steps. First, we pruned only vertices $v$ with $\reach v<\alpha\cdot\cost v/2$
(see \citealp{abraham_alternative_2013}). Second, we used our stronger
pruning condition (equation (\ref{eq:reach-bound-1})) but refrained
from pruning even more vertices when growing backward shortest path
trees (equation (\ref{eq:reach-bound-2})). (3) We examined a simplified
algorithm to eliminate identical paths as well as skipping this step
completely. To simplify the algorithm, we skipped the step of eliminating
vertices that represent the same v-paths as their neighbours (section
\ref{subsec:Eliminating-similar-paths-neighbours}). (4) To test the
significance of joint local optimality tests of multiple paths, we
tested each route individually. (5) We examined the gain from reusing
shortest path query results by running the algorithm without saving
these results. 

\subsubsection{Empirical tests}

To test the empirical validity of the generated route choice sets,
we used data from road-side surveys in which travellers were surveyed
for their origins and destinations. Based on these data, we determined
which of the survey locations were passed frequently by travellers
driving between certain origins and destinations. By this means, we
obtained for each considered origin-destination pair a set of intermediate
locations where travellers were observed (called \emph{observed positive})
and a set of locations where travellers were not observed (called
\emph{observed negative}). If travellers choose admissible routes
as hypothesized, then the ``observed positive'' locations will be
on admissible routes for some reasonable parameters $\alpha$ and
$\beta$, and the ``observed negative'' locations will be on inadmissible
routes.

To see whether this was the case for our empirical observations, we
applied REVC to compute admissible routes between the considered origins
and destinations and classified all survey locations on admissible
routes as \emph{predicted positives}. The remaining survey locations
were considered \emph{predicted negatives}. Then, we determined (1)
the \emph{true positive rate}, i.e. the fraction of ``observed positive''
survey locations that were also ``predicted positive'', and (2)
the \emph{false positive rate}, the fraction of ``observed negative''
locations that were ``predicted positive''. We repeated this procedure
for different values of the local optimality constant $\alpha$ and
plotted the true positive rates against the false positive rates.
The resulting curve is the so-called receiver operating characteristic
(ROC), which is a widely used tool to assess the performance of classification
algorithms \citep{hosmer_applied_2013}. The area under the curve
(AUC) is a measure for the overall performance of the classifier \citep{hanley_meaning_1982}.
Large AUC values correspond to large true positive rates and small
false positive rates and thus indicate a good discrimination of positive
and negative observations. Since the set of admissible routes depends
not only on the local optimality constant $\alpha$ but also on the length
factor $\beta$, we computed the ROC and AUC for different values
of $\beta$. 

In addition to the ROC and AUC, we also determined how small local
optimality constant $\alpha$ must be chosen to cover $95\%$ or $100\%$ of the
positive observations. This result is of particular interest if the
admissible routes are filtered further before they are used in route
choice models. In this case, the false positive rate is of minor concern,
and the goal is to identify as many used routes as feasible. 

We based our analysis on survey data collected at watercraft inspection
stations in British Columbia in the years 2015 and 2016. These inspection
stations are set up to prevent human-mediated spread of aquatic invasive
species, and all road travellers transporting watercraft are required
to stop at these locations. We considered all inspection locations
where more than $50$ survey shifts were conducted in total. The mean
survey shift length at these $12$ locations was about $7$ hours.
Travellers were surveyed for the origin and destination waterbody
of their watercraft and nearby cities. To ensure that the traffic
was sufficiently dense to distinguish frequently used routes from
others, we considered origin-destination pairs for which more than
$50$ travellers were observed in total. This were $13$ pairs with
$5$ different origins and $7$ different destinations. The origins,
destinations, and survey locations are displayed in Figure \ref{fig:Origins-destinations-sampleLoc}.

To discern which survey locations were located on commonly used routes,
we used a threshold value for the mean number of observed travellers
per survey shift. Locations were classified as ``observed positive''
for an origin-destination pair if and only if the corresponding mean
traveller count exceeded the threshold value. Using a threshold value
has two advantages, namely (1) to reduce the potential bias resulting
from differing survey effort at different survey locations, and (2)
to reduce noise due to possible sampling error and travellers with
highly uncommon behaviour. To assess the impact of the threshold value
on the results, we considered different threshold values ranging between
a small positive value $\epsilon>0$ (any traveller observation results
in a positive classification) and $3$ observations per $100$ inspection
shifts. Depending on the threshold value, the number of ``observed
positive'' survey locations per origin-destination pair ranged between
$2.23$ and $4.15$, and the mean count of distinct origin-destination
pairs observed per survey location ranged between $2.42$ and $4.5$,
with one survey location not being passed by any traveller of interest.
\begin{figure}
\begin{centering}
\includegraphics{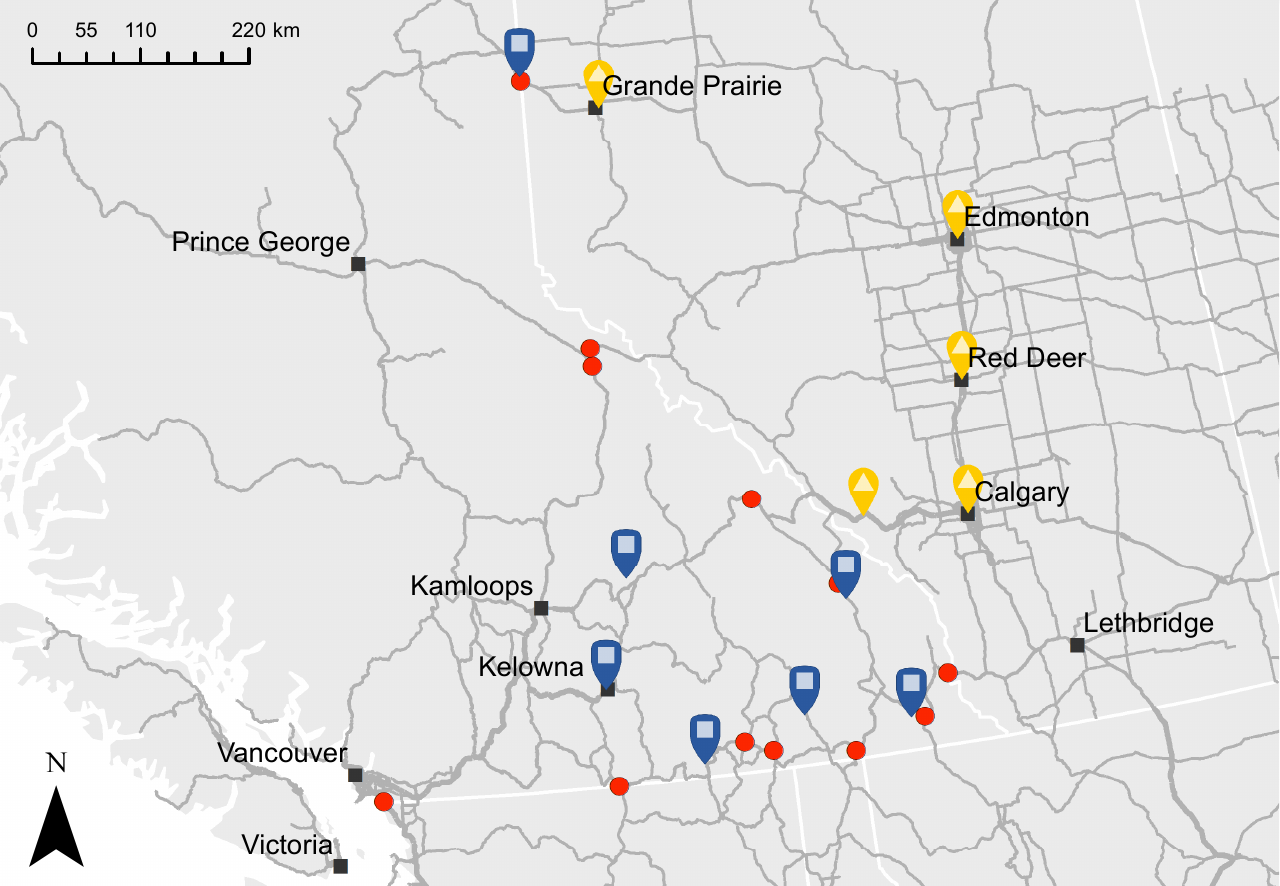}
\par\end{centering}
\caption{Considered origins, destinations, and sampling points. The origins
are shown as yellow markers with triangles and the destinations as
blue markers with squares. The traffic survey locations are depicted
as red circles. \label{fig:Origins-destinations-sampleLoc}}

\end{figure}

\subsection{Test results}

Below we provide the results of our tests. First, we focus
on the general results from the randomized experiments before we describe
the results of the tests involving empirical data.

\subsubsection{The effect of input parameters on the results and computation time}

\begin{figure}
\begin{tabular}[t]{>{\raggedleft}p{1cm}>{\centering}p{0.2\textwidth}>{\centering}p{0.2\textwidth}>{\centering}p{0.2\textwidth}>{\centering}p{0.2\textwidth}}
\begin{turn}{90}
\end{turn} & \raggedleft{}\textsf{\footnotesize{}\hspace*{0.8cm}}%
\begin{minipage}[t]{3cm}%
\begin{center}
\textsf{\footnotesize{}A}
\par\end{center}%
\end{minipage} & \raggedleft{}\textsf{\footnotesize{}\hspace*{0.8cm}}%
\begin{minipage}[t]{3cm}%
\begin{center}
\textsf{\footnotesize{}B}
\par\end{center}%
\end{minipage} & \raggedleft{}\textsf{\footnotesize{}\hspace*{0.8cm}}%
\begin{minipage}[t]{3cm}%
\begin{center}
\textsf{\footnotesize{}C}
\par\end{center}%
\end{minipage} & \raggedleft{}\textsf{\footnotesize{}\hspace*{0.8cm}}%
\begin{minipage}[t]{3cm}%
\begin{center}
\textsf{\footnotesize{}D}
\par\end{center}%
\end{minipage}\tabularnewline
\begin{turn}{90}
\textsf{\footnotesize{}\hspace*{0.4cm}}%
\begin{minipage}[t]{3cm}%
\begin{center}
\textsf{\footnotesize{}Execution time {[}$\text{s}${]}}
\par\end{center}%
\end{minipage}
\end{turn} & \textsf{\footnotesize{}\includegraphics[width=0.22\textwidth]{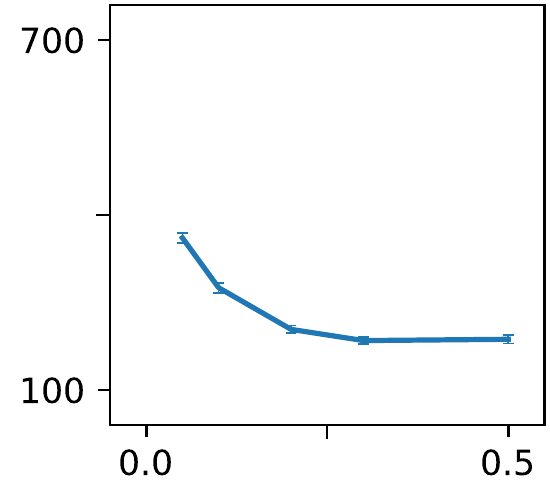}} & \textsf{\footnotesize{}\includegraphics[width=0.22\textwidth]{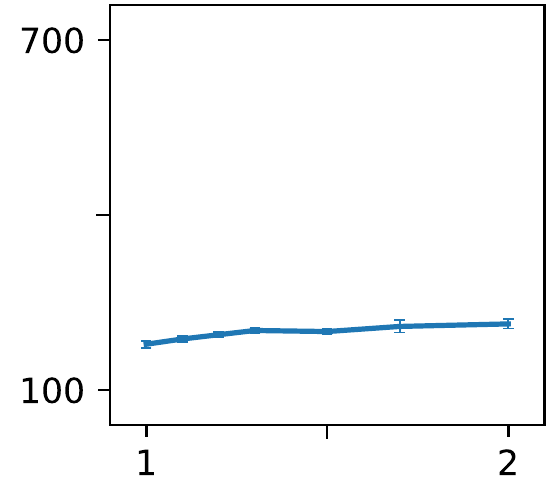}} & \textsf{\footnotesize{}\includegraphics[width=0.22\textwidth]{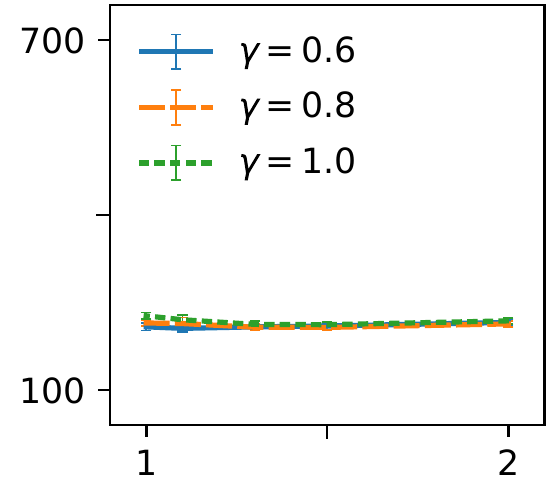}} & \textsf{\footnotesize{}\includegraphics[width=0.22\textwidth]{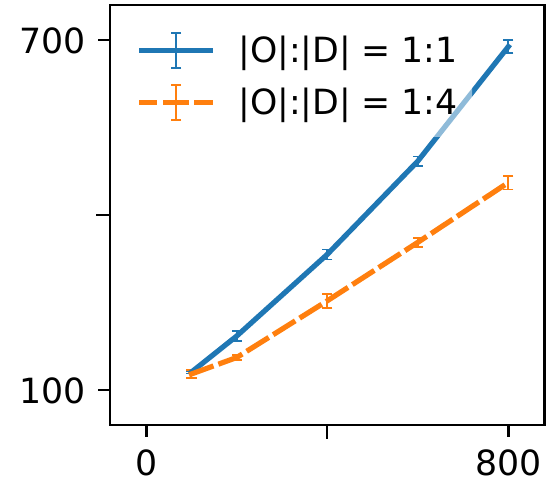}}\tabularnewline
\begin{turn}{90}
\noindent \raggedleft{}\textsf{\footnotesize{}\hspace*{0.4cm}}%
\begin{minipage}[t]{3cm}%
\begin{center}
\textsf{\footnotesize{}Time per path {[}$\text{ms}${]}}
\par\end{center}%
\end{minipage}
\end{turn} & \textsf{\footnotesize{}\includegraphics[width=0.22\textwidth]{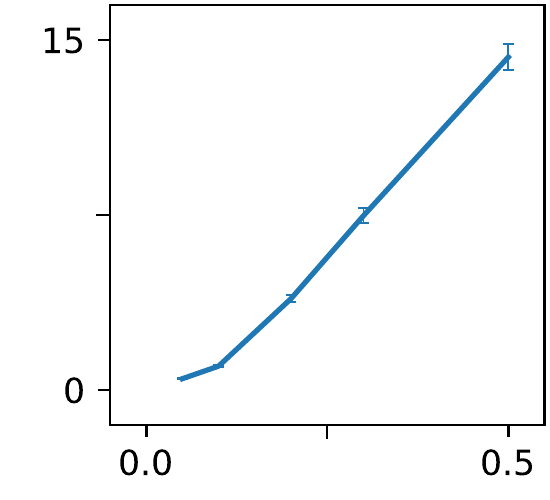}} & \textsf{\footnotesize{}\includegraphics[width=0.22\textwidth]{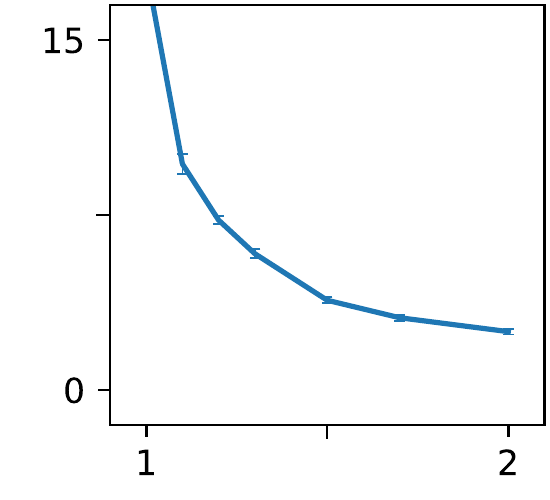}} & \textsf{\footnotesize{}\includegraphics[width=0.22\textwidth]{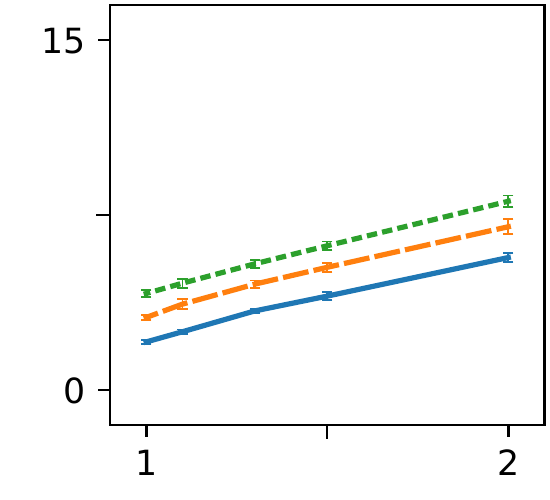}} & \textsf{\footnotesize{}\includegraphics[width=0.22\textwidth]{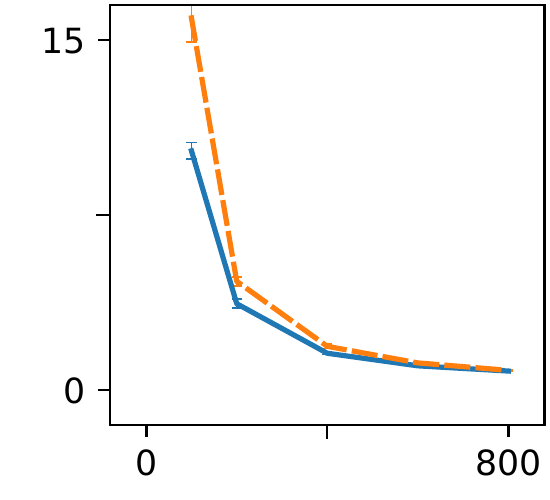}}\tabularnewline
\begin{turn}{90}
\noindent \raggedleft{}\textsf{\footnotesize{}\hspace*{0.4cm}}%
\begin{minipage}[t]{3cm}%
\begin{center}
\textsf{\footnotesize{}\vphantom {[} Slowdown factor}
\par\end{center}%
\end{minipage}
\end{turn} & \textsf{\footnotesize{}\includegraphics[width=0.22\textwidth]{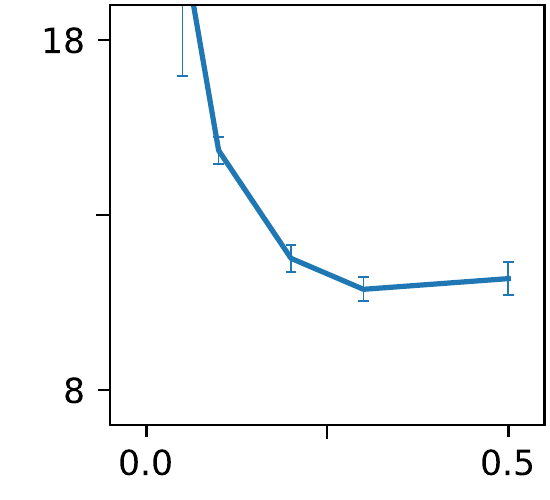}} & \textsf{\footnotesize{}\includegraphics[width=0.22\textwidth]{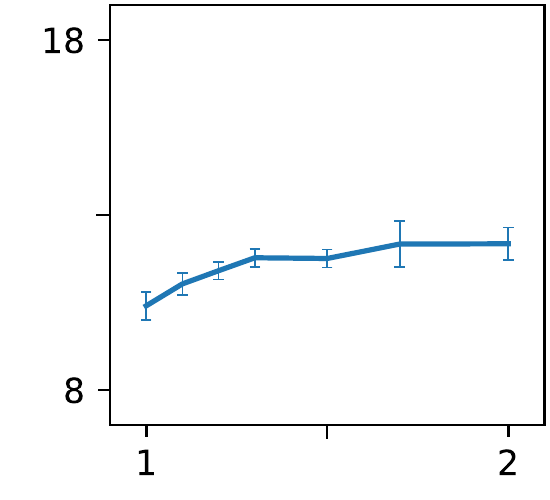}} & \textsf{\footnotesize{}\includegraphics[width=0.22\textwidth]{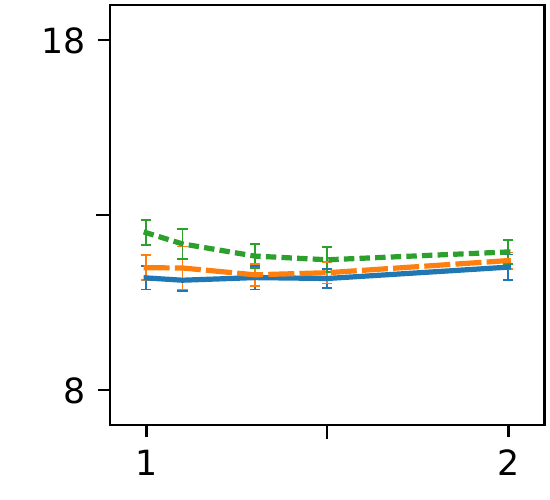}} & \textsf{\footnotesize{}\includegraphics[width=0.22\textwidth]{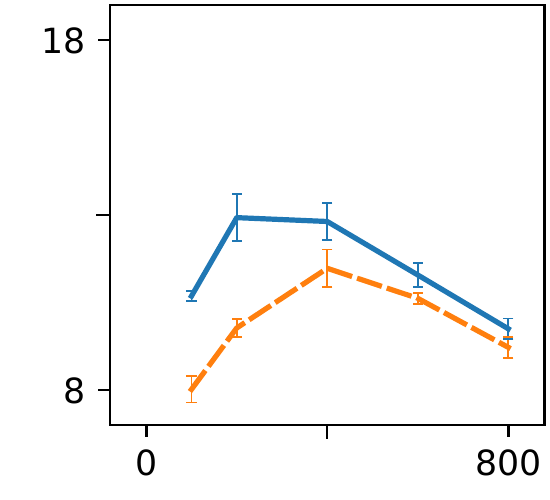}}\tabularnewline
\begin{turn}{90}
\textsf{\footnotesize{}\hspace*{0.4cm}}%
\begin{minipage}[t]{3cm}%
\begin{center}
\textsf{\footnotesize{}Resulting paths per orig.-dest. pair\vphantom {[} }
\par\end{center}%
\end{minipage}
\end{turn} & \textsf{\footnotesize{}\includegraphics[width=0.22\textwidth]{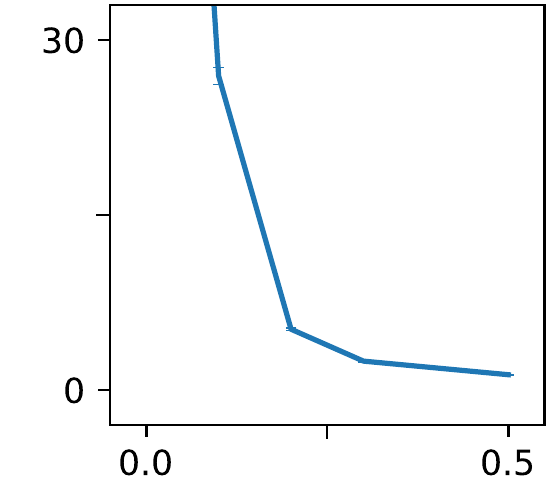}} & \textsf{\footnotesize{}\includegraphics[width=0.22\textwidth]{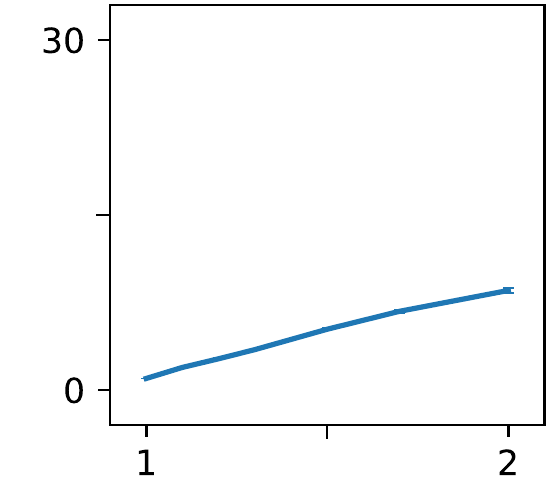}} & \textsf{\footnotesize{}\includegraphics[width=0.22\textwidth]{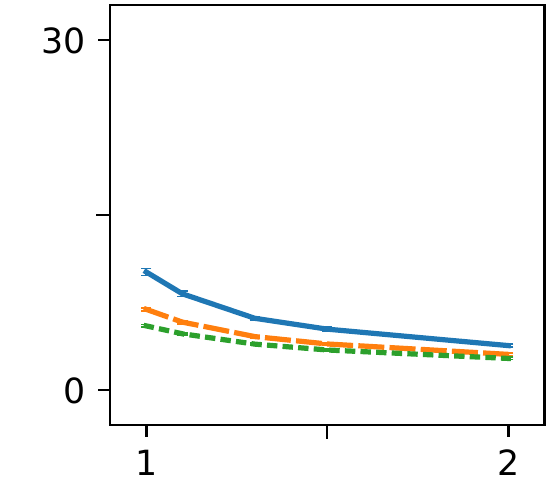}} & \textsf{\footnotesize{}\includegraphics[width=0.22\textwidth]{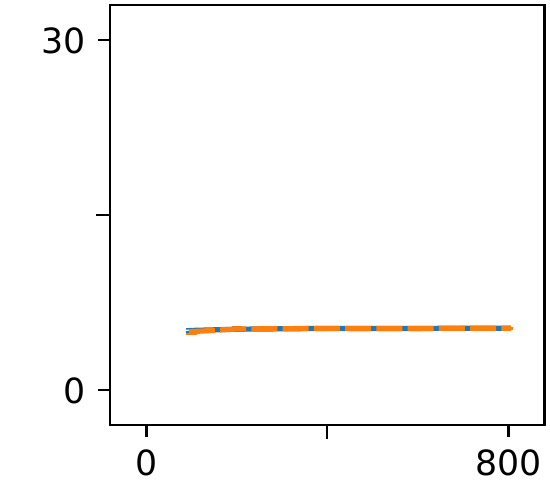}}\tabularnewline
\begin{turn}{90}
\textsf{\footnotesize{}\hspace*{0.4cm}}%
\begin{minipage}[t]{3cm}%
\begin{center}
\textsf{\footnotesize{}Mean path length {[}$\text{min}${]}}
\par\end{center}%
\end{minipage}
\end{turn} & \textsf{\footnotesize{}\includegraphics[width=0.22\textwidth]{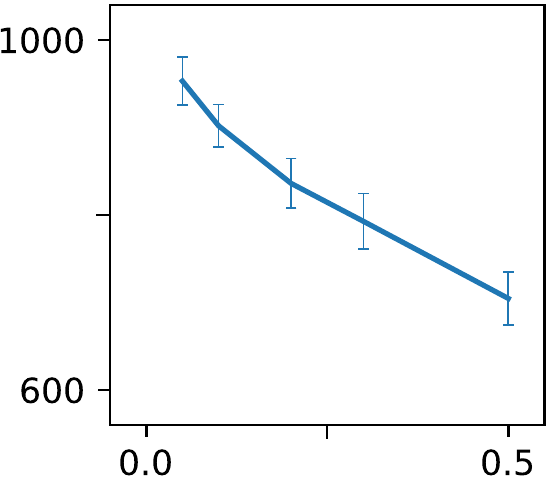}} & \textsf{\footnotesize{}\includegraphics[width=0.22\textwidth]{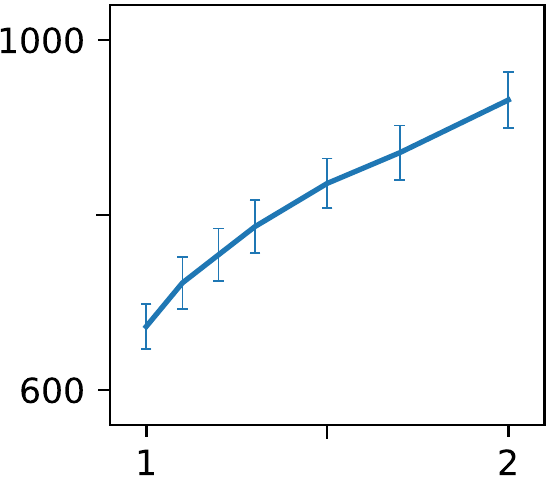}} & \textsf{\footnotesize{}\includegraphics[width=0.22\textwidth]{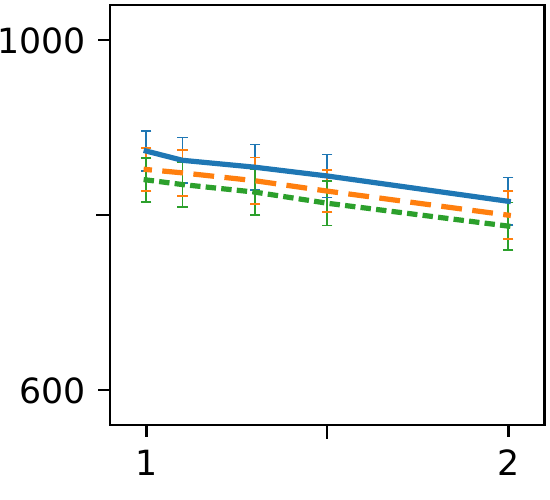}} & \textsf{\footnotesize{}\includegraphics[width=0.22\textwidth]{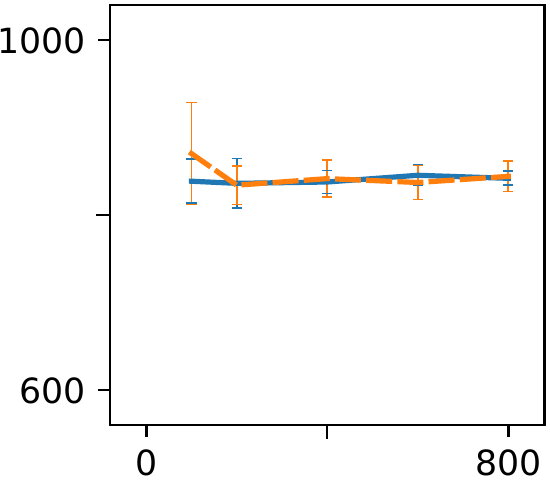}}\tabularnewline
\begin{turn}{90}
\end{turn} & \raggedleft{}\textsf{\footnotesize{}}%
\begin{minipage}[t]{2.6cm}%
\begin{center}
\textsf{\footnotesize{}Local optimality constant $\alpha$}
\par\end{center}%
\end{minipage} & \raggedleft{}\textsf{\footnotesize{}}%
\begin{minipage}[t]{2.5cm}%
\begin{center}
\textsf{\footnotesize{}Length factor $\beta$}
\par\end{center}%
\end{minipage} & \raggedleft{}\textsf{\footnotesize{}\hspace*{0.9cm}}%
\begin{minipage}[t]{3cm}%
\begin{center}
\textsf{\footnotesize{}Approximation constant $\delta$}
\par\end{center}%
\end{minipage} & \raggedleft{}\textsf{\footnotesize{}\hspace*{0.8cm}}%
\begin{minipage}[t]{3cm}%
\begin{center}
\textsf{\footnotesize{}Origins and desti- nations $\left|O\right|+\left|D\right|$}
\par\end{center}%
\end{minipage}\tabularnewline
\end{tabular}

\caption[Test results]{Test results. Different performance measures and result characteristics
are plotted against parameters. The whiskers depict the estimated
standard deviation. The line colours in column C correspond to different
values of the approximation constant $\gamma$. The line colours in
column D correspond to different ratios of origin number and destination
number. \protect \\
(Parameters unless specified otherwise: $\alpha=0.2$, $\beta=1.5$,
$\gamma=0.9$, $\delta=1.1$, $\left|O\right|=\left|D\right|=100$)
\label{fig:Test-results}}
\end{figure}

The results from the tests investigating the impact of the input parameters
on the algorithm's speed and results are displayed in Figure \ref{fig:Test-results}.
The constant $\alpha$, controlling the local optimality requirement,
had a strong influence both on the algorithm's running time and the
number of resulting paths. The effect of $\alpha$ on the execution
time levelled off at high values of $\alpha$. Decreasing $\alpha$
from $0.3$ to $0.05$ doubled the total execution time and reduced
the execution time per identified path by about factor $15$. In comparison,
increasing $\alpha$ from $0.3$ to $0.5$ had a minor effect only.
The mean number of paths followed a power law in $\alpha$ (exponent
$-1.84$). The length of the resulting paths decreased gradually as
$\alpha$ increased. An increase from $0.05$ to $0.5$ decreased
the mean length of admissible paths by about a quarter.

The parameter $\beta$, limiting the length of admissible paths, affected
the number and length of identified admissible paths but not the execution
time. The number of admissible paths increased almost linearly with
$\beta$; an increase of $0.1$ resulted in about $0.8$ additional
paths being found per origin-destination pair. Consequently, the execution
time per resulting path decreased with increasing $\beta$. The mean
lengths of the identified paths increased with their number. Raising
$\beta$ from $1$ to $2$ increased the mean path length by about
$40\%$.

The approximation parameters $\gamma$ and $\delta$ had little effect
on the execution time but a notable impact on the results. An increase
of $\gamma$ (increase in precision) consistently lengthened execution
times slightly. However, a decrease of $\delta$ (again, increase
in precision) \emph{reduced} the execution time per resulting path
and led to an optimal overall execution time at intermediate values
of $\delta$.

The number of identified paths varied more strongly than the execution
time when $\gamma$ and $\delta$ were changed. Dependent on the value
of $\delta$, decreasing $\gamma$ from $1$ to $0.6$ increased the
number of identified routes by $40\%$-$85\%$. Conversely, an increase
of $\delta$ from $1$ to $2$ decreased the number of identified
paths by more than $50\%$. The lengths of the resulting paths decreased
gradually both in $\gamma$ and $\delta$.

Changing the number of origins and destinations affected the execution
time but not the characteristics of the admissible paths. The execution
time increased almost linearly with the origin and destination number;
the slope depended on the origin to destination ratio. With
a ratio of $1:1$, the execution time increased by $80\,\text{s}$
per $100$ origins and destinations. With a ratio of $1:4$, the average
increase was $48\,\text{s}$ per $100$ origins and destinations.
The time per identified path and the slowdown factor decreased as
more origin and destination locations were added.

\begin{figure}
\captionsetup[subfigure]{position=top,justification=raggedleft,margin=0pt, singlelinecheck=off} 
\begin{centering}
\textsf{}\subfloat[]{\textsf{\includegraphics[scale=0.7]{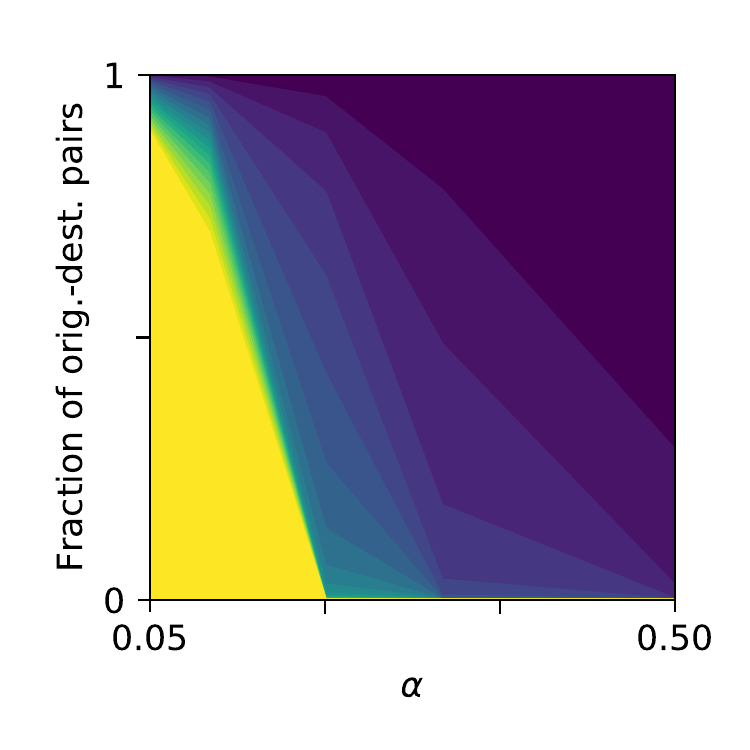}}}\textsf{\hspace*{2cm}}\subfloat[]{\textsf{\includegraphics[scale=0.7]{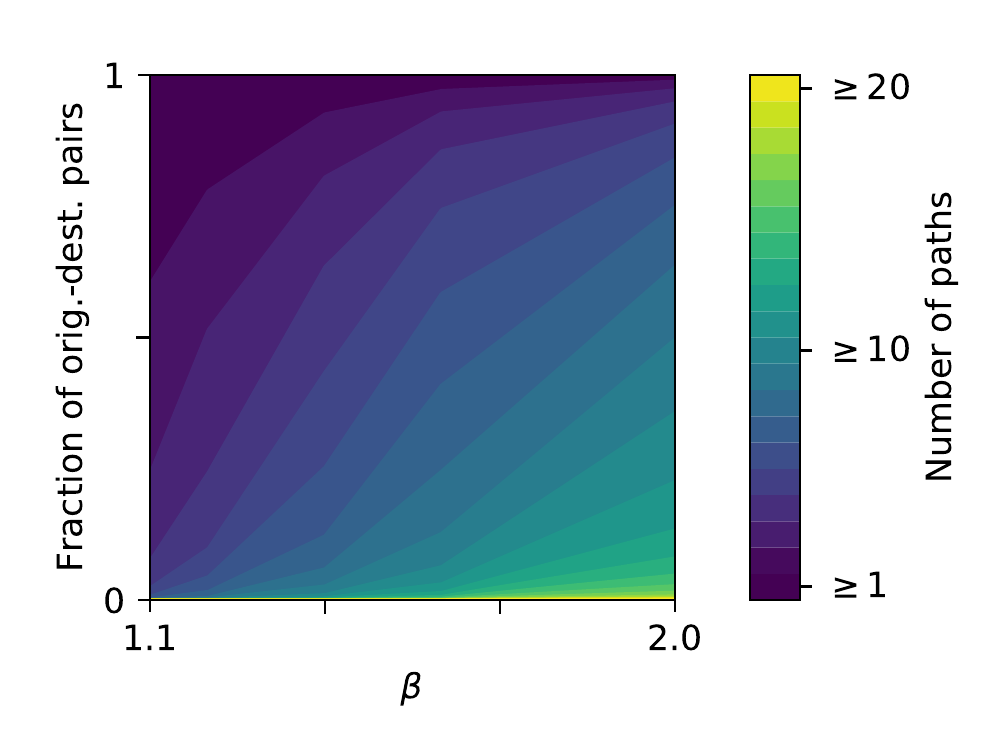}}}
\par\end{centering}
\caption[Distribution of paths dependent on the local optimality constant and
the length constant]{Distribution of paths dependent on (a) the local optimality constant
$\alpha$ and (b) the length constant $\beta$. The $y$-axis shows
which fraction of origin-destination pairs were connected by at least
the number of paths given by the colour.\label{fig:Distribution-of-paths}
The parameters are the same as in Figure \ref{fig:Test-results} column
A and B.}
\end{figure}

Figure \ref{fig:Distribution-of-paths} displays the distribution
of paths per origin-destination pair dependent on the local optimality
constant $\alpha$ and the length constant $\beta$. Many origin-destination
pairs are connected by numerous admissible paths if $\alpha$ is smaller
than $0.2$. For example, with $\alpha=0.1$ and $\beta=1.5$, about
three quarters of the origin-destination pairs were connected by more
than $20$ routes. In contrast, with $\alpha=0.3$, less than $0.7\%$
of the pairs were connected by more than $5$ paths, and $22\%$ of
the pairs were connected by the shortest path only. The latter fraction
increased to $72\%$ for $\alpha=0.5$.

The distribution of paths per origin-destination pair changed more
gradually with $\beta$. With $\alpha=0.2$, a large value of $\beta=2$
resulted in $99\%$ of the pairs being connected by multiple admissible
paths; $22\%$ were connected by more than $10$ paths. On
the other end of the spectrum, with $\beta=0.1$, $40\%$ of the origin-destination
pairs were connected by $1$ admissible path only and $0.6\%$ were
connected by more than $5$ admissible paths.

\subsubsection{Assessment of optimizations}

Assessing the role of the different optimizations we introduced to
make REVC more efficient, we obtained a broad spectrum of results,
shown in Table \ref{tab:Opt-Test-Results}. 

First, changing the growth bound for the shortest path trees had only
a small effect on the computation time. The changes in computation
time were smaller than the corresponding standard deviations. 

Second, growing the shortest path trees with less strict pruning procedures
increased the computation time by $11\%$-$56\%$. This the change
was particularly high when the number of origins and destinations
was imbalanced. There was only a small difference between disabling
all pruning optimizations and solely refraining from earlier pruning
in backward direction. 

Third, omitting the step of identifying vertices that represent the
same paths as their neighbours \emph{decreased} the computation time
by $2\%$-$8\%$. This effect was smaller the less balanced the numbers
of origins and destinations were. In contrast, refraining from any
identification of identical paths increased the computation time by
more than $70\%$ with a larger effect in scenarios with strongly
differing origin and destination numbers. 

Fourth, disabling the joint tests for local optimality led to large
changes in computation time (increase by factor $3$ up to factor
$11.8$). The increase was stronger the more origin-destination pairs
were considered. 

Lastly, stopping to reuse shortest path query results increased the
computation time moderately by $3\%$-$16\%$. The effect was strongest
when the number of origins was small and the number of destinations
large.

\begin{sidewaystable}
\begin{centering}
\footnotesize\renewcommand{\arraystretch}{1.7} \setlength\arrayrulewidth{0.5pt}%
\begin{tabular}{>{\raggedright}p{5.8cm}>{\centering}p{1.6cm}>{\centering}p{1.6cm}>{\centering}p{1.9cm}>{\centering}p{1.8cm}>{\centering}p{1.9cm}>{\centering}p{1.9cm}>{\centering}p{2cm}>{\centering}p{1.9cm}}
\toprule 
Disabled optimization & \multicolumn{4}{c}{Execution time {[}s{]} (standard deviation)} & \multicolumn{4}{c}{\% increase (standard deviation)}\tabularnewline
\raggedleft{}$\left|O\right|\times\left|D\right|$ & $50\times100$ & $100\times100$ & $200\times500$ & $50\times1000$ & $50\times100$ & $100\times100$ & $200\times500$ & $50\times1000$\tabularnewline
\midrule
\midrule 
None & $133$ ($4.1$) & $191$ ($6.6$) & $476$ ($15.0$) & $329$ ($11.5$) & -- & -- & -- & --\tabularnewline
\midrule 
Optimized shortest path tree height & $135$ ($4.0$) & $193$ ($5.8$) & $467$ ($10.6$) & $323$ ($11.4$) & $1.8$ ($4.3$) & $0.9$ ($4.6$) & $-1.9$ ($3.8$) & $1.8$  $(5.1$)\tabularnewline
\midrule 
Earlier pruning in backward direction & $158$ ($4.3$) & $213$ ($4.3$) & $558$ ($10.4$) & $493$ ($11.3$) & $18.5$ ($4.8$) & $11.4$ ($4.5$) & $17.3$ ($4.3$) & $52.5$  ($6.4$)\tabularnewline
\midrule 
All pruning optimizations & $158$ ($4.5$) & $216$ ($5.1$) & $556$ ($11.8$) & $505$ ($14.6$) & $18.8$ ($4.9$) & $13.3$ ($4.7$) & $16.9$ ($4.5$) & $56.3$  ($7.1$)\tabularnewline
\midrule 
Identifying neighbouring vertices representing identical paths & $123$ ($4.2$) & $177$ ($4.2$) & $452$ ($11.7$) & $317$ ($11.7$) & $-7.6$ ($4.2$) & $-7.1$ ($3.9$) & $-5.0$ ($3.9$) & $-2.1$ ($5.0$)\tabularnewline
\midrule 
Identifying identical paths & $237$ ($14.5$) & $331$ ($13.4$) & $887$ ($31.7$) & $683$ ($42.9$) & $77.9$ ($12.2$) & $73.4$ ($9.2$) & $86.4$ ($8.9$) & $111.5$  ($15.2$)\tabularnewline
\midrule 
Joint tests for local optimality & $400$ ($9.8$) & $735$ ($17.3$) & $5615$ ($105.3$) & $2849$ ($66.3$) & $200.2$ ($11.7$) & $284.8$ ($16.1$) & $1079.9$ ($43.4$) & $781.6$  ($37.3$)\tabularnewline
\midrule 
Reusing shortest path query results & $140$ ($4.9$) & $200$ ($9.3$) & $492$ ($9.9$) & $374$ ($13.0$) & $5.5$ ($4.9$) & $4.5$ ($6.1$) & $3.4$ ($3.9$) & $15.7$  ($5.7$)\tabularnewline
\bottomrule
\end{tabular}
\par\end{centering}
\caption{\label{tab:Opt-Test-Results}The impact that different optimizations
introduced with REVC have on the the algorithm's running time. For
each given optimization, the table displays (1) the running time of
REVC if this optimization were disabled and (2) the corresponding
relative increase in running time. The results are given for four
scenarios with different numbers of origins and destinations. The
standard deviations of the results are displayed in parenthesis. }
\end{sidewaystable}

\subsubsection{Empirical tests}

The ROC curves that we obtained for different length factors $\beta$
and classification thresholds are displayed in Figure \ref{fig:ROC}.
Quantitative results are given in Table \ref{tab:AUC-results}. For
large admissible length factors ($\beta\geq2$), the area under the
curve (AUC) constantly exceeded $0.9$. For smaller length factors
($1.3\leq\beta<2$), the AUC values were smaller but never below $0.78$. 

A moderate local optimality requirement of $\alpha=0.25$ sufficed
to cover $95\%$ of the positive observations regardless of how many
traveller observations were required for positive classification of
survey locations. In the scenario in which any observation sufficed
for positive classification of survey locations, one triplet of origin,
intermediate destination, and final destination was not covered by
any admissible path for the tested parameter values. In the remaining
scenarios, all ``observed positive'' locations were on admissible
routes for some $\alpha$ value. For a classification threshold of
$1$ observation per $100$ survey shifts, $\alpha$ had to be chosen
as low as $0.07$ to cover all positive observations. When the classification
threshold was large ($\geq2$ observations per $100$ survey shifts),
all ``observed positive'' locations were on $0.4$-relative locally
optimal paths. That is, these paths were optimal on all subsections
shorter than $40\%$ of the entire path. 
\begin{figure}
\captionsetup[subfigure]{position=top,justification=raggedleft,margin=0pt, singlelinecheck=off} 
\begin{centering}
\textsf{}\subfloat[]{\textsf{\includegraphics[scale=0.7]{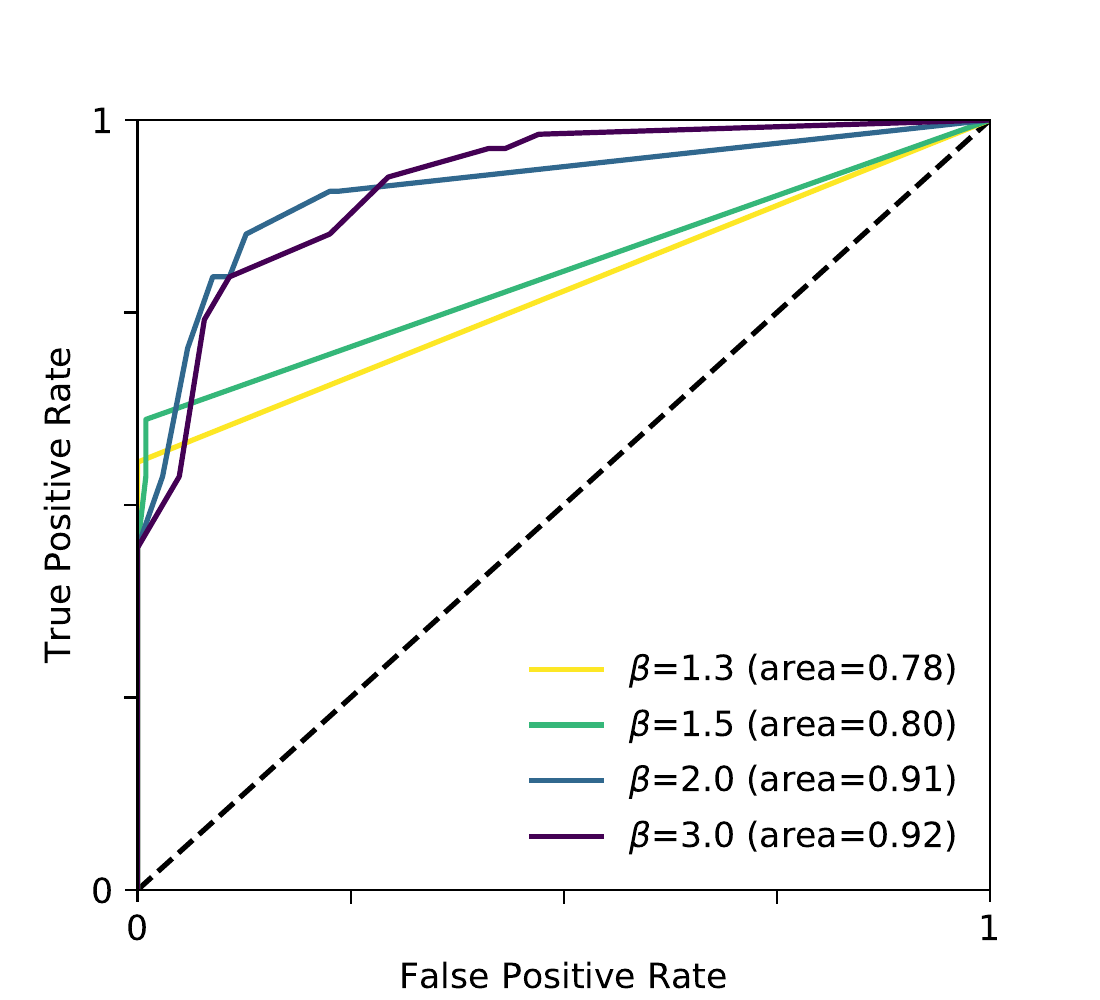}}}\textsf{\hspace*{2cm}}\subfloat[]{\textsf{\includegraphics[scale=0.7]{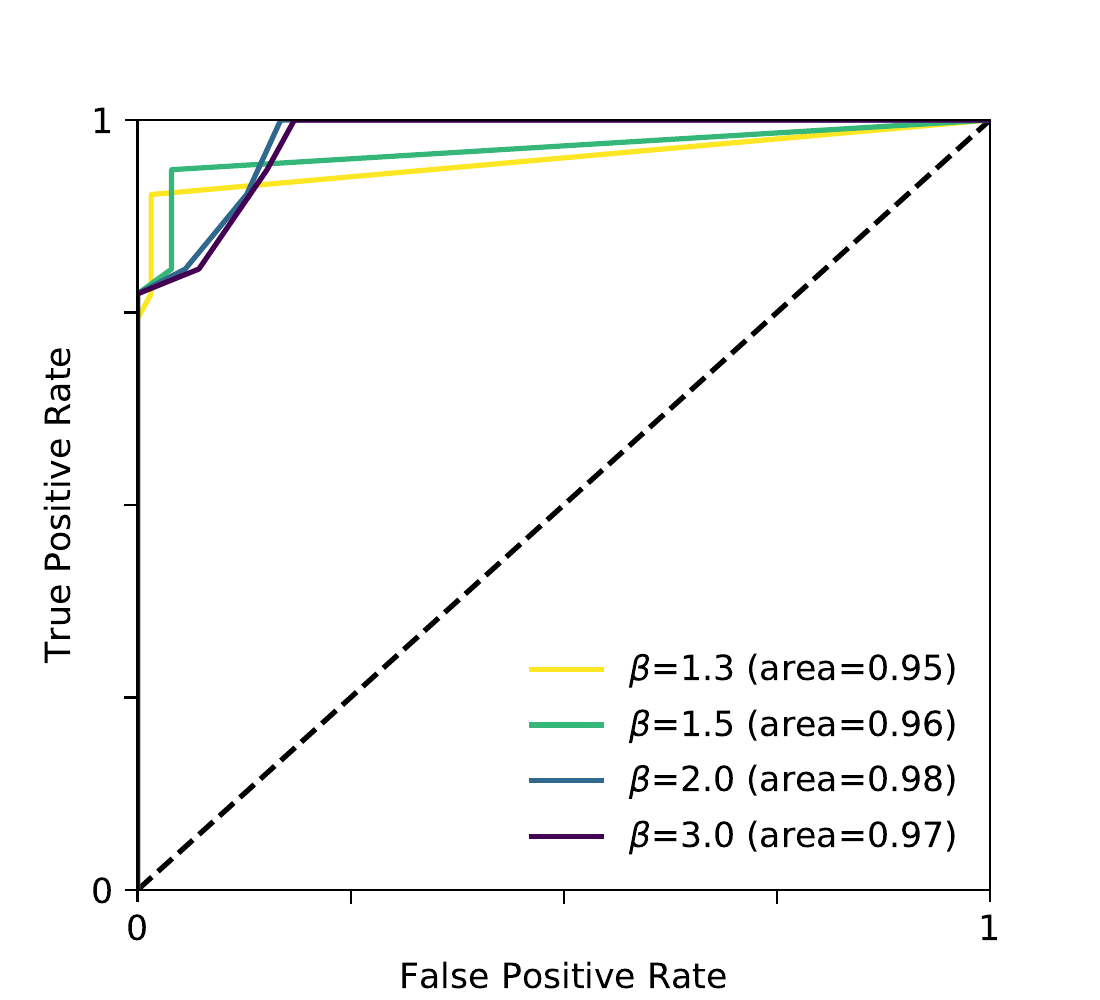}}}
\par\end{centering}
\caption{Receiver operating characteristic (ROC) curves for different length
factors $\beta$. In Subfigure (a), any location with observed travellers
was classified as ``observed positive'', whereas in Subfigure (b),
only locations with $2$ traveller observations per $100$ shifts
were classified ``observed positive''. In both scenarios, the locations
on admissible routes coincided strongly with the locations with positive
observations. This can be seen from the high true positive rates achieved
at the same time as small false positive rates. The dashed line shows
the performance of a hypothetical random classifier. A point with
true positive and false positive rate of $1$ was added to complete
the curve though these values did not occur in practice.\label{fig:ROC}}
\end{figure}

\begin{table}
\begin{centering}
\footnotesize\renewcommand{\arraystretch}{1.7} \setlength\arrayrulewidth{0.5pt}%
\begin{tabular}{>{\centering}p{3.2cm}>{\centering}p{1.5cm}>{\centering}p{1.4cm}>{\centering}p{3cm}>{\centering}p{3cm}>{\centering}p{3cm}}
\toprule 
Observed travellers per $100$ shifts required for positive classification & AUC 

($\beta=1.5$) & AUC 

($\beta=3$) & $\alpha$ required to cover $95\%$ of observed positives ($\beta=3$) & $\alpha$ required to cover all observed positives ($\beta=3$) & Fraction of ``observed positive'' locations on shortest routes\tabularnewline
\midrule
\midrule 
Any & $0.80$ & $0.92$ & $0.25$ & -- & $0.35$\tabularnewline
\midrule 
$1$ & $0.84$ & $0.94$ & $0.3$ & $0.07$ & $0.41$\tabularnewline
\midrule 
$2$ & $0.96$ & $0.97$ & $0.4$ & $0.4$ & $0.61$\tabularnewline
\midrule 
$3$ & $0.97$ & $0.97$ & $0.4$ & $0.4$ & $0.66$\tabularnewline
\bottomrule
\end{tabular}
\par\end{centering}
\caption{\label{tab:AUC-results}Classification results for different classification
thresholds. The AUC values are generally high and increase as more
traveller observations are required to classify a survey location
as ``observed positive''. The first column shows the traveller counts
per $100$ survey shifts required to classify a location as ``observed
positive''. The second and third column display AUC values obtained
with different path length constraints. The fourth and fifth column
contain the maximal value of the local optimality constant $\alpha$
for which $95\%$ or $100\%$ of the ``observed positive'' locations
were on admissible routes, respectively. The right-most column indicates
how many locations classified as ``observed positive'' were located
on the shortest routes between the respective origins and destinations. }
\end{table}

\section{Discussion}

We have introduced an algorithm that efficiently identifies locally
optimal paths between many origin-destination pairs and tested both
the algorithm's computational performance and its ability to predict
empirical traffic observations. Our algorithm REVC identifies all
approximately admissible routes between the origins and destinations,
and its execution time is driven by the number of distinct origins
and destinations rather than the number of origin-destination \emph{pairs}.
The empirical tests suggest that locally optimal routes are a suitable
tool to predict where travellers between specific origins and destination
are likely to be observed. These results combined indicate that REVC
is applicable in large-scale traffic models.

Our tests examining the impact of different input parameters on the
results and computation time show that REVC's performance depends
mostly on the local optimality constant $\alpha$ and the number of
origins and destinations. While the total execution time increases
with the number of considered origins and destinations and with reduced
$\alpha$, the execution time per identified path decreases. That
is, REVC becomes more efficient compared to repeated path queries
the more paths are generated.

The length bound $\beta$ had only a minor effect on the execution
time. This may be surprising, as an increase in $\beta$ allows more
vertices to be included in the shortest path trees. However, the impact
of $\beta$ is reduced by our pruning technique, which is most effective
for long paths. Furthermore, large parts of the graph had been scanned
for small values of $\beta$ already, since the considered origins
and destinations were distributed over the entire graph. Therefore,
few additional vertices were considered with increased $\beta$.

The effect of $\beta$ may be larger if all origin and destination
locations are located within a small subsection of the graph. Nonetheless,
in many modelling applications, the origin and destination locations
will be distributed over the whole considered road network. For example,
when the traffic from the outskirts of a city to downtown is modelled,
it is unlikely that travellers leave the greater metropolitan area.
Therefore, it is reasonable to consider an accordingly constrained
graph.

REVC applies approximations to gain efficiency. However, the approximation
constants had relatively small effects on the algorithm's performance
in our tests. This suggests that approximations may not always be
necessary. However, the benefit of the approximations will become
larger if the origin and/or destination vertices are not randomly
spread over the whole graph but located in constrained areas. Then,
partial results can be reused more effectively. As the admissibility
checks were responsible for a limited portion of the overall execution
time only, the gain of the approximations will also become more significant
if more paths have to be checked for local optimality.

An interesting observation is that intermediate values of the approximation
constant $\delta$ led to lower execution times than large values.
This is surprising, because smaller values of $\delta$ increase the
number of shortest path queries required in the $\T[\delta]$-test.
However, small values of $\delta$ have the advantage that the subsections
checked for local optimality get shorter. This makes it more likely
that test results can be reused to reject many inadmissible paths
at once. In point to point queries, the $\T[2]$-test (used by \citealp{abraham_alternative_2013})
may still be superior.

The tests evaluating the importance of the different optimizations
introduced with REVC resulted in a heterogeneous picture. The most
important innovation of REVC was the joint local optimality test of
many paths. This result was expected, since separate tests must consider
each origin-destination pair individually, thus making the algorithm's
runtime strongly dependent on the number of origin-destination pairs. 

Another significant speedup was obtained by rejecting identical paths
prior to local optimality checks. However, identifying and neglecting
vertices that represent the same v-paths as their neighbours decreased
the algorithm's efficiency despite having a positive effect on the
asymptotic runtime. This was due to our implementation of REVC, where
the computation time required to identify paths with identical lengths
was dominated by the number of origin-destination pairs rather than
the number of paths. Though omitting the first path comparison step
can apparently speed up the algorithm, the procedure can prove useful
if the origins and destinations are spatially separated, which allows
more vertices to be rejected in this step. 

A moderate speedup was gained by improving the pruning procedure applied
during the shortest path tree growth. Here, earlier pruning in
backward direction turned out to be the most important optimization.
Disabling this improvement only had almost the same effect as disabling
all pruning optimizations. This is because early pruning reduces shortest
path trees by a complete layer of leaf vertices that may need to be
considered in computationally expensive $\T[\delta]$-tests otherwise.

Reusing the results of shortest path queries had a small but notable
effect on computational efficiency. The efficiency gain is highest
if many v-paths via a vertex share subsections. This happens if origins
and destinations are spatially separated or if the numbers of origins
and destinations are imbalanced. Note, however, that reusing shortest
path query results increases the number of optimality checks and thus
facilitates the accuracy of the results.

The optimization of the shortest path tree growth bound had a minor
effect only. This result is in line with the small impact that the
length factor $\beta$ had on the computation time, and the explanation
for the result is similar. Consequently, the optimized tree growth
bound will become more important if all origins and destinations are
located in a small part of the considered graph. 

Besides assessing the computational performance of REVC, we also tested
the empirical validity of the computed routes. The tests showed that
the paths returned by REVC allow precise predictions of where individuals
travelling between given origins and destinations can be observed.
Typically, predictors with AUC values exceeding $0.8$ are considered
excellent and those with AUC values exceeding $0.9$ outstanding \citep{hosmer_applied_2013}.
The large AUC values we obtained, consistently greater than $0.9$
for $\beta=3$, suggest that local optimality can be a helpful criterion
to discriminate used roads from unused roads -- and thus to characterize
route choice sets. Though no tracking data were available to us that
would have allowed us to assess the overlap between observed and computed
routes, our survey points and data were sufficiently heterogeneous
to give significant insight into the validity of locally optimal routes
in modelling applications. 

\subsection{Significance}

Determining multiple paths between an origin and a destination based
on a local optimality criterion is a well established approach in
route planning research \citep{abraham_alternative_2013,delling_customizable_2015,luxen_candidate_2015,kliemann_route_2016}.
An obstacle hindering the application of these algorithms in route
choice models was that these algorithms return only few heuristically
chosen paths rather than the complete set of admissible paths. Furthermore,
these algorithms are based on an inflexible approximation whose impact
on the result was not exactly known. Our algorithm REVC solves these
issues. Though REVC may not be competitive in point to point queries,
the algorithm efficiently exploits redundancies occurring when many
origin-destination pairs are considered.

Generating route choice sets based on local optimality has multiple
advantages. The underlying principle is simple and has a sound mechanistic
justification. The optimality principle is applied on a local scale,
whereas the mechanisms governing travellers' overall route choices
do not need to be known. Therefore, no extensive data sets are needed
to generate choice sets. In addition, our empirical test results suggest
that local optimality is indeed a suitable criterion to distinguish
used roads from unused roads, yielding a high coverage of actual observations
and a low rate of false positive predictions.

Fitting the choice set parameters to data is a discrete optimization
problem and can therefore be challenging. REVC permits two free variables:
the local optimality parameter $\alpha$ and the length parameter
$\beta$. As the latter does not have a strong impact on the execution
time, $\beta$ can be chosen liberally, leaving $\alpha$ as the only
remaining free parameter. Optimizing $\alpha$, in turn, is comparatively
easy, as this is a one-dimensional problem.

Choice sets consisting of locally optimal v-paths are typically relatively
small while still covering a broad spectrum of different routes \citep[see ][]{abraham_alternative_2013}.
This agrees with our empirical tests, where a low rate of false positive
predictions was achieved at the same time as a high true positive rate.
The high specificity of local optimal routes allows for sophisticated
models for the second route choice step, in which travellers select
routes from the choice sets. The option to use sophisticated metrics
to measure the quality of the route candidates may improve the overall
model fit. In addition, using small choice sets also reduces a bias
observed in route choice models when many insignificant routes are
present in choice sets \citep{bliemer_impact_2008}. 

The favourable quality to quantity ratio of locally optimal v-paths
and the practically linear relationship between execution time and
origin and destination numbers make REVC particularly useful in comprehensive
traffic models. In such applications, many origin-destination pairs
have to be considered, and the computed choice sets need to be kept
in memory for further processing. This makes it difficult to apply
methods based on point to point queries, such as link elimination
\citep{azevedo_algorithm_1993}, link penalty \citep{de_la_barra_multi-dimensional_1993},
or constrained enumeration methods \citep{prato_applying_2006}. Similar
challenges face algorithms that need to generate many paths, such
as stochastic approaches or methods that include a filtering step
to select admissible paths from a large number of candidates \citep[see][]{bovy_modelling_2009}.
Therefore, REVC may be of specific use in comprehensive models.

The results of REVC provide insights into the distribution and properties
of locally optimal routes in real road networks. In our tests, the
number of admissible paths decreased with $\alpha$ in a power law
relationship, whereas it increased linearly in $\beta$. Such experimental
results could be the starting point for a more in-depth theoretical
analysis of the distribution of locally optimal routes in road networks.
The resulting insights may facilitate the development of new algorithms.

The experimental results are also valuable as benchmarks for existing
algorithms searching locally optimal v-paths for route planning purposes
\citep{abraham_alternative_2013,hutchison_alternative_2013,luxen_candidate_2015}.
Some of these algorithms apply approximations to gain efficiency.
The presented results can help to assess the impact of these approximations.
Our results suggest that the applied $\T[2]$-approximation falsely
rejects half of the admissible paths.

In addition to assessing the accuracy of faster algorithms, the complete
sets of admissible paths generated with REVC can also be used to evaluate
the success rate and the quality of the paths generated with these
algorithms. Note, however, that our definition of admissible paths
deviates slightly from the definition applied in earlier papers. Refer
to Appendix \ref{sec:APX-Comparison-of-REV-REVC} for details.

REVC contains several optimizations that can be directly applied to
make the family of algorithms based on REV more efficient. These optimizations
include the improved bounds for tree growth and pruning as well as
the idea to exclude u-turn paths by considering via edges. Similarly,
the $\T[\delta]$-test can be directly applied to increase the accuracy
of all algorithms using the $\T$-test. Our randomized tests can be
used to assess the benefit gained from the different optimizations.
Hence, this paper may also contribute to make route planning software
more efficient. We provide a more in-depth discussion in Appendix
\ref{sec:APX-Comparison-of-REV-REVC}.

\subsection{Limitations}

REVC focuses on single-via paths. A complete search for locally optimal
routes should not limit the set of considered paths. However, considering
v-paths can be justified by assuming that travellers may drive via
an intermediate destination. Furthermore, the focus on v-paths excludes
zig-zag routes, which may be deemed unrealistic. Therefore, a criterion
limiting the set of admissible paths may not only be a computational
necessity but also beneficial in route choice models.

Nonetheless, REVC may be extendable to include paths via two intermediate
destinations. Road networks usually have a small set $W$ of vertices
so that every sufficiently long shortest path includes at least one
of these vertices \citep{abraham_highway_2010}. If $W$ could be
identified efficiently, REVC could be applied to compute v-paths from
the origins to the vertices in $W$ and from the vertices in $W$
to the destinations. Concatenating these v-paths to admissible ``double-via''
paths would be comparable to the admissibility checks described in
this paper.

REVC seeks to identify all admissible paths between the given origins
and destinations. However, even if we do not apply approximations
(i.e. choose $\gamma=\delta=1$), some admissible paths may be falsely
rejected. This limitation is due to the preprocessing step, in which
shortcut edges are added to the graph, and the requirement that an
edge adjacent to the via vertex must be scanned in forward and backward
direction. However, we have already noted that the effect of the shortcut
edges can be arbitrarily reduced by imposing length constraints on
shortcut edges. Furthermore, most admissible paths will satisfy the
mentioned edge requirement (see Appendix \ref{sec:APX-Admissible-paths-excluded-by-using-double-scanned-edges}).
Therefore, these limitations generally have minor effects on the results.

REVC, as introduced in this paper, identifies identical paths based
on their lengths. Alternative approaches exist but might be less efficient.
In practice, distinct paths may have identical lengths, and REVC may
therefore falsely reject some admissible paths. Paths with equal lengths
occur most frequently in cities whose roads form a grid structure.
Nevertheless, since the roads may have distinct speed limits and traffic
volumes, and because turns take additional time, paths with identical
lengths may not occur frequently in practice. Since ties are even
less likely in long paths, we argue that it is reasonable to distinguish
paths based on their lengths.

Misclassifications of distinct paths with equal lengths can be reduced
by adding small random perturbations to the lengths of all edges.
Though this procedure makes it unlikely that admissible paths with
similar lengths are considered identical, the perturbation term randomly
defines an optimal path in grid networks. Therefore, the random perturbation
is of limited help in these networks. Note, however, that regardless
of how we identify identical paths, REVC and similar shortest-path-based 
methods are not well suited to work in grid networks, as ties
must be broken when the shortest path trees are grown.

An important feature of REVC is to reject u-turn paths by considering
via edges instead of via vertices. In undirected graphs, this procedure
also ensures that the returned routes do not contain cycles. However,
in directed graphs it is possible that locally optimal paths include
a cycle, and REVC may return such paths. Though it is unlikely that
locally optimal routes with cycles occur in realistic road networks,
it is possible to check paths for cycles before returning them. Confirming
that no vertices appear twice in a path can be done efficiently. 

From a modelling perspective, it may be desirable to restrict admissible
paths not only by excluding paths with cycles but to impose a more
general requirement instead. \citet{abraham_alternative_2013} suggest
to apply the relative length bound not only to entire paths but also
to their subsections (see Appendix \ref{sec:APX-Comparison-of-REV-REVC}).
This would exclude paths with subsections for which much better shortcuts
exist. However, testing this constraint involves relatively high
computational complexity, and there may also be situations in which
the additional requirement may not be of benefit in models. In any
event, the route sets returned by REVC can serve as a starting point
before further restrictions are applied.

In this paper, we presented performance measurements to assess the
efficiency of REVC and applied optimization procedures. When evaluating
these results, it is important to note the limitations of our implementation.
For example, our parallel implementation comes with scheduling overheads.
Some parts of the algorithm were not parallelized at all, leaving
room for further speedups. Furthermore, the slowdown factors we measured
can be considered as upper bounds, since we compared a highly optimized
shortest path search with a high-level implementation of REVC. Despite
these limitations, the most important timing result remains visible:
the performance of REVC scales well with the numbers of routes and
end points.

We conducted empirical tests showing that locally optimal paths can
be used to discern which roads are used by travellers of interest.
Though this is a strong indicator that local optimality criteria can
be successfully applied in route choice models, our test does not
provide final proof. On the one hand, we have surveyed traveller behaviour
at a small set of locations only and are thus unable to know how these
travellers behaved elsewhere. On the other hand, even if we knew that
travellers use only roads that are part of locally optimal routes,
we would not know how the travellers combine these roads. In addition
to the conceptual arguments and empirical results presented in this
paper, a more thorough analysis of empirical tracking data (see e.g.
\citealp{bekhor_evaluation_2006}) would be worthwhile. This will
remain a task for future research.

\section{Conclusion}

Generating route choice sets with locally optimal single-via paths
has a sound mechanistic justification, leads to small choice sets
with reasonable alternatives, and requires minimal data. We presented
an algorithm that efficiently generates such choice sets for large
numbers of origin-destination pairs. The algorithm is able to identify
(almost) all locally optimal single-via paths up to a specified length
between the origins and destinations. Therefore, the algorithm extends
earlier methods based on local optimality and makes the approach a
valuable method to generate route choice sets.

We confirmed that predictions made based on the algorithm's results
matched empirical traffic observations. Furthermore, we assessed the
algorithm's performance dependent on the input parameters. The results
provide insights into the effect of approximation parameters and the
distribution of locally optimal paths in real road networks. Therefore,
our study provides the necessary prerequisites to construct route
choice sets based on local optimality in large-scale traffic simulation
applications.

\section*{Acknowledgements}

The author would like to give thanks to Mark A. Lewis and his research
group at the University of Alberta for helpful feedback and discussions.
Furthermore, the author would like to thank Martina Beck and the staff
of the BC Invasive Mussel Defence Program for collecting and providing
the empirical data used in this study.

\section*{Additional information}
\begin{description}
\item [{Funding:}] This work was supported by the Canadian Aquatic Invasive Species Network 
and the Natural Sciences and Engineering Research Council of Canada.
\item [{Competing~interest:}] The author declares no competing interest.
\end{description}

\bibliographystyle{mee}

\newpage{}

\appendix

\part*{Appendix}

\setcounter{lem}{0}
\setcounter{cor}{0}
\setcounter{figure}{0}
\renewcommand{\theequation}{A\arabic{equation}}
\renewcommand{\thefigure}{A\arabic{figure}}

\section{Proofs\label{sec:APX-Proofs}}

In this Appendix, we prove Lemma \ref{lem:treeBoundLemma} and Corollary
\ref{cor:treeBound} (main text). We adjust the statement of Lemma
\ref{lem:treeBoundLemma} to recall notation from the main text.
\begin{lem}
Consider an arbitrary admissible single-via path $P$ from $s$ to
$t$. With $x_{s}'=\argmino{x\in P;\,{{\dd[P]sx}}\geq\alpha\l P}\dd[P]sx$,
let
\begin{eqnarray}
x_{s} & := & \begin{cases}
x_{s}' & \text{if }\dd[P]s{x_{s}'}\leq\frac{1}{2}\l P\\
\argmaxo{x\in P;\,{{\dd[P]sx}}\leq\frac{1}{2}\l P}\dd[P]sx & \text{else.}
\end{cases}\label{eq:x_t-def}
\end{eqnarray}
Choose $x_{t}$ accordingly. Then there is at least one vertex $v\in P$
with \label{lem:treeBoundLemma-1} 
\begin{enumerate}
\item $\dd[P]sv=\dd sv\leq\dd[P]s{x_{t}}$ \label{enu:treeBoundLemma1-1}
and
\item $\dd[P]vt=\dd vt\leq\dd[P]{x_{s}}t$.\label{enu:treeBoundLemma2-1}
\end{enumerate}
\end{lem}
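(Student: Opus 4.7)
Let $u$ denote the via vertex defining $P$, so that $P = P_{sut}$ and both $P^{su} = P_{su}$ and $P^{ut} = P_{ut}$ are shortest paths by the definition of a single-via path. My plan is to take $v := u$ when $u$ lies on the segment of $P$ between $x_s$ and $x_t$, and otherwise to fall back on $v := x_s$ or $v := x_t$, whichever side $u$ spills over.

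I will first record two easy structural facts. The case split in the definitions of $x_s$ and $x_t$ immediately gives $\dd[P]{s}{x_s} \le \l{P}/2$ and $\dd[P]{x_t}{t} \le \l{P}/2$: in the ``if'' branch the midpoint bound is the hypothesis of the branch, and in the ``else'' branch it is the constraint inside the $\argmax$. Consequently $x_s$ precedes $x_t$ along $P$, the segment $P^{x_sx_t}$ is well-defined, and the inequalities $d(s,v) \le \dd[P]{s}{x_t}$ and $d(v,t) \le \dd[P]{x_s}{t}$ required by the lemma hold for \emph{any} vertex $v$ lying on this segment.

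The next step is to show that $P^{sx_s}$ (and, symmetrically, $P^{x_tt}$) is itself a shortest path. I will do this by checking that $P^{sx_s}$ is $T$-significant with $T := \alpha\l{P}$ and then invoking $\alpha$-relative local optimality of $P$. In the ``if'' branch of the definition ($x_s = x_s'$), the predecessor of $x_s$ along $P$ has distance strictly less than $\alpha\l{P}$ from $s$ by the minimality in the definition of $x_s'$; hence $P^{sx_s}$ with its endpoints removed has length strictly less than $T$. In the ``else'' branch, the key observation is that no vertex of $P$ has distance from $s$ in the half-open window $[\alpha\l{P},\dd[P]{s}{x_s'})$ (otherwise $x_s'$ would not be minimal); combined with $\dd[P]{s}{x_s} \le \l{P}/2 < \dd[P]{s}{x_s'}$ this forces $\dd[P]{s}{x_s} < \alpha\l{P}$, again yielding $T$-significance.

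Finally I will finish with a case split on the position of $u$ along $P$. If $u$ lies on the $x_s$--$x_t$ segment, I set $v := u$, so that $P^{sv} = P^{su}$ and $P^{vt} = P^{ut}$ are shortest paths by the single-via property. If $u$ strictly precedes $x_s$, I set $v := x_s$; then $P^{sv} = P^{sx_s}$ is a shortest path by the preceding step, and since $x_s$ now lies on $P^{ut}$, the path $P^{vt} = P^{x_st}$ is a subpath of the shortest path $P^{ut}$ ending at $t$ and hence itself a shortest path. The case where $u$ strictly follows $x_t$ is symmetric with $v := x_t$. In every case the required distance inequalities follow from the first paragraph. The main technical hurdle is the $T$-significance argument in the ``else'' branch for $x_s$: the midpoint bound $\dd[P]{s}{x_s} \le \l{P}/2$ alone does not imply $\dd[P]{s}{x_s} < \alpha\l{P}$ when $\alpha \le 1/2$, and it is precisely here that the minimality of $x_s'$ must be brought in to exclude intermediate vertices.
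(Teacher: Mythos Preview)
Your proposal is correct and follows essentially the same approach as the paper's proof: pick the via vertex when it lies between $x_s$ and $x_t$, and otherwise fall back on $x_s$ or $x_t$, using the subpath-of-a-shortest-path argument for one side and $\alpha$-relative local optimality for the other. Your treatment is in fact more careful than the paper's, which simply asserts ``by construction'' that $P^{x_t t}$ is $T$-significant, whereas you spell out the minimality argument for $x_s'$ that is genuinely needed in the ``else'' branch.
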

\begin{proof}
Since $P$ is a single-via path, $P$ contains at least one vertex
$v'$ such that $\dd[P]s{v'}=\dd s{v'}$ and $\dd[P]{v'}t=\dd{v'}t$.
That is, $v'$ splits $P$ into two shortest paths. Now choose a vertex
$v$ as follows:
\begin{eqnarray}
v & := & \begin{cases}
v' & \text{if }\dd[P]s{v'}\leq\dd[P]s{x_{t}}\text{ and }\dd[P]{v'}t\leq\dd[P]{x_{s}}t,\\
x_{t} & \text{if }\dd[P]s{v'}>\dd[P]s{x_{t}},\\
x_{s} & \text{if }\dd[P]{v'}t>\dd[P]{x_{s}}t.
\end{cases}\label{eq:treeBoundLemmaV}
\end{eqnarray}

We show that $v$ satisfies the lemma's requirements by regarding
the different possible choices of $v$:
\begin{enumerate}
\item If $\dd[P]s{v'}\leq\dd[P]s{x_{t}}$ and $\dd[P]{v'}t\leq\dd[P]{x_{s}}t$,
then the conditions \ref{enu:treeBoundLemma1-1} and \ref{enu:treeBoundLemma2-1}
are clearly satisfied for $v:=v'$.
\item If $\dd[P]s{v'}>\dd[P]s{x_{t}}$, then inserting $v:=x_{t}$ yields
$\dd[P]s{v'}>\dd[P]sv$. Therefore, the subpath $P^{sv}$ from $s$
to $v$ is a subpath of the subpath $P^{sv'}$ from $s$ to $v'$.
Since $v'$ splits $P$ into two shortest paths, $P^{sv'}$ is a shortest
path. Therefore, $P^{sv}$ must be a shortest path, too. Thus, $\dd[P]sv=\dd sv=\dd[P]s{x_{t}}$,
and condition \ref{enu:treeBoundLemma1-1} is satisfied. \\
To show that condition \ref{enu:treeBoundLemma2-1} holds as well,
observe that $\dd[P]vt=\dd[P]{x_{t}}t\leq\frac{1}{2}\l P\leq\l P-\dd[P]s{x_{s}}=\dd[P]{x_{s}}t$.
It remains to be shown that $\dd[P]vt=\dd vt$. Since $P$ is $\alpha$-relative
locally optimal, each subpath whose length after removal of one end
point would be smaller than $\alpha\l P$ is a shortest path. By construction,
this applies to the subpath from $x_{t}$ to $t$. Hence, it is $\dd[P]vt=\dd vt$
and condition \ref{enu:treeBoundLemma2-1} is satisfied.\label{enu:treeBoundLemmaArg-1}
\item The proof for the case $\dd[P]{v'}t>\dd[P]{x_{s}}t$ is analogous
to the argument presented under point \ref{enu:treeBoundLemma2-1}.
\end{enumerate}
\end{proof}
\begin{cor}
\label{cor:treeBound-1}For each admissible v-path between an origin-destination
pair $\left(s,t\right)$, a via vertex will be scanned from both directions
if the shortest path trees are grown up to a height of 
\begin{eqnarray}
h_{\text{max}} & := & \max\left\{ \left(1-\alpha\right)\beta\l{P_{st}},\,\frac{1}{2}\beta\l{P_{st}}\right\} .
\end{eqnarray}
\end{cor}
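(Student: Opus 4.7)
The plan is to apply Lemma~\ref{lem:treeBoundLemma-1} to the admissible v-path $P=P_{svt}$ and translate its two distance inequalities into the desired $h_{\text{max}}$ bounds. The lemma produces a vertex $v\in P$ with $\dd{s}{v}\leq\dd[P]{s}{x_{t}}$ and $\dd{v}{t}\leq\dd[P]{x_{s}}{t}$, so it suffices to show $\dd[P]{s}{x_{t}}\leq h_{\text{max}}$ and $\dd[P]{x_{s}}{t}\leq h_{\text{max}}$. Via the identities $\dd[P]{s}{x_{t}}=\l{P}-\dd[P]{x_{t}}{t}$ and $\dd[P]{x_{s}}{t}=\l{P}-\dd[P]{s}{x_{s}}$, the task reduces to lower-bounding $\dd[P]{x_{t}}{t}$ and $\dd[P]{s}{x_{s}}$ as a fraction of $\l{P}$, after which the admissibility constraint $\l{P}\leq\beta\l{P_{st}}$ delivers the $h_{\text{max}}$ bound.

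I would then carry out the two-branch case split encoded in the definition of $x_{s}$ (and symmetrically of $x_{t}$). In the first branch, $x_{s}=x_{s}'$ yields $\alpha\l{P}\leq\dd[P]{s}{x_{s}}\leq\tfrac{1}{2}\l{P}$, so $\dd[P]{x_{s}}{t}\leq(1-\alpha)\l{P}\leq(1-\alpha)\beta\l{P_{st}}\leq h_{\text{max}}$. The symmetric argument for $x_{t}$ gives $\dd[P]{s}{x_{t}}\leq h_{\text{max}}$ in its first branch. Combined with the lemma, this yields $\dd{s}{v},\dd{v}{t}\leq h_{\text{max}}$, so $v$ is scanned by both the forward tree from $s$ and the backward tree into $t$ once each tree is grown to height $h_{\text{max}}$, which is precisely the claim of the corollary.

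The hard part will be the second branches of the definitions of $x_{s}$ and $x_{t}$, where the first vertex beyond $\alpha\l{P}$ from the respective endpoint already lies past the midpoint of $P$. In that regime $\dd[P]{s}{x_{s}}$ can drop well below $\alpha\l{P}$, so $\dd[P]{x_{s}}{t}$ need not satisfy the clean $(1-\alpha)\l{P}$ bound used in the first branch. I plan to handle this via the structural observation that in this regime $x_{s}$ and $x_{t}$ must lie adjacent on $P$ as the two endpoints of the single long edge that straddles the midpoint; the three cases in the construction of $v$ inside the lemma's proof then leave enough flexibility to pick $v$ on the ``short'' side of this edge, so that one of $\dd{s}{v}$, $\dd{v}{t}$ falls below $\tfrac{1}{2}\l{P}$ and the complementary distance is covered by the $\tfrac{1}{2}\beta\l{P_{st}}$ term inside the maximum defining $h_{\text{max}}$ once the admissibility constraint is applied.
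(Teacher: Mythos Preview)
Your first two paragraphs match the paper's approach: apply the lemma to get a via vertex $v$ with $\dd sv\leq\dd[P]s{x_t}$ and $\dd vt\leq\dd[P]{x_s}t$, then bound the right-hand sides by $h_{\text{max}}$ using the definitions of $x_s$ and $x_t$. In the first branch, where $x_s=x_s'$, the inequality $\dd[P]s{x_s}\geq\alpha\l P$ gives $\dd[P]{x_s}t\leq(1-\alpha)\l P\leq h_{\text{max}}$, and symmetrically for $x_t$; this is exactly the paper's argument. The paper in fact disposes of the second branch just as briefly, asserting that there $x_t$ is ``the last vertex with $\dd[P]sx\leq\tfrac{1}{2}\l P$'' and hence $\dd[P]s{x_t}\leq\tfrac{1}{2}\beta\l{P_{st}}\leq h_{\text{max}}$; so the paper does not treat the second branch as harder, whereas you are right to flag it as the delicate case.

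Your proposed fix for the second branch, however, does not close the gap. First, the structural claim that $x_s$ and $x_t$ are adjacent endpoints of a single midpoint-straddling edge holds only when \emph{both} are in their second branches; you give no argument for the mixed cases (say $x_s$ in its second branch but $x_t$ in its first), and there $\dd[P]{x_s}t$ can already exceed $h_{\text{max}}$ while $x_s$ and $x_t$ are not adjacent. Second, even in the pure second-branch case your final step fails: picking $v$ on the ``short side'' of the long edge makes one of $\dd sv$, $\dd vt$ fall below $\tfrac{1}{2}\l P$, but the other then equals $\l P$ minus that small quantity and can be arbitrarily close to $\l P$; admissibility gives only $\l P\leq\beta\l{P_{st}}$, which does not bound this complementary distance by $\tfrac{1}{2}\beta\l{P_{st}}$. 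Concretely, take $\alpha=0.2$, $\l P=\beta\l{P_{st}}$, and $P$ the path $s\to a\to b\to t$ with $a$, $b$ at positions $0.1\,\l P$ and $0.9\,\l P$: then $x_s=a$, $x_t=b$, these are the only via vertices on $P$, and each has one of its two endpoint distances equal to $0.9\,\l P>h_{\text{max}}=0.8\,\l P$, so neither your long-edge argument nor the lemma's $v$ produces a vertex within both tree heights.
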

\begin{proof}
Let $P$ be an admissible path, which implies that $\l P\leq\beta\l{P_{st}}$.
Recall that 
\begin{eqnarray}
x_{t}' & = & \argmino{x\in P;\,{{\dd[P]xt}}\geq\alpha\l P}\dd[P]xt\nonumber \\
 & = & \argmino{x\in P;\,\l P-{{\dd[P]sx}}\geq\alpha\l P}\left(\l P-\dd[P]sx\right)\nonumber \\
 & = & \argmaxo{x\in P;\,{{\dd[P]sx}}\leq\left(1-\alpha\right)\l P}\dd[P]sx.
\end{eqnarray}
Therefore, $x_{t}$ is either the last vertex in $P$ with $\dd[P]sx\leq\left(1-\alpha\right)\l P\leq\left(1-\alpha\right)\beta\l{P_{st}}$
or the last vertex with $\dd[P]sx\leq\frac{1}{2}\l P\leq\frac{1}{2}\beta\l{P_{st}}$
(see equation (\ref{eq:x_t-def})). Either way, $x_{t}$ will be included
in the shortest path tree if we grow the tree to a height of just
above $\max\left\{ \left(1-\alpha\right)\beta\l{P_{st}},\,\frac{1}{2}\beta\l{P_{st}}\right\} $.
The same argument holds in backward direction for $x_{s}$. From Lemma
\ref{lem:treeBoundLemma-1} we know that $P$ is a v-path via a vertex
$v\in P^{x_{s}x_{t}}$ located between $x_{s}$ and $x_{t}$. Since
both $x_{s}$ and $x_{t}$ are scanned from both sides, the vertex
$v$ will be scanned from both sides as well.
\end{proof}

\section{Admissible paths excluded by requiring that a neighbouring edge of
the via vertex has been scanned from both directions\label{sec:APX-Admissible-paths-excluded-by-using-double-scanned-edges}}

Requiring that a neighbouring edge of the via vertex has been scanned
in both directions excludes u-turns without reducing the number of
found admissible paths significantly. However, there is exactly one
scenario in which an admissible v-path is not found if we impose this
constraint. The situation is depicted in figure \ref{fig:via-vertices-via-edges-bad}.

Suppose the v-path $P$ from $s$ to $t$ via the vertex $v$ is admissible
but falsely rejected by the exact version of REVC ($\gamma=\delta=1$).
Suppose furthermore that $u\in P$ is the predecessor of $v$ and
$w\in P$ the successor. Then there must be a vertex $x\in P^{su}$
and a vertex $y\in P^{wt}$ such that the following conditions hold:
\begin{enumerate}
\item The shortest path from $x$ to $w$ does not include $v$: $\dd xv+\dd vw>\dd xw$.
\item The shortest path from $u$ to $y$ does not include $v$: $\dd uv+\dd vy>\dd uy$.
\item Let $x'$ be the direct successor of $x$ in $P$. It must be $\dd{x'}v>\alpha\cdot\l P$.
\item Let $y'$ be the direct predecessor of $y$ in $P$. It must be $\dd v{y'}>\alpha\cdot\l P$.
\item The shortest path from $u$ to $w$ must include $v$: $\dd uw=\dd uv+\dd vw$.
\end{enumerate}
If the first two conditions were not satisfied, at least one edge
on $P$ adjacent to $v$ would be scanned from both directions and
$P$ would be found. If the last three conditions were not satisfied,
$P$ would not be admissible.

Though it is possible that all of these conditions are satisfied,
we believe that such a scenario is unlikely in real road networks. 

\begin{figure}
\begin{centering}
\includegraphics[scale=0.6]{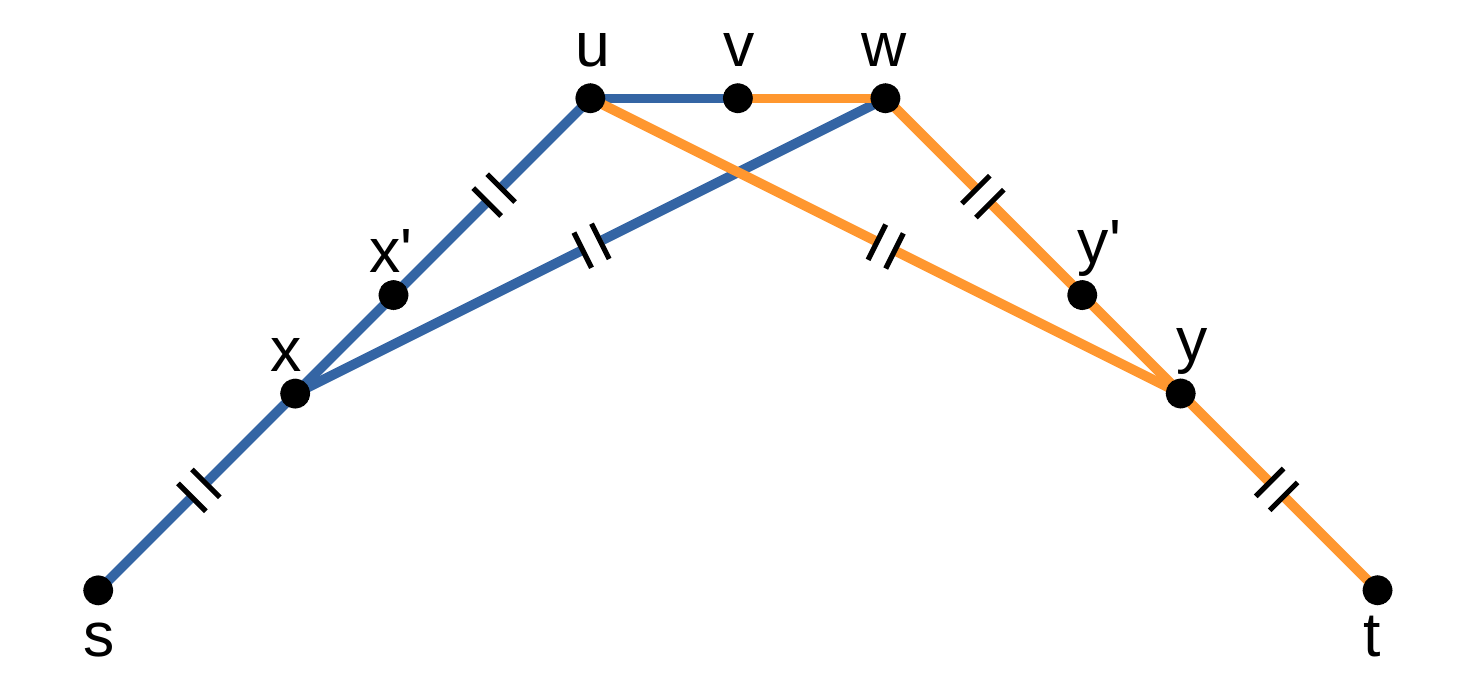}
\par\end{centering}
\caption[Scenario in which an admissible path is excluded due to the requirement
that an edge adjacent to the via vertex is scanned in both directions]{Scenario in which an admissible path is excluded due to the requirement
that an edge adjacent to the via vertex is scanned in both directions.
Blue lines depict the edges included in the forward shortest path
tree grown from the origin $s$ and orange lines the edges of the
backward tree grown into the destination $t$. Lines that may represent
multiple edges are indicated with a gap. As the edges adjacent to
$v$ are included in one shortest path tree only, the path $P_{svt}$
would be rejected by REVC. \label{fig:via-vertices-via-edges-bad}}
\end{figure}

\begin{rem}
It can be shown that pruning does not weaken these conditions.
\end{rem}

\section{Comparison of REV and REVC\label{sec:APX-Comparison-of-REV-REVC}}

In this Appendix, we compare our algorithm REVC to the algorithm REV
\citep{abraham_alternative_2013} that it is based on. To a large
extent, REVC uses the same ideas as REV: shortest path trees are grown
around the origin and destination, and v-paths via vertices scanned
from both directions are checked for admissibility using an approximate
test for local optimality. However, REV and REVC differ in (1) the
admissibility definition (2) the choice of the returned paths, and
(3) technical optimizations that REVC introduces. Below we discuss
each of these points.

\subsection{Admissibility definition\label{subsec:Admissibility-definition}}

The admissibility definition by \citet{abraham_alternative_2013}
includes three requirements. They say a v-path $P_{svt}$ is admissible
if
\begin{enumerate}
\item $P_{svt}$ has limited overlap with previously identified admissible
paths $P_{swt}$ between $s$ and $t$. That is, $\l{P_{svt}\cap\left(\underset{w}{\cup}P_{swt}\right)}\leq\eta\cdot\l{P_{st}}$.
\item $P_{svt}$ is $T$-locally optimal with $T=\alpha\cdot\l{P_{st}}$.
\item $P_{svt}$ has $\beta$-uniformly bounded stretch. That is, for all
$u,w\in P_{svt}$, it is $\l{P_{svt}^{uw}}\leq\beta\cdot\l{P_{uw}}$.
\end{enumerate}
None of these requirements coincides exactly with the constraints
we imposed in our paper.

Requirement 1 does not appear in our admissibility definition. The
constraint requires that the admissible paths have a clearly specified
order. However, though \citet{abraham_alternative_2013} suggest a
reasonable ordering, this introduces another degree of freedom whose
impact on the results may be oblique. Furthermore, we were interested
in identifying \emph{all} routes that satisfy certain criteria and
leave it to the second modelling stage, in which a route is chosen
from the choice set, to take route overlaps into account \citep[see e.g.][]{cascetta_modified_1996}.
Lastly, the local optimality criterion naturally limits the pair-wise
overlap of paths. Therefore, we dropped this constraint.

Requirement 2 differs from our local optimality constraint, because
the length $T$ of the subsections required to be optimal depends
on the shortest distance between $s$ and $t$ rather than the length
of the via path. This allows for more admissible paths. We changed
this requirement for two reasons: (1) the spatial scale at which travellers'
decision routines change is likely dependent on the path they \emph{actually}
choose rather than the shortest alternative, which may -- dependent
on the global quality metric -- not even be a favourable option.
Travellers on a long trip may have a higher incentive to choose a
route with long optimal subsections. (2) The adjusted local optimality
criterion allows for more effective pruning with simpler bounds when
considering many origin-destination pairs. Using a pair-wise static
local optimality criterion as \citet{abraham_alternative_2013} would
require us to choose the pruning bound dependent on the origin-destination
pair closest together. For these reasons, we introduced the notion
of relative local optimality. Note that REVC can also be used to identify
all paths satisfying requirement 2 if the constant $\alpha$ is adjusted
accordingly and the resulting paths are filtered so that suboptimal
paths are excluded.

Requirement 3 is relaxed in our admissibility definition. \citet{abraham_alternative_2013}
do not introduce an efficient algorithm to identify paths satisfying
requirement 3. Instead of bounding the lengths of all subpaths, they
consider the complete path only, as we do in this paper. Nonetheless,
uniformly bounded stretch is a valuable characteristic for choice
set elements. However, since REVC will return a moderate number of
paths in many applications, paths could be checked for uniformly bounded
stretch after execution of REVC. Consequently, we have used the relaxed
constraint directly.

\subsection{Returned paths\label{subsec:Returned-paths}}

\citet{abraham_alternative_2013} aim to compute a small number of
high-quality paths between an origin and a destination efficiently.
To save computation time, they do not assess the admissibility of
all path candidates. Instead, REV processes the potentially admissible
paths in an order dependent on some objective function, estimating
the quality of the paths. REV returns the first $n$ processed approximately
admissible paths.

Since we are interested in an exhaustive search for admissible paths,
we do not process the paths in a specific order. We return all approximately
admissible paths and leave the assessment of their quality, if desired,
to a second, independent algorithm.

\subsection{Optimizations}

REVC introduces multiple optimization to REV. First, REVC uses a tighter
bound for the tree growth and the pruning stage. Though our pruning
bound would have to be adjusted to comply with the admissibility definition
applied by \citet{abraham_alternative_2013} (see section \ref{subsec:Admissibility-definition}),
the ideas introduced in this paper are still applicable.

Second, REVC excludes u-turns by considering via edges rather than
via vertices. Furthermore, REVC identifies vertices representing identical
paths before assessing their admissibility. Both optimizations could
be directly applied to speed up REV. However, REV processes the paths
in an order given by some objective function (see section \ref{subsec:Returned-paths}).
It is possible to construct this objective function so that u-turn
paths are not processed before the admissible paths.

Third, to control the accuracy of the results, REVC uses the $\T[\delta]$-test
instead of the $\T$-test to check whether a path is locally optimal.
This optimization could also be applied in REV, though it may effect
the performance of REV more strongly than the performance of REVC.

Lastly, REVC is optimized to process many origin-destination pairs
at once. Though the idea to grow each shortest path three only once
per origin and destination is straightforward, the main innovation
of REVC is in the efficient local optimality checks of many v-paths
via one via vertex.

\end{document}